\newcommand{\unity}{\mathbbmss{1}}
\newcommand{\N}{\ensuremath{\mathbb N}{}}
\newcommand{\R}{\ensuremath{\mathbb R}}
\newcommand{\C}{\ensuremath{\mathbb C}}
\newcommand{\ad}{\operatorname{ad}}
\newcommand{\abs}[1]{\ensuremath{\vert #1 \vert}}
\newcommand{\normone}[1]{\ensuremath{\Vert #1 \Vert{}}_1}
\newcommand{\normtwo}[1]{\ensuremath{\Vert #1 \Vert{}}_2}
\newcommand{\centre}{\mathfrak{center}}
\newcommand{\comm}{\mathfrak{com}}
\newcommand{\kernel}{\mathrm{ker}}
\newcommand{\uu}{\mathfrak{u}}
\newcommand{\su}{\mathfrak{su}}
\newcommand{\SU}{\operatorname{SU}}
\newcommand{\SL}{\operatorname{SL}}
\newcommand{\so}{\mathfrak{so}}
\newcommand{\usp}{\mathfrak{sp}}
\newcommand{\gl}{\mathfrak{gl}}
\newcommand{\fc}{\mathfrak{c}}
\newcommand{\fe}{\mathfrak{e}}
\newcommand{\ff}{\mathfrak{f}}
\newcommand{\fg}{\mathfrak{g}}
\newcommand{\fh}{\mathfrak{h}}
\newcommand{\ft}{\mathfrak{t}}
\newcommand{\fk}{\mathfrak{k}}
\newcommand{\fs}{\mathfrak{s}}
\newcommand{\Alt}{\mathrm{Alt}}
\newcommand{\Sym}{\mathrm{Sym}}
\newtheorem{theorem}{Theorem}
\newtheorem{lemma}[theorem]{Lemma}
\newtheorem{corollary}[theorem]{Corollary}
\newtheorem{proposition}[theorem]{Proposition}
\newenvironment{customthm}[1]
  {\innercustomthm}
  {\endinnercustomthm}
\theoremstyle{remark}
\newtheorem{definition}[theorem]{Definition}
\begin{document}

\title{On squares of representations of compact Lie algebras} 



\author{Robert Zeier}
\email{robert.zeier@ch.tum.de}
\affiliation{Department Chemie, Technische Universit{\"a}t M{\"u}nchen,\\
Lichtenbergstrasse 4, 85747 Garching, Germany}
\author{Zolt{\'a}n Zimbor{\'a}s}
\email{zimboras@gmail.com}
\affiliation{Department of Computer Science, University College London,\\
Gower St, London WC1E 6BT, UK}

\date{24 August 2015}

\begin{abstract}
We study how tensor products of representations decompose when restricted from
a compact Lie algebra to one of its subalgebras. In particular, we are interested in
tensor squares which are tensor products of a representation with itself. We  
show in a classification-free manner that the sum of multiplicities and the sum of squares of 
multiplicities in the corresponding decomposition of a tensor square into irreducible representations 
has to strictly grow when restricted from a 
compact semisimple Lie algebra to a proper subalgebra. 
For this purpose, relevant details on tensor products of representations 
are compiled from the literature.
Since the sum of squares of multiplicities is equal to the dimension of 
the commutant of the tensor-square representation, it can be determined
by linear-algebra computations in a 
scenario where an
\emph{a priori} unknown Lie algebra is given by a set of generators which might not be a linear basis.
Hence,  our results offer a test to decide
if a subalgebra of a compact semisimple Lie algebra is a proper one
without calculating the relevant Lie closures,
which can be naturally applied in the field of controlled quantum systems.
\end{abstract}



\maketitle 


\section{Introduction}
The work of Dynkin~\cite{Dynkin57,Dynkin57b,Dynkin2000}  is a treasure trove of useful information
on representations of Lie algebras. In particular, Ref.~\onlinecite{Dynkin57} 
enumerates  all representations whose alternating square is irreducible (see Table~\ref{alt_square} below).
This classification triggered in Ref.~\onlinecite{ZS11} a study of tensor squares $\phi\otimes\phi$
which are defined for a representation $\phi$ of a Lie algebra $\fg$ as  the representation $(\phi \otimes \phi)(g):=
\phi(g) \otimes \unity_{\dim(\phi)} + \unity_{\dim(\phi)} \otimes \phi(g)$ with $g \in \fg$.
The tensor square $\phi\otimes\phi$ of the standard (i.e., defining) representation of
the Lie algebra $\su(\ell{+}1)$ corresponding to the special unitary group has the property that 
the dimension of its commutant $\comm[\phi\otimes\phi]$
has to grow when restricted to a proper subalgebra $\fh$, i.e.,
$\dim(\comm[(\phi\otimes\phi)|_{\fh}])>\dim(\comm[\phi\otimes\phi])=2$.
Here, $\comm[\psi]$  denotes the commutant of a representation $\psi$ of a Lie algebra
$\fg$ and consists of all complex matrices commuting with all $\psi(g)$ for $g\in\fg$. 
This discussion can be summarized as follows.

\begin{customthm}{A}[see Thm.~21 of Ref.~\onlinecite{ZS11}]\label{ThmA}
Given a subalgebra $\fh$ of $\su(\ell{+}1)$ with $\ell\geq 1$
and the standard representation $\phi$ of $\su(\ell{+}1)$, then
$\fh=\su(\ell{+}1)$ iff  $\dim(\comm[(\phi\otimes\phi)|_{\fh}])=2$.
\end{customthm}

The power of Theorem~\ref{ThmA} arises from the fact that its condition can be tested using only
a set of generators for the Lie algebra $\fh$, as those generators
are sufficient to compute the commutant $\comm[(\phi\otimes\phi)|_{\fh}]$ of the tensor square.
This led to control-theoretic applications in Ref.~\onlinecite{ZS11} 
where a controlled Schr{\"o}dinger equation provides certain initial
directions (i.e., Lie-algebra generators) in which a quantum system can be steered. 
In general, these directions do not linearly span but only generate
the \emph{a priori} unknown Lie algebra of all achievable directions, e.g., 
the three-dimensional, infinitesimal rotations around the $x$- and $y$-axes generate 
an additional infinitesimal rotation around the $z$-axis via the Lie commutator.
In this context,
one wants to decide effectively if 
the full Lie algebra $\su(\ell{+}1)$ is generated  without using the standard technique of computing the 
cumbersome Lie closure for a given set of generators. With the help Theorem~\ref{ThmA}, the question 
if a subalgebra of $\su(\ell{+}1)$ is a proper one can now be completely reduced to linear-algebra computations.
The proof given in Ref.~\onlinecite{ZS11} borrows heavily from the classification
of alternating squares in 
Dynkin's work\cite{Dynkin57} and treats all cases individually. 
The motivation of the current work
is to better understand the basic principles on which Theorem~\ref{ThmA} relies.
In doing so, we identify the following generalization of Theorem~\ref{ThmA} where $\su(\ell{+}1)$ is
substituted by an arbitrary compact, semisimple Lie algebra $\fg$ and the standard representation
of $\su(\ell{+}1)$ is replaced with an arbitrary finite-dimensional, faithful representation of $\fg$.

\begin{customthm}{B} \label{ThmB}
Given a subalgebra $\fh$ of a compact semisimple Lie algebra $\fg$
and a finite-dimensional, faithful representation $\phi$ of $\fg$,
then $\fh = \fg$ iff $\dim(\comm[(\phi \otimes \phi)|_{\fh}]) = \dim(\comm[\phi \otimes \phi])$.
\end{customthm}

Moreover, the proof of Theorem~\ref{ThmB} does not rely on information from classifications
and highlights general properties of restricted representations for compact, semisimple Lie
algebras and beyond.
Limitations on potential generalizations to arbitrary compact Lie algebras will be discussed 
in Section~\ref{discussion}.
Theorems \ref{ThmA} and \ref{ThmB} 
can be naturally transferred to connected, compact semisimple Lie groups, as
their representations induce always a semisimple representation of the corresponding
compact semisimple Lie algebra. 
Let us note, however, that the above theorems are not trivial consequences
of  the representation theory of general compact groups; in particular, they do not hold 
for finite groups. We provide a counter-example for non-connected compact groups
(recall that finite groups are formally non-connected, zero-dimensional Lie groups):
Consider 
the set $\mathcal{M}=
\left\{ \pm \left( \begin{smallmatrix} 1 & 0 \\ 0 & 1 \end{smallmatrix} \right),  
\pm \left( \begin{smallmatrix} i & 0 \\ 0 & -i \end{smallmatrix} \right),  
\pm \left( \begin{smallmatrix} 0 & -1 \\ 1 & 0 \end{smallmatrix} \right),  
\pm \left( \begin{smallmatrix} 0 & i \\ i & 0 \end{smallmatrix} \right), 
\tfrac{1}{2} \left( \pm \left( \begin{smallmatrix} 1 & 0 \\ 0 & 1 \end{smallmatrix} \right)
\pm \left( \begin{smallmatrix} i & 0 \\ 0 & -i \end{smallmatrix} \right)
\pm \left( \begin{smallmatrix} 0 & -1 \\ 1 & 0 \end{smallmatrix} \right)
\pm \left( \begin{smallmatrix} 0 & i \\ i & 0 \end{smallmatrix} \right) \right)
\right\}
$ of $24$ matrices. One can easily check that the matrices of $\mathcal{M}$ form a group $H$.
As all generators in $\mathcal{M}$ are contained in $\SU(2)$ (given in its standard representation), 
$H$ is a proper subgroup of $\SU(2)$. 
Moreover, $H$  is given here in a particular unitary representation,
is a double cover 
of the tetrahedral group, and is isomorphic to the special linear group $\SL(2,\mathbb{F}_3)$ of
$2\times 2$-matrices with entries from the finite field $\mathbb{F}_3$ and with determinant one. 
Denoting the standard representation of $\SU(2)$ by $\phi$, the described representation 
of $H$ will be naturally identified as $\phi|_{H}$. 
Let us consider the tensor squares $(\phi \otimes \phi)|_{H}$  and 
$\phi \otimes \phi$ which are defined 
as $(\phi \otimes \psi)(X):=
\phi(X) \otimes \psi(X)$ for elements $X$ of a group $G$. One obtains that $\dim(\comm[(\phi\otimes\phi)|_{H}])=2= \dim(\comm[(\phi\otimes\phi)])$ and shows that 
Theorems \ref{ThmA}  and \ref{ThmB} cannot be generalized to general compact groups which might not be connected.
Let us also note that finite subgroups $H$ of $\SU(d)$ with the property
that $\dim(\comm[(\phi\otimes\phi)|_{H}])= \dim(\comm[(\phi\otimes\phi)])$
are known as \emph{group designs} (which are particular types of unitary 2-designs) and
have also been studied in the context of quantum information theory.\cite{GAE07,RS09}

It is not unusual that limiting compact groups to connected ones
leads to significant modifications from a representation-theoretic point of view.
For example, it is well-known that there exist non-isomorphic compact groups 
with isomorphic representation rings.\cite{Z06b} On the other hand, two 
\emph{connected} compact groups can only have isomorphic representation rings if the corresponding groups 
are isomorphic.\cite{McMullen84,Handelman93, KLV14} In the case of a connected semisimple (complex)
Lie group, it is even enough to determine the so-called \emph{dimension datum} 
of a finite-dimensional, faithful 
representation~$\phi$ in order to fix its Lie algebra.\cite{LP90,Larsen04,AYY13}
The dimension datum corresponds roughly to knowing all dimensions for
representations occurring in any tensor power of $\phi$.
In this context, it  is surprising that the conditions in Theorems~\ref{ThmA} and \ref{ThmB}
rely only on the tensor square (but admittedly for a weaker conclusion).
Finally, Coquereaux and Zuber\cite{CZ11} proved
properties for the sum of multiplicities in the decomposition of a tensor product of two irreducible
representations of a simple Lie algebra
(see Appendix~\ref{CZ}), which also cannot be generalized to the non-connected group 
case.

We will assume that the reader has some familiarity with Lie algebras and representations,
which are both considered to be finite-dimensional throughout this work. All Lie algebras are defined
over the real or complex field, and all representations are matrix representations with
complex matrix entries.
We will use the words irreducible and simple 
(as well as completely reducible and semisimple)
as interchangeable names for properties of representations; irreducibility is always considered
with respect to the complex numbers.
For better accessibility, important facts and notations are recalled in Appendix~\ref{app_pre}. 
Our presentation will focus on compact Lie algebras, 
although the parallel language of complex reductive Lie algebras would 
be also suitable to state our results (cf.\ Appendix~\ref{app_pre});
and we will switch between them when necessary
without further comment. 

The article is organized as follows: We start in Section~\ref{Alt_Sym_Ten} by summarizing
a classification of representations with irreducible alternating and symmetric  squares;
the corresponding details are given in Appendices~\ref{Dynkin_techniques}
and \ref{case-by-case}.
The main classification-free results leading to Theorem~\ref{ThmB} are presented
in Section~\ref{classification-free}. We close by discussing generalizations
to general compact Lie algebras as well as lower bounds on the gap
between the dimensions of commutants of tensor square representations. 
Parts of the discussion are relegated to appendices
in order to streamline the presentation.

\section{Alternating, Symmetric, and Tensor Squares\label{Alt_Sym_Ten}}
In this section, we summarize results classifying representations
whose alternating and symmetric tensor squares are simple (i.e., irreducible). Streamlined proofs of these 
classifications which apply techniques developed by Dynkin (see Appendix~\ref{Dynkin_techniques} and 
Ref.~\onlinecite{Dynkin57})
are relegated to
Appendix~\ref{case-by-case}. 
The classification results allow us to prove techniques for distinguishing $\so(k)$, $\usp(\ell)$, or $\su(\ell{+}1)$
from its subalgebras
(cf.\ Refs.~\onlinecite{ZS11,ZZK14}), but we also provide simplified, classification-independent proofs
for $\so(k)$ and $\usp(\ell)$. By detailing the arguments for the cases
$\so(k)$, $\usp(\ell)$, or $\su(\ell{+}1)$, we also provide prototypes for
the general, classification-free proofs in Section~\ref{classification-free}.---We start with the classification
of simple alternating squares.
\begin{theorem}[Dynkin]\label{thm_alt_square}
Let $\phi$ denote a faithful representation of a compact semisimple Lie algebra
$\fg$ such that the alternating square $\Alt^2\phi$ is simple.
All possible cases (up to outer automorphisms of $\fg$) are given in Table~\ref{alt_square}.
\end{theorem}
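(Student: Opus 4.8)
The plan is to reduce the statement to a single irreducible representation of a \emph{simple} Lie algebra, to use the weight structure of $\Alt^2\phi$ to force the highest weight onto a single node of the Dynkin diagram, and then to close the classification by a finite dimension comparison carried out per simple type.

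First I would eliminate the reducible and non\nb simple cases. If $\phi=\bigoplus_j\phi_j$ has two or more nonzero components, then
\[
\Alt^2\phi=\Big(\bigoplus_j\Alt^2\phi_j\Big)\oplus\Big(\bigoplus_{j<k}\phi_j\otimes\phi_k\Big),
\]
and since the semisimple algebra $\fg$ has no nontrivial one-dimensional representation, faithfulness forces some $\phi_j$ to have dimension $\geq 2$; then $\Alt^2\phi_j\neq 0$ while each cross term $\phi_j\otimes\phi_k$ is also nonzero, so $\Alt^2\phi$ splits and cannot be simple. Hence $\phi$ is irreducible. Writing $\fg=\fg_1\oplus\cdots\oplus\fg_m$ as a sum of simple ideals, an irreducible $\phi$ is an external tensor product $\phi=\phi_1\otimes\cdots\otimes\phi_m$, and for two factors
\[
\Alt^2(\phi_1\otimes\phi_2)=(\Alt^2\phi_1\otimes\Sym^2\phi_2)\oplus(\Sym^2\phi_1\otimes\Alt^2\phi_2);
\]
faithfulness makes each $\phi_i$ nontrivial of dimension $\geq 2$, so both summands are nonzero. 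Thus $\fg$ must be simple.

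Now let $\fg$ be simple and write $\phi=V_\lambda$ for the irreducible representation of dominant highest weight $\lambda$. The decisive observation is that for every simple root $\alpha_i$ lying in the support of $\lambda$ (i.e.\ with $\langle\lambda,\alpha_i^\vee\rangle\geq 1$) the weight $2\lambda-\alpha_i$ occurs in $\Alt^2 V_\lambda$ and is a \emph{maximal} weight there: the only weight of $V_\lambda\otimes V_\lambda$ strictly above it is $2\lambda$, which sits entirely in $\Sym^2 V_\lambda$. These weights $2\lambda-\alpha_i$ are pairwise incomparable because distinct simple roots are, so each seeds a distinct irreducible constituent. Consequently $\Alt^2 V_\lambda$ can be simple only if $\lambda$ is supported on a single node, i.e.\ $\lambda=c\,\omega_i$ for a fundamental weight $\omega_i$ and an integer $c\geq 1$. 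In that case $\Alt^2 V_\lambda$ has the unique highest weight $\mu=2c\,\omega_i-\alpha_i$, and simplicity is equivalent to the single dimension equation
\[
\dim V_{2c\omega_i-\alpha_i}=\binom{\dim V_{c\omega_i}}{2}.
\]

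It remains to solve this equation. First I would bound $c$: by the Weyl dimension formula $\dim V_{c\omega_i}$ grows like $c^{N_i}$, where $N_i$ is the number of positive roots not orthogonal to $\omega_i$, whereas the right-hand side grows like $c^{2N_i}$ while the top constituent $\dim V_{2c\omega_i-\alpha_i}$ still grows only like $c^{N_i}$; since $N_i\geq 1$, the equation can hold only for finitely many $c$, leaving finitely many candidates per node. Each candidate is then tested by the Weyl formula, with the additivity of the Dynkin index as a rigidifying auxiliary constraint---if $\Alt^2 V_\lambda=V_\mu$ then necessarily $I_{V_\mu}=(\dim V_\lambda-2)\,I_{V_\lambda}$, which cheaply excludes most near-misses. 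Running this through $A_\ell,B_\ell,C_\ell,D_\ell,G_2,F_4,E_6,E_7,E_8$ in turn leaves exactly the representations of Table~\ref{alt_square}, up to the outer automorphisms that identify, e.g., $\omega_1$ with $\omega_\ell$ in type $A$ or the two half-spin weights in type $D$. The main obstacle is the completeness and uniformity of this final case-by-case pass: one must be sure that no node or small value of $c$ is overlooked and that the unwieldy Weyl-formula comparisons for the exceptional algebras are handled correctly; the node reduction together with the growth and index bounds are the levers that keep the surviving list finite and the verifications manageable (these are the techniques recalled in Appendix~\ref{Dynkin_techniques} and executed in Appendix~\ref{case-by-case}).
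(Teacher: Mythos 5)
Your structural reduction is correct and takes a genuinely different route from the paper's. The first step (irreducibility of $\phi$ and simplicity of $\fg$) coincides with the paper's Theorem~\ref{simple}, but from there you diverge: your observation that for every simple root $\alpha_i$ in the support of $\lambda$ the weight $2\lambda-\alpha_i$ is a \emph{maximal} weight of $\Alt^2 V_\lambda$, and that distinct simple roots give pairwise incomparable maximal weights, forces $\lambda=c\,\omega_i$ uniformly for all types; the equivalence of simplicity with $\dim V_{2c\omega_i-\alpha_i}=\binom{\dim V_{c\omega_i}}{2}$ is then also correct, since $2c\omega_i-\alpha_i$ is the unique maximal weight of $\Alt^2 V_{c\omega_i}$. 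The paper proceeds quite differently: it splits according to duality type. Self-dual representations are dispatched by Dynkin's embedding argument (Theorem~\ref{alt_self}): $\phi(\fg)$ lies in $\so(\dim\phi)$ or $\usp(\dim\phi/2)$, and the alternating square of the relevant standard representation either already splits (symplectic case) or equals the adjoint representation, which becomes reducible on every proper subalgebra (Lemma~\ref{adjoint}). Only the non-self-dual case, confined by Malcev's criterion (Proposition~\ref{Malcev}) to $\su(\ell+1)$, $\so(4k+2)$, and $\fe_6$, is treated by diagram combinatorics (Lemma~\ref{simply_laced_alt} and Lemma~\ref{lemma_selfdual_alt}). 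Your weight argument avoids both the duality dichotomy and Malcev's classification, which is a genuine simplification of the structural half of the proof.

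The gap is in the termination of your final case analysis. Your growth bound (both $\dim V_{c\omega_i}$ and $\dim V_{2c\omega_i-\alpha_i}$ grow like $c^{N_i}$, whereas $\binom{\dim V_{c\omega_i}}{2}$ grows like $c^{2N_i}$) bounds $c$ only for a \emph{fixed} pair $(\fg,\omega_i)$. For the classical series the rank $\ell$ and the node position $i$ are unbounded, so after your reduction infinitely many pairs remain --- all $(\su(\ell+1),\omega_i)$ with $3\le i\le\ell-2$, all middle nodes of $\so(2\ell)$, etc. --- each requiring its own Weyl-formula comparison; ``running through $A_\ell,B_\ell,C_\ell,D_\ell,\dots$ in turn'' is therefore not a finite procedure as stated. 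What is missing is a mechanism monotone in rank and node position, and that is precisely what the paper's Dynkin machinery supplies: by subordination and the method of parts (Propositions~\ref{prop_subord} and~\ref{prop_part}), the failure of simplicity for a low-rank seed --- e.g.\ $(0,0,1,0,0)$ for $\su(6)$, or a component $\geq 2$ on a next-to-end node --- propagates to every higher weight and every larger diagram containing it, which is what makes the surviving list finite. To close your proof you must either import this parts/subordination step as an actual step of the argument (your parenthetical reference to Appendix~\ref{Dynkin_techniques} is not a substitute), or replace it by dimension inequalities proved uniformly in $\ell$ and $i$. The Dynkin-index test you mention ($I_{V_\mu}=(\dim V_\lambda-2)I_{V_\lambda}$ being necessary for $\Alt^2V_\lambda=V_\mu$) is a legitimate filter for individual candidates, but it does nothing to make the candidate set finite.
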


{
\fontsize{10}{12}
\renewcommand{\baselinestretch}{1}
\selectfont
\begin{table}[tb]
\caption{\label{alt_square}Irreducible representations whose respective alternating square is also irreducible (Dynkin)}
\begin{ruledtabular}
\begin{tabular}[t]{l@{\hspace{4pt}}l@{\hspace{4pt}}l@{\hspace{4pt}}l@{\hspace{1pt}}l@{\hspace{4pt}}l@{\hspace{1pt}}l}
case & $\fg$ & $\ell$ & $\phi$  & dim$(\phi)$ & $\Alt^2 \phi$ & dim$(\Alt^2 \phi)$\\[1mm] \hline\\[-3.5mm]
(1a) & $\so(2\ell+1)$ & $\ell > 2$ &
$(1,0,\ldots,0)$ & $2\ell{+}1$ & $(0,1,0,\ldots,0)$  & $(2\ell{+}1)\ell$\\ 
(1b) & $\so(5)$ & -- & $(1,0)$ & $5$ & $(0,2)$ & $10$\\ 
(2a) & $\so(2\ell)$ & $\ell > 3$ &
$(1,0,\ldots,0)$ & $2\ell$ & $(0,1,0,\ldots,0)$ & $(2\ell{-}1)\ell$\\ 
(2b) & $\so(6)$ & -- & $(1,0,0)$ & $6$ & $(0,1,1)$ & $15$\\
(3) & $\su(\ell+1)$ & $\ell\geq 3$ &
$(0,1,0,\ldots,0)$ & $\tfrac{\ell(\ell+1)}{2}$ & $(1,0,1,0,\ldots,0)$ & $3 \tbinom{\ell+2}{4}$\\ 
(4a) & $\su(\ell+1)$ & $\ell>1$ &
$(2,0,\ldots,0)$ & $\tfrac{(\ell+1)(\ell+2)}{2}$ & $(2,1,0,\ldots,0)$ & $3 \tbinom{\ell+3}{4}$\\ 
(4b) & $\su(2)$ & -- &
$(2)$ & $3$ & $(2)$ & $3$\\ 
(5) & $\so(10)$ & -- &
$(0,0,0,1,0)$ & $16$ & $(0,0,1,0,0)$ & $120$\\ 
(6) & $\fe_6$ & -- &
$(1,0,0,0,0,0)$ & $27$ & $(0,0,1,0,0,0)$ & $351$\\ 
(7a) & $\su(\ell+1)$ & $\ell>1$ &
$(1,0,\ldots,0)$ & $\ell{+}1$ & $(0,1,0,\ldots,0)$ & $\tfrac{\ell(\ell+1)}{2}$\\ 
(7b) & $\su(2)$ & -- &
$(1)$ & $2$ & $(0)$ & $1$\\ 
\end{tabular}
\end{ruledtabular}
\end{table}
}

A streamlined proof of Theorem~\ref{thm_alt_square} is given in Appendix~\ref{AltSq}. This result implies the following theorem for distinguishing $\so(k)$ with $k \geq 5$ from its
subalgebras
(see also Thm.~15 in Ref.~\onlinecite{ZZK14}), but we also provide now a simplified proof relying on ideas from the proof of
Theorem~\ref{alt_self} in Appendix~\ref{case-by-case}.

\begin{theorem}\label{thm_so_square}
Given a subalgebra $\fh$ of $\so(k)$ with $k\geq 5$
and the standard representation $\phi:=\phi_{(1,0,\ldots,0)}$ of $\so(k)$,
 the following statements are equivalent:\\
(a) $\fh=\so(k)$.\\
(b) The representation $(\Alt^2\phi)|_{\fh}$ is simple.\\
(c) The representation $(\Alt^2\phi)|_{\fh}$ is simple and $(\Sym^2\phi)|_{\fh}$ splits into two simple
components. No simple component occurs more than once.\\
(d) The representation $(\phi\otimes\phi)|_{\fh}$ splits into three different simple components.\\
(e) The vector space of all complex matrices commuting with $(\phi\otimes\phi)|_{\fh}$ has dimension
three.
\end{theorem}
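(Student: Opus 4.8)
The plan is to prove the cyclic chain (a)$\Rightarrow$(c)$\Rightarrow$(d)$\Rightarrow$(b)$\Rightarrow$(a), together with the equivalence (d)$\Leftrightarrow$(e), so that all five statements collapse to (a). Throughout I would use that $\fh$, being a subalgebra of the compact algebra $\so(k)$, is itself compact, so every restricted representation is completely reducible. Two structural facts come for free. First, the splitting $\phi\otimes\phi=\Sym^2\phi\oplus\Alt^2\phi$ into symmetric and alternating parts is $\so(k)$-invariant and hence survives restriction, so $(\phi\otimes\phi)|_{\fh}=(\Sym^2\phi)|_{\fh}\oplus(\Alt^2\phi)|_{\fh}$. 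Second, the invariant bilinear form spans a copy of the trivial representation inside $\Sym^2\phi$ that is preserved by every subalgebra, so $(\Sym^2\phi)|_{\fh}$ always contains the trivial representation as a proper subrepresentation (proper since $\dim(\Sym^2\phi)=k(k+1)/2>1$). The equivalence (d)$\Leftrightarrow$(e) is then immediate and holds for any representation: writing $m_i$ for the multiplicities of the distinct simple constituents, $\dim(\comm[(\phi\otimes\phi)|_{\fh}])=\sum_i m_i^2$, and this equals $3$ precisely when there are exactly three constituents each of multiplicity one, since any $m_i\geq 2$ already contributes $4$.

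For (a)$\Rightarrow$(c) I would compute the decomposition for $\fh=\so(k)$ directly, without appealing to the full classification: for $k\geq 5$ the algebra $\so(k)$ is simple, and under the standard identification $\Lambda^2\C^k\cong\so(k)_{\C}$ the alternating square $\Alt^2\phi$ is the adjoint representation, hence simple; the symmetric square splits as the trivial representation plus the irreducible traceless symmetric tensors $\phi_{(2,0,\ldots,0)}$. Comparing highest weights (equivalently, dimensions) shows these three constituents are pairwise non-isomorphic, which is exactly (c). The implication (c)$\Rightarrow$(d) is then formal: (c) presents $(\Alt^2\phi)|_{\fh}$ as one simple piece and $(\Sym^2\phi)|_{\fh}$ as two simple pieces, all distinct, i.e.\ three distinct simple constituents in $(\phi\otimes\phi)|_{\fh}$.

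For (d)$\Rightarrow$(b) I would use that three pairwise non-isomorphic multiplicity-one constituents force every subrepresentation to be a sub-sum of them. Since $(\Sym^2\phi)|_{\fh}$ and $(\Alt^2\phi)|_{\fh}$ are complementary nonzero subrepresentations, they partition the three constituents; because $(\Sym^2\phi)|_{\fh}$ strictly contains the trivial representation it cannot be simple, so it must absorb two of the three, leaving $(\Alt^2\phi)|_{\fh}$ simple. The decisive step is (b)$\Rightarrow$(a), and here lies the one genuinely new idea: under the identification $(\Alt^2\phi)|_{\fh}\cong\ad|_{\fh}$ acting on $\so(k)_{\C}$, the complexified subalgebra $\fh_{\C}$ is an $\fh$-invariant subspace, because $[\fh,\fh]\subseteq\fh$. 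If $(\Alt^2\phi)|_{\fh}$ is simple, its only invariant subspaces are $0$ and all of $\so(k)_{\C}$; a proper subalgebra yields a nonzero proper $\fh_{\C}$ (and $\fh=0$ fails simplicity outright, as $\dim\Lambda^2\C^k\geq 10$), so simplicity forces $\fh_{\C}=\so(k)_{\C}$, i.e.\ $\fh=\so(k)$. This closes the cycle.

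The main obstacle is conceptual rather than computational: recognizing that the adjoint action makes $\fh$ itself an invariant subspace of $(\Alt^2\phi)|_{\fh}$ is precisely what replaces the case-by-case appeal to Dynkin's Table~\ref{alt_square}. The remaining care is in (a)$\Rightarrow$(c), where one must confirm that the traceless symmetric tensors stay irreducible and that the three constituents remain pairwise distinct for the low-rank algebras $\so(5)$ and $\so(6)$; a quick dimension count ($1$, $14$, $10$ for $\so(5)$ and $1$, $20$, $15$ for $\so(6)$) settles this.
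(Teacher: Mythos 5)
Your proof is correct and follows essentially the same route as the paper: identify $\Alt^2\phi$ with the adjoint representation, use its simplicity for the simple algebra $\so(k)$ ($k\geq 5$), use the decomposition $\Sym^2\phi=\phi_{(2,0,\ldots,0)}\oplus\phi_{(0,\ldots,0)}$, and reduce (d)$\Leftrightarrow$(e) to the fact that the commutant dimension is the sum of squared multiplicities. The step you single out as the genuinely new idea --- that $\fh$ (complexified) is itself an invariant subspace of the restricted adjoint action, so simplicity forces $\fh=\so(k)$ --- is exactly the content and proof of the paper's Lemma~\ref{adjoint}(i), so your argument matches the paper's rather than replacing it.
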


\begin{proof}
Note that $\Alt^2\phi$ is equivalent to the adjoint representation
and is  simple (see Lemma~\ref{adjoint} of Appendix~\ref{Dynkin_techniques}).
In particular, $\Alt^2\phi=\phi_{(0,2)}$ for $\so(5)$, $\Alt^2\phi=\phi_{(0,1,1)}$ for $\so(6)$, and
$\Alt^2\phi=\phi_{(0,1,0,\ldots,0)}$ for $k\geq 7$.
We have $\Sym^2\phi=\phi_{(2,0,\ldots,0)}\oplus\phi_{(0,\ldots,0)}$ (see Ex.~19.21 of Ref.~\onlinecite{FH91}).
It follows that both (b) and (c) are a consequence of (a). Obviously, (b) follows from (c).
The adjoint representation $\Alt^2\phi$ is no longer
simple when restricted to a proper subalgebra (see Lemma~\ref{adjoint} of
Appendix~\ref{Dynkin_techniques}) and (b) implies (a).
The statements (c) and (d) are equivalent as $(\Sym^2\phi)|_{\fh}$ splits into at least two components.
The equivalence of (d) and (e) follows from Lemma~\ref{reprtheory} of Appendix~\ref{app_pre}.
\end{proof}

Note that the proof relies critically on the irreducibility of the adjoint representation of $\so(k)$ for $k\geq 5$.---We present now the classification
of simple symmetric squares.

\begin{theorem}\label{thm_sym_square}
Let $\phi$ denote a faithful representation of a compact semisimple Lie algebra
$\fg$ such that the symmetric square $\Sym^2\phi$ is simple.
All possible cases (up to outer automorphisms of $\fg$) are given in Table~\ref{sym_square}.
\end{theorem}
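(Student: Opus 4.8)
The plan is to follow the template of Theorem~\ref{thm_alt_square}: reduce to an irreducible representation of a simple Lie algebra, distil a single numerical criterion for simplicity of $\Sym^2\phi$, and then solve it type by type. First I would use complete reducibility. If $\phi=\bigoplus_k\phi_k$ had two nonzero summands, then $\Sym^2\phi$ would contain the distinct nonzero summands $\Sym^2\phi_1$ and $\phi_1\otimes\phi_2$ and could not be simple, so $\phi$ is irreducible. Decomposing $\fg=\fg_1\oplus\dots\oplus\fg_m$ into simple ideals, an irreducible faithful $\phi$ is an external tensor product $\phi=\phi_1\otimes\dots\otimes\phi_m$ (one factor per ideal) with every $\phi_i$ nontrivial, hence $\dim\phi_i\geq2$. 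The Cauchy-type identity $\Sym^2(\phi_1\otimes\psi)=(\Sym^2\phi_1\otimes\Sym^2\psi)\oplus(\Alt^2\phi_1\otimes\Alt^2\psi)$, with $\psi=\phi_2\otimes\dots\otimes\phi_m$, then exhibits two nonzero summands as soon as $m\geq2$ (since $\dim\phi_i\geq2$ forces $\Alt^2\phi_i\neq0$). Thus $\fg$ must be simple and $\phi$ irreducible, exactly the reduction used for the alternating square.

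Next I would fix the simplicity criterion. For irreducible $\phi$ of highest weight $\Lambda$ the weight $2\Lambda$ occurs in $\Sym^2\phi$ with multiplicity one, so $V_{2\Lambda}$ is a multiplicity-one constituent of $\Sym^2\phi$, a standard fact I would borrow from the Dynkin-technique toolbox of Appendix~\ref{Dynkin_techniques}. Hence
\[
\dim\Sym^2\phi=\tfrac12 d(d+1)\geq\dim V_{2\Lambda},\qquad d:=\dim\phi,
\]
with equality exactly when $\Sym^2\phi=V_{2\Lambda}$, i.e.\ when $\Sym^2\phi$ is simple. The theorem therefore reduces to classifying all simple $\fg$ and dominant weights $\Lambda$ with $\dim V_{2\Lambda}=\tfrac12 d(d+1)$.

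To convert this into a solvable form I would use the second-order Dynkin index $\mathrm{ind}(\cdot)$, which is additive over direct sums, satisfies $\mathrm{ind}(V_\mu)=\tfrac{\dim V_\mu}{\dim\fg}\langle\mu,\mu+2\rho\rangle$, and obeys $\mathrm{ind}(\Sym^2\phi)=(d+2)\,\mathrm{ind}(\phi)$. Since every constituent of $\Sym^2\phi$ other than $V_{2\Lambda}$ has highest weight $\mu$ with strictly smaller Casimir value $\langle\mu,\mu+2\rho\rangle<\langle2\Lambda,2\Lambda+2\rho\rangle$, additivity shows that $\Sym^2\phi$ is simple if and only if its index equals $\tfrac12 d(d+1)\,\tfrac{\langle2\Lambda,2\Lambda+2\rho\rangle}{\dim\fg}$, the value forced when the whole module sits at the top Casimir level. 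Equating this with $(d+2)\,\mathrm{ind}(\phi)=(d+2)\tfrac{d}{\dim\fg}\langle\Lambda,\Lambda+2\rho\rangle$ and simplifying via $\langle2\Lambda,2\Lambda+2\rho\rangle=2\langle\Lambda,\Lambda+2\rho\rangle+2\langle\Lambda,\Lambda\rangle$ collapses everything to the single clean equation
\[
\dim V_\Lambda=\frac{2\langle\Lambda,\rho\rangle}{\langle\Lambda,\Lambda\rangle}.
\]
This equation is thus equivalent to simplicity of $\Sym^2\phi$; note that it automatically excludes the orthogonal (real self-dual) representations, whose symmetric square always carries the invariant bilinear form.

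The principal obstacle is the final, genuinely case-by-case step of solving this equation over all simple types---the four classical families $A_\ell, B_\ell, C_\ell, D_\ell$ and the five exceptional algebras---while certifying completeness. Inserting the Weyl dimension formula for $\dim V_\Lambda$ and the explicit inner products $\langle\Lambda,\rho\rangle$ and $\langle\Lambda,\Lambda\rangle$ in fundamental-weight coordinates turns it into a Diophantine condition; because the right-hand side stays small while $\dim V_\Lambda$ grows rapidly, I would first bound the support and coefficients of $\Lambda$, reducing each classical family to finitely many weight shapes plus a few rank-indexed series, and then dispatch the exceptional algebras and the low-rank coincidences ($\so(3)\cong\su(2)$, $\so(5)\cong\usp(4)$, $\so(6)\cong\su(4)$, and the triality of $D_4$) separately. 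The surviving solutions, taken up to the outer automorphisms identifying $\Lambda$ with its images, are each confirmed simple by a direct check $\dim V_{2\Lambda}=\tfrac12 d(d+1)$, yielding precisely the entries of Table~\ref{sym_square}. Keeping careful track of self-duality type and of these automorphism identifications, so that no candidate is double-counted or spuriously retained, is the main bookkeeping hazard, exactly as in the alternating-square analysis behind Theorem~\ref{thm_alt_square}.
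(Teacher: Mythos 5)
Your two worked-out steps are correct, and they take a genuinely different route from the paper. The initial reduction (irreducible $\phi$, simple $\fg$) coincides with the paper's Theorem~\ref{simple}, which rests on Lemmas~\ref{plus} and \ref{outertensor} exactly as in your Cauchy-identity argument. Your index criterion is new relative to the paper and the algebra does check out: combining $\mathrm{ind}(\Sym^2\phi)=(d+2)\,\mathrm{ind}(\phi)$, the strict monotonicity of $\langle\mu,\mu+2\rho\rangle$ on dominant weights $\mu\leq 2\Lambda$, and the multiplicity-one occurrence of $V_{2\Lambda}$ in $\phi\otimes\phi$ does collapse simplicity of $\Sym^2\phi$ to the single equation $\dim V_\Lambda=2\langle\Lambda,\rho\rangle/\langle\Lambda,\Lambda\rangle$; this holds for the standard representations of $\usp(\ell)$ and $\su(\ell{+}1)$ and fails, e.g., for that of $\so(k)$, as it must. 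The paper proceeds quite differently: it first settles \emph{all} self-dual $\phi$ at once (Theorem~\ref{sym_self}) by embedding $\phi(\fg)$ into $\so(d)$ or $\usp(d/2)$, using the known splitting of the symmetric square of the orthogonal standard representation, and the fact that $\Sym^2$ of the symplectic standard representation is the adjoint representation, which decomposes on every proper subalgebra (Lemma~\ref{adjoint}); then for non-self-dual $\phi$ it invokes Malcev's classification (Proposition~\ref{Malcev}) to restrict to $\su(\ell{+}1)$, $\so(4k{+}2)$, $\fe_6$, and kills all but the standard representation of $\su(\ell{+}1)$ with Dynkin's subordination and method of parts (Propositions~\ref{prop_subord} and \ref{prop_part}, Lemmas~\ref{simply_laced_sym} and \ref{lemma_selfdual_sym}). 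Your route buys one uniform equation valid for every type; the paper's buys an essentially structure-theoretic disposal of the entire self-dual half and a completeness certificate with only a handful of explicit computations.

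The genuine gap is that you never solve your equation, and for a classification theorem the enumeration \emph{is} the theorem. The step ``bound the support and coefficients of $\Lambda$, reducing each classical family to finitely many weight shapes plus a few rank-indexed series'' is precisely where all the work lies: to certify that Table~\ref{sym_square} is complete you need, for each of $A_\ell$, $B_\ell$, $C_\ell$, $D_\ell$ with the rank as a free parameter and for the five exceptional algebras, an explicit argument that the defect $\dim V_\Lambda-2\langle\Lambda,\rho\rangle/\langle\Lambda,\Lambda\rangle$ is nonzero outside your finitely many candidates, together with the treatment of the rank-indexed series and the identifications under diagram automorphisms. That monotonicity-and-growth bookkeeping is exactly what the paper's subordination/parts machinery supplies (non-simplicity of a square propagates from a subordinate weight, or from a part, to any dominating weight), and nothing in your write-up substitutes for it. As it stands, the proposal establishes a correct and elegant equivalent criterion, but not the classification the statement asserts.
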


{
\fontsize{10}{12}
\renewcommand{\baselinestretch}{1}
\selectfont
\begin{table}[tb]
\caption{\label{sym_square}Irreducible representations whose respective symmetric square is also irreducible}
\begin{ruledtabular}
\begin{tabular}[t]{l@{\hspace{4pt}}l@{\hspace{4pt}}l@{\hspace{4pt}}l@{\hspace{1pt}}l@{\hspace{4pt}}l@{\hspace{1pt}}l}
case & $\fg$ & $\ell$ & $\phi$  & dim$(\phi)$ & $\Sym^2 \phi$ & dim$(\Sym^2 \phi)$\\[1mm] \hline\\[-3.5mm]
(1) & $\usp(\ell)$ & $\ell \geq 1$ &
$(1,0,\ldots,0)$ & $2\ell$ & $(2,0,\ldots,0)$  & $(2\ell{+}1)\ell$\\ 
(2) & $\su(\ell+1)$ & $\ell\geq 1$ &
$(1,0,\ldots,0)$ & $\ell{+}1$ & $(2,0,\ldots,0)$ & $\tfrac{(\ell+1)(\ell+2)}{2}$\\ 
\end{tabular}
\end{ruledtabular}
\end{table}
}

A streamlined proof of Theorem~\ref{thm_sym_square} is given in Appendix~\ref{SymSq}. This result implies the following theorem for distinguishing $\usp(\ell)$ with $\ell \geq 2$ from its subalgebras, 
but we also provide now a simplified proof relying on ideas from the proof of
Theorem~\ref{sym_self} in Appendix~\ref{case-by-case}.

\begin{theorem}\label{thm_usp_square}
Given a subalgebra $\fh$ of $\usp(\ell)$ with $\ell\geq 2$
and the standard representation $\phi:=\phi_{(1,0,\ldots,0)}$ of $\usp(\ell)$,
 the following statements are equivalent:\\
(a) $\fh=\usp(\ell)$.\\
(b) The representation $(\Sym^2\phi)|_{\fh}$ is simple.\\
(c) The representation $(\Sym^2\phi)|_{\fh}$ is simple and $(\Alt^2\phi)|_{\fh}$ splits into two simple
components. No simple component occurs more than once.\\
(d) The representation $(\phi\otimes\phi)|_{\fh}$ splits into three different simple components.\\
(e) The vector space of all complex matrices commuting with $(\phi\otimes\phi)|_{\fh}$ has dimension
three.
\end{theorem}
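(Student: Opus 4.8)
The plan is to mirror the proof of Theorem~\ref{thm_so_square}, interchanging the roles of the symmetric and alternating squares. The pivot of the orthogonal case is that $\Alt^2\phi$ is the adjoint representation of $\so(k)$; for the symplectic standard representation the analogous statement holds with $\Sym^2$ in place of $\Alt^2$, since the symplectic Lie algebra is cut out of $\gl$ by a symmetric rather than an antisymmetric form. Concretely, I would first record that $\Sym^2\phi=\phi_{(2,0,\ldots,0)}$ (Theorem~\ref{thm_sym_square} and Table~\ref{sym_square}) is equivalent to the adjoint representation of $\usp(\ell)$, its highest weight $2\omega_1$ being the highest root of $C_\ell$; it is therefore simple because $\usp(\ell)$ is simple for $\ell\geq 2$. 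Dually, because $\phi$ preserves a symplectic form, the alternating square splits off the trivial representation, $\Alt^2\phi=\phi_{(0,1,0,\ldots,0)}\oplus\phi_{(0,\ldots,0)}$ (cf.\ Ref.~\onlinecite{FH91}), so that $\phi\otimes\phi=\phi_{(2,0,\ldots,0)}\oplus\phi_{(0,1,0,\ldots,0)}\oplus\phi_{(0,\ldots,0)}$ decomposes into three pairwise-inequivalent simple components.

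With these facts in hand, the equivalences follow by the same bookkeeping as in Theorem~\ref{thm_so_square}. If (a) holds, then $(\Sym^2\phi)|_{\fh}=\Sym^2\phi$ is simple, giving (b), while $(\Alt^2\phi)|_{\fh}$ retains its two inequivalent constituents and no component repeats, giving (c); and (c) trivially implies (b). The decisive implication is (b)$\Rightarrow$(a): the adjoint representation of a simple Lie algebra is no longer simple when restricted to a proper subalgebra $\fh$, because the Killing form makes $\fh^\perp\neq 0$ an $\ad_{\fh}$-invariant complement to the adjoint subrepresentation carried by $\fh$; this is exactly Lemma~\ref{adjoint} of Appendix~\ref{Dynkin_techniques}, applied now to $\Sym^2\phi$ rather than to $\Alt^2\phi$. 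For the remaining equivalences, note that $(\Alt^2\phi)|_{\fh}$ always splits into at least two pieces, since the $\fh$-invariant symplectic form forces a trivial summand; hence (d) forces $(\Sym^2\phi)|_{\fh}$ to be a single simple constituent and $(\Alt^2\phi)|_{\fh}$ to contribute exactly two distinct ones, which is (c), while (c) conversely yields the three distinct constituents of (d). Finally, (d)$\Leftrightarrow$(e) is Lemma~\ref{reprtheory} of Appendix~\ref{app_pre}: the commutant dimension equals $\sum_i m_i^2$ over the multiplicities $m_i$, and $\sum_i m_i^2=3$ with integers $m_i\geq 1$ forces exactly three constituents each of multiplicity one.

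The main obstacle is the identification $\Sym^2\phi\cong\ad_{\usp(\ell)}$ together with the restriction behaviour encapsulated in Lemma~\ref{adjoint}; once these are established, every other step is routine representation-theoretic accounting. The one place demanding slight care is (d)$\Leftrightarrow$(c): one must use that the symplectic form survives restriction to any subalgebra of $\usp(\ell)$ to guarantee the trivial summand in $(\Alt^2\phi)|_{\fh}$, which is precisely what prevents $(\Sym^2\phi)|_{\fh}$ from absorbing more than one constituent. I would also remark that the hypothesis $\ell\geq 2$ enters only to ensure that $\usp(\ell)$ is simple, so that its adjoint representation is irreducible.
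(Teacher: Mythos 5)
Your proof is correct and takes essentially the same route as the paper's: identify $\Sym^2\phi=\phi_{(2,0,\ldots,0)}$ with the adjoint representation of $\usp(\ell)$, invoke Lemma~\ref{adjoint} for (b)$\Rightarrow$(a), use the decomposition $\Alt^2\phi=\phi_{(0,1,0,\ldots,0)}\oplus\phi_{(0,\ldots,0)}$ for (a)$\Rightarrow$(c) and (c)$\Leftrightarrow$(d), and apply Lemma~\ref{reprtheory} for (d)$\Leftrightarrow$(e).

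One correction to your closing remark: the hypothesis $\ell\geq 2$ does not enter ``only to ensure that $\usp(\ell)$ is simple,'' since $\usp(1)\cong\su(2)$ is simple as well and its adjoint representation is irreducible. What $\ell\geq 2$ actually guarantees is that $\phi_{(0,1,0,\ldots,0)}$ is a nonzero (and nontrivial) representation, so that $\Alt^2\phi$ genuinely splits into two simple components; for $\ell=1$ the alternating square is just the one-dimensional trivial representation, so statements (c), (d), (e) fail already for $\fh=\fg$, even though the equivalence (a)$\Leftrightarrow$(b) would survive.
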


\begin{proof}
Note that $\Sym^2\phi=\phi_{(2,0,\ldots,0)}$ is equivalent to the adjoint representation
and is always simple (see Lemma~\ref{adjoint} of Appendix~\ref{Dynkin_techniques}).
We have $\Alt^2\phi=\phi_{(0,1,0,\ldots,0)}\oplus\phi_{(0,\ldots,0)}$ which follows from the discussion on
pp.~259--262 of Ref.~\onlinecite{FH91}
or pp.~206--209 of Ref.~\onlinecite{Bourb08b}.
We obtain that (b) and (c) are a consequence of (a). Obviously, (b) follows from (c).
If $\fh$ is a proper subalgebra of $\usp(\ell)$, the adjoint representation $\Sym^2\phi$ is no longer
simple when restricted to $\fh$ (see Lemma~\ref{adjoint} of Appendix~\ref{Dynkin_techniques}). 
Thus, (b) implies (a). The statements (c) and (d) are equivalent as $(\Alt^2\phi)|_{\fh}$ splits into 
at least two components. The equivalence of (d) and (e) follows from Lemma~\ref{reprtheory} of Appendix~\ref{app_pre}.
\end{proof}

We emphasize that the proof applies the irreducibility of the adjoint representation of 
$\usp(\ell)$ for $\ell\geq 2$.---Combining Theorems~\ref{thm_alt_square} and \ref{thm_sym_square} we obtain a 
second proof
of the classification of representations for which both the alternating square and the symmetric square
are simple (see Ref.~\onlinecite{ZS11}).
\begin{theorem}[see Thm.~54 of Ref.~\onlinecite{ZS11}]\label{su_tensor_square}
Let $\phi$ denote a faithful representation of a compact semisimple Lie algebra
$\fg$ such that both the alternating square $\Alt^2\phi$
and the symmetric square $\Sym^2\phi$ are simple.
Then $\fg=\su(\ell{+}1)$ with $\ell\geq 1$ and $\phi$ is 
(up to outer automorphisms of $\fg$) the standard representation with highest weight
$(1,0,\ldots,0)$.
\end{theorem}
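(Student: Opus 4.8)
The plan is to derive the classification simply by intersecting the two lists already compiled in Theorems~\ref{thm_alt_square} and \ref{thm_sym_square}. Because $\phi$ is faithful and, by hypothesis, both $\Alt^2\phi$ and $\Sym^2\phi$ are simple, both theorems apply at once; hence, up to outer automorphisms of $\fg$, the pair $(\fg,\phi)$ must occur simultaneously in Table~\ref{alt_square} and in Table~\ref{sym_square}. I would therefore take the shorter of the two lists, Table~\ref{sym_square}, and test each of its entries against Table~\ref{alt_square}.

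First I would note that Table~\ref{sym_square} offers only two possibilities for $\phi$: the standard representation $(1,0,\ldots,0)$ of $\usp(\ell)$ with $\ell\geq 1$, and the standard representation $(1,0,\ldots,0)$ of $\su(\ell{+}1)$ with $\ell\geq 1$. It then remains to decide which of these also have a simple alternating square, i.e.\ also appear in Table~\ref{alt_square}. The standard representation of $\su(\ell{+}1)$ does appear there, as case~(7a) for $\ell>1$ and as case~(7b) for $\ell=1$, so for every $\ell\geq 1$ it has both squares simple---precisely the asserted conclusion. On the other hand, no symplectic algebra is listed anywhere in Table~\ref{alt_square}, so for $\ell\geq 2$ the standard representation of $\usp(\ell)$ has a non-simple alternating square and is excluded.

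The only delicate point---and the step I expect to be the main obstacle---is to make sure that the low-rank and exceptional isomorphisms do not let a symplectic entry masquerade as an admissible one and thereby corrupt this reading of the tables. I would control this by a dimension count together with the isomorphisms $\usp(1)\cong\su(2)\cong\so(3)$ and $\usp(2)\cong\so(5)$: the case $\ell=1$ of the symplectic family is literally $\usp(1)\cong\su(2)$ acting by its $2$\nb dimensional standard representation, which is already case~(7b); the $4$\nb dimensional standard representation of $\usp(2)\cong\so(5)$ is the spin representation $(0,1)$ of $\so(5)$, not the $5$\nb dimensional vector representation $(1,0)$ appearing as case~(1b), so it genuinely does not occur in Table~\ref{alt_square}; and for $\ell\geq 3$ the algebra $\usp(\ell)$ (type $C_\ell$) is not isomorphic to any entry of Table~\ref{alt_square} at all. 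Once these coincidences are accounted for, the sole surviving possibility is $\fg=\su(\ell{+}1)$ with $\phi$ the standard representation $(1,0,\ldots,0)$, which completes the proof.
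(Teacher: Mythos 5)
Your proposal is correct and is essentially the paper's own argument: the paper obtains this theorem precisely by combining Theorems~\ref{thm_alt_square} and \ref{thm_sym_square}, i.e.\ intersecting Table~\ref{alt_square} with Table~\ref{sym_square}. Your explicit treatment of the low-rank coincidences ($\usp(1)\cong\su(2)$ giving case~(7b), and $\usp(2)\cong\so(5)$ whose $4$-dimensional standard representation is the spin representation $(0,1)$ rather than the vector representation $(1,0)$ of case~(1b)) is exactly the care needed to make that intersection rigorous.
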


The classification culminates into a convenient necessary and sufficient
condition for deciding if a subalgebra of $\su(\ell{+}1)$ is proper (which also proves Theorem~\ref{ThmA}).
\begin{theorem}[see Thm.~21 of Ref.~\onlinecite{ZS11}]
Given a subalgebra $\fh$ of $\su(\ell{+}1)$ with $\ell\geq 1$
and the standard representation $\phi:=\phi_{(1,0,\ldots,0)}$ of $\su(\ell{+}1)$,
 the following statements are equivalent:\\
(a) $\fh=\su(\ell{+}1)$.\\
(b) The representations
$(\Alt^2\phi)|_{\fh}$ and
$(\Sym^2\phi)|_{\fh}$ are simple.\\
(c) The representations
$(\Alt^2\phi)|_{\fh}$ and
$(\Sym^2\phi)|_{\fh}$ are simple. No simple component occurs more than once.\\
(d) The representation $(\phi\otimes\phi)|_{\fh}$ splits into two different simple components.\\
(e) $\dim(\comm[(\phi\otimes\phi)|_{\fh}])=2$.
\end{theorem}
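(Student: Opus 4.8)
The plan is to follow the same five-step skeleton used for Theorems~\ref{thm_so_square} and \ref{thm_usp_square}, but to replace the appeal to the irreducibility of the adjoint representation by the combined classification in Theorem~\ref{su_tensor_square}. Throughout I will use the decomposition $\phi\otimes\phi=(\Alt^2\phi)\oplus(\Sym^2\phi)$ together with the fact that, for the standard representation of $\su(\ell{+}1)$, one has $\Alt^2\phi=\phi_{(0,1,0,\ldots,0)}$ and $\Sym^2\phi=\phi_{(2,0,\ldots,0)}$ (cases~(7) and (2) of Tables~\ref{alt_square} and \ref{sym_square}), which are two \emph{inequivalent} simple representations for every $\ell\geq 1$. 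This immediately gives that (a) implies both (b) and (c), while (b) follows trivially from (c).

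First I would settle the book-keeping equivalences, which are routine. Since $(\phi\otimes\phi)|_{\fh}=(\Alt^2\phi)|_{\fh}\oplus(\Sym^2\phi)|_{\fh}$ is a sum of two nonzero complementary subrepresentations (note $\dim\Alt^2\phi\geq 1$ and $\dim\Sym^2\phi\geq 3$ for $\ell\geq 1$), the equivalence of (c) and (d) follows from the uniqueness of the isotypic decomposition: two distinct simple summands in $(\phi\otimes\phi)|_{\fh}$ force each of the two complementary pieces to be simple and the two pieces to be inequivalent, and conversely. The equivalence of (d) and (e) is then Lemma~\ref{reprtheory}, since a splitting into two inequivalent simple components gives $\dim(\comm[(\phi\otimes\phi)|_{\fh}])=1^2+1^2=2$, while conversely $\sum_i m_i^2=2$ forces exactly two components, each of multiplicity one.

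The substantive step is (b)~$\Rightarrow$~(a). Here I would first argue that (b) forces $\fh$ to be semisimple. A subalgebra of the compact algebra $\su(\ell{+}1)$ is compact, hence reductive, so $\fh=\fz\oplus[\fh,\fh]$ with $\fz$ its center; it suffices to show $\fz=0$. If $Z\in\fz$, then $Z$ commutes with all of $\fh$, so by Schur's lemma it acts as a scalar $\ri\mu$ on the simple module $(\Sym^2\phi)|_{\fh}$. Writing the eigenvalues of $\phi(Z)$ as $\ri\lambda_1,\ldots,\ri\lambda_{\ell+1}$, the operator $(\Sym^2\phi)(Z)$ has eigenvalues $\ri(\lambda_a+\lambda_b)$ with $a\leq b$; scalarity applied to the diagonal vectors gives $2\lambda_a=\mu$ for all $a$, hence all $\lambda_a$ are equal, and tracelessness of $\phi(Z)$ then forces $\lambda_a=0$, i.e.\ $Z=0$. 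Thus $\fh$ is compact semisimple, and $\phi|_{\fh}$ is faithful because $\phi$ is faithful on $\su(\ell{+}1)\supseteq\fh$.

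With $\fh$ semisimple in hand, I would apply Theorem~\ref{su_tensor_square} to the faithful representation $\phi|_{\fh}$, whose alternating and symmetric squares are simple by hypothesis~(b): this yields $\fh\cong\su(m{+}1)$ with $\phi|_{\fh}$ equivalent to the standard representation. Comparing dimensions of the representation space gives $m{+}1=\ell{+}1$, so $\fh\cong\su(\ell{+}1)$ as abstract Lie algebras and therefore $\dim_{\R}\fh=\dim_{\R}\su(\ell{+}1)$. Since $\fh$ is a subalgebra of $\su(\ell{+}1)$ of full dimension, $\fh=\su(\ell{+}1)$, which is (a). The main obstacle is precisely this (b)~$\Rightarrow$~(a) direction: one must not invoke Theorem~\ref{su_tensor_square} prematurely, since it requires semisimplicity, and the genuinely new ingredient compared with the $\so$ and $\usp$ cases is ruling out a nontrivial center via the simplicity of the symmetric square, followed by the dimension count that upgrades the abstract isomorphism to an equality of subalgebras.
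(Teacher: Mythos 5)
Your proposal is correct and takes essentially the same approach as the paper: the substantive direction (b)~$\Rightarrow$~(a) is in both cases settled by first ruling out a center of $\fh$ via Schur's lemma---the paper deduces simplicity of $\phi|_{\fh}$ from Lemma~\ref{plus}(iii) and uses triviality of the commutant of $\phi(\fh)$, whereas you apply Schur directly to $(\Sym^2\phi)|_{\fh}$ with an explicit eigenvalue computation on $(\Sym^2\phi)(Z)$---and then invoking the classification of Theorem~\ref{su_tensor_square}; the bookkeeping equivalences via $\phi\otimes\phi=\Alt^2\phi\oplus\Sym^2\phi$ and Lemma~\ref{reprtheory} are identical. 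Your closing dimension count (matching $\fh\cong\su(m{+}1)$ against $\dim(\phi)=\ell{+}1$ to force $\fh=\su(\ell{+}1)$) only makes explicit what the paper's proof leaves implicit.
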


\begin{proof}
Statement (b) follows from (a) due to Theorem~\ref{su_tensor_square}.
We apply Lemma~\ref{plus}(iii) of Appendix~\ref{app_pre} to statement (b) and obtain that $\phi|_{\fh}$
is simple. Therefore, the commutant of $\phi(\fh)$ is
trivial, i.e., it is equal to complex multiples of the identity. We conclude that the centralizer 
of  $\phi(\fh)$ in $\phi[\su(\ell{+}1)]$ is zero and the center of $\fh$ is also zero.
Thus, $\fh$ is semisimple and we can use  Theorem~\ref{su_tensor_square}
to prove (a). Obviously, (b) and (c) are equivalent as $(\Alt^2\phi)|_{\fh}$
and $(\Sym^2\phi)|_{\fh}$ differ if both are simple.
The statements (c) and (d) follow from each other
as $(\phi\otimes\phi)|_{\fh}=(\Alt^2\phi)|_{\fh} \oplus (\Sym^2\phi)|_{\fh}$.
The equivalence of (d) and (e) is a consequence of Lemma~\ref{reprtheory} in Appendix~\ref{app_pre}.
\end{proof}

\section{Classification-free Results\label{classification-free}}
Building on the approach of Section~\ref{Alt_Sym_Ten}, we develop now
classification-free methods leading to general results for distinguishing
compact semisimple Lie algebras from its subalgebras. This
will in particular provide a proof for Theorem~\ref{ThmB}.
We start by introducing and discussing one- and two-``norms'' in 
Section~\ref{one_and_two}, and continue by 
relating  $\phi \otimes \psi$ to $\phi \otimes \bar{\psi}$
for semisimple representations $\phi$ and $\psi$ of a compact Lie algebra
(see Section~\ref{commutant_sect}).
In Section~\ref{use_adjoint}, we apply properties of the adjoint
representation in order to prove the central result of Theorem~\ref{one_norm_cond}.
We summarize our classification-free results in Section~\ref{class-free} by presenting
a set of statements which are equivalent to $\fh=\fg$ for a subalgebra $\fh$ of
a compact semisimple Lie algebra $\fg$.

\subsection{One- and two-``norms''\label{one_and_two}}
We consider a semisimple representation 
$\phi$ of a compact Lie algebra $\fg$.
The decomposition of $\phi$ into simple representations
can be given in the form $\oplus_{i\in \mathcal{I}} \left[ \unity_{m_i} \otimes \phi_i \right]$
of Lemma~\ref{reprtheory} in Appendix~\ref{app_pre}, where $m_i$ denotes the corresponding multiplicity.
In the following, we will use the more concise notation $\oplus_{i\in \mathcal{I}}\, \phi_i^{\oplus m_i}$.
It will be convenient to introduce the notations $\normone{\phi}:=\sum_{i\in\mathcal{I}} m_i$ 
and $\normtwo{\phi}:=\sum_{i\in\mathcal{I}} m_i^2$. We have as an immediate 
consequence of Lemma~\ref{reprtheory} in Appendix~\ref{app_pre}
that $\normtwo{\phi}$ is equal to the dimension of the commutant of $\phi(\fg)$.
Moreover, we obtain the following propositions.
\begin{proposition}\label{sumrules}
Consider two semisimple representations $\phi\cong \oplus_{i\in \mathcal{I}}\, \phi_i^{\oplus m_i}$
and  $\psi \cong \oplus_{i\in \mathcal{I}}\, \psi_i^{\oplus n_i}$
of a compact Lie algebra $\fg$ which decomposes into simple representations
$\phi_i$ and $\psi_i$ with multiplicities $m_i$ and $n_i$, respectively.
One obtains (i) $\normone{ \phi \oplus \psi} = \normone{\phi} + 
\normone{\psi}$ and (ii) $\normtwo{ \phi \oplus \psi} \geq \normtwo{\phi} + 
\normtwo{\psi}$.
\end{proposition}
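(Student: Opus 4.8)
The plan is to reduce both identities to elementary statements about nonnegative integer multiplicities, the only real care being the treatment of simple components that $\phi$ and $\psi$ have in common. First I would re-enumerate: let $\mathcal{J}$ index the distinct isomorphism classes of simple representations occurring in $\phi \oplus \psi$, and for $j \in \mathcal{J}$ write $a_j \geq 0$ for the multiplicity of the $j$-th class in $\phi$ and $b_j \geq 0$ for its multiplicity in $\psi$. This merges the separate lists $\phi_i^{\oplus m_i}$ and $\psi_i^{\oplus n_i}$; whenever some $\phi_i \cong \psi_{i'}$ the two contributions land in the same index $j$. With this bookkeeping, Lemma~\ref{reprtheory} of Appendix~\ref{app_pre} gives $\phi \oplus \psi \cong \oplus_{j \in \mathcal{J}}\, \sigma_j^{\oplus (a_j + b_j)}$, where $\sigma_j$ denotes the $j$-th simple representation, so the multiplicities of $\phi \oplus \psi$ are exactly the sums $a_j + b_j$.

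Part (i) is then immediate from additivity of the summation: $\normone{\phi\oplus\psi} = \sum_{j} (a_j + b_j) = \sum_j a_j + \sum_j b_j = \normone{\phi} + \normone{\psi}$. For part (ii) I would expand the square term by term, $(a_j + b_j)^2 = a_j^2 + 2 a_j b_j + b_j^2$, and sum over $j$ to obtain $\normtwo{\phi\oplus\psi} = \normtwo{\phi} + \normtwo{\psi} + 2\sum_{j} a_j b_j$. Since every $a_j$ and $b_j$ is a nonnegative integer, the cross term $2\sum_j a_j b_j$ is nonnegative, which yields the asserted inequality.

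I do not expect any genuine obstacle here, as the content is entirely combinatorial; the one point that deserves attention is precisely the source of the slack in (ii). The cross term $2\sum_j a_j b_j$ is strictly positive exactly when $\phi$ and $\psi$ share at least one simple component, and it vanishes iff their component sets are disjoint, in which case (ii) becomes an equality. It is worth recording this explicit remainder, since it is this overlap mechanism, rather than the mere inequality, that will drive the strict growth of $\normtwo{\phi}$-type quantities exploited later. As a conceptual cross-check one may also read (ii) through the identity $\normtwo{\phi} = \dim(\comm[\,\cdot\,])$ noted just above the proposition: the block-diagonal intertwiners of $(\phi\oplus\psi)(\fg)$ supply a subspace of the commutant of dimension $\normtwo{\phi} + \normtwo{\psi}$, while the off-diagonal intertwiners between shared components account, via Schur's lemma over $\C$, for the extra $2\sum_j a_j b_j$.
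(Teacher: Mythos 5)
Your proof is correct and is essentially the paper's own argument: the paper likewise works over a common index set of simple classes (its shared $\mathcal{I}$ with multiplicities $m_i$, $n_i$, some possibly zero) and proves (ii) by expanding $\sum_i (m_i+n_i)^2$ and discarding the nonnegative cross term $2\sum_i m_i n_i$. Your explicit merging of the two decompositions into one index set $\mathcal{J}$, and your remarks on when equality holds, merely make precise and supplement what the paper's statement of the proposition already builds into its notation.
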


\begin{proof}
Statement (i) follows from $\normone{ \phi \oplus \psi}= \sum_{i\in\mathcal{I}} m_i + n_i 
= \normone{ \phi} + \normone{\psi}$. 
Similarly, (ii) is a consequence of
$\normtwo{\phi \oplus \psi}= \sum_{i\in\mathcal{I}} (m_i + n_i)^2 = \sum_{i\in\mathcal{I}}
m_i^2 + 2 m_i n_i + n_i^2 \geq \sum_{i\in\mathcal{I}}
m_i^2 + n_i^2=\normtwo{\phi} + \normtwo{\psi}$.
\end{proof}

\begin{proposition}\label{one_two}
Consider a semisimple representation $\phi\cong \oplus_{i\in \mathcal{I}}\, \phi_i^{\oplus m_i}$
of a compact Lie algebra $\fg$ which decomposes into simple representations
$\phi_i$ with multiplicities $m_i$. The restrictions of $\phi$ and $\phi_i$ to a subalgebra $\fh$ of $\fg$
are given by $\phi|_{\fh}\cong \oplus_{j \in \mathcal{J}} \psi_j^{\oplus n_j}$
and $(\phi_i)|_{\fh}\cong \oplus_{j \in \mathcal{J}} \psi_j^{\oplus n_{ji}}$
where $\psi_j$ denotes a simple representation of $\fh$ and $n_j = \sum_{i\in \mathcal{I}} n_{ji}\, m_i$.
We obtain\\
(a) $\normone{(\phi_i)|_{\fh}}=1$ for all $i \in \mathcal{I}$ with $m_i\neq 0$ if and only if
$\normone{\phi|_{\fh}} = \normone{\phi}$,\\
(b) $\normtwo{\phi|_{\fh}} = \normtwo{\phi}$ implies $\normone{\phi|_{\fh}} = \normone{\phi}$,\\
(c) $\normtwo{\phi|_{\fh}} = \normtwo{\phi}$  
$\Leftrightarrow$
$\normone{\phi|_{\fh}} = \normone{\phi}$ and $(\phi_i)|_{\fh} \not\cong (\phi_k)|_{\fh}$ holds 
for all  $i,k \in \mathcal{I}$ with $i\ne k$, $m_i\neq 0$, and $m_k\neq 0$.
\end{proposition}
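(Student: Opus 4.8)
The plan is to reduce all three parts to a single bilinear identity for $\normtwo{\phi|_{\fh}}$ expressed through the branching multiplicities $n_{ji}$, after which (a), (b), and (c) all fall out of elementary nonnegativity arguments over the nonnegative integers. For part (a) I would first record the two linear identities $\normone{\phi} = \sum_{i\in\mathcal{I}} m_i$ and, using $n_j = \sum_{i} n_{ji}\,m_i$,
\[
\normone{\phi|_{\fh}} = \sum_{j\in\mathcal{J}} n_j = \sum_{i\in\mathcal{I}} m_i \sum_{j\in\mathcal{J}} n_{ji} = \sum_{i\in\mathcal{I}} m_i \, \normone{(\phi_i)|_{\fh}}.
\]
Since $\phi_i$ is a nonzero representation its restriction is nonzero, so $\normone{(\phi_i)|_{\fh}}\geq 1$ for every $i$. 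Subtracting gives $\normone{\phi|_{\fh}} - \normone{\phi} = \sum_{i} m_i\bigl(\normone{(\phi_i)|_{\fh}} - 1\bigr)$, a sum of nonnegative terms; it vanishes exactly when $\normone{(\phi_i)|_{\fh}} = 1$ for every $i$ with $m_i\neq 0$, which is precisely statement (a).

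For the two-norm I would introduce the intertwining numbers $c_{ik} := \sum_{j\in\mathcal{J}} n_{ji} n_{jk}$ (the number of common simple constituents of $(\phi_i)|_{\fh}$ and $(\phi_k)|_{\fh}$, counted with multiplicity) and establish the central identity
\[
\normtwo{\phi|_{\fh}} = \sum_{j\in\mathcal{J}} n_j^2 = \sum_{j\in\mathcal{J}}\Bigl(\sum_{i} n_{ji}\, m_i\Bigr)^2 = \sum_{i,k\in\mathcal{I}} m_i m_k\, c_{ik}.
\]
The key quantitative facts are that $c_{ik}\geq 0$ for all $i,k$, that $c_{ii} = \normtwo{(\phi_i)|_{\fh}}\geq \normone{(\phi_i)|_{\fh}}\geq 1$, and that $c_{ii} = 1$ holds iff $(\phi_i)|_{\fh}$ is simple (equivalently $\normone{(\phi_i)|_{\fh}}=1$). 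Separating the diagonal from the off-diagonal terms then yields $\normtwo{\phi|_{\fh}} \geq \sum_i m_i^2 c_{ii} \geq \sum_i m_i^2 = \normtwo{\phi}$, so the two-norm can only grow under restriction.

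With this in hand, the equality $\normtwo{\phi|_{\fh}} = \normtwo{\phi}$ forces both chains of inequalities to be tight: the off-diagonal contribution $\sum_{i\neq k} m_i m_k c_{ik}$ must vanish, and $c_{ii} = 1$ must hold for every $i$ with $m_i\neq 0$. The condition $c_{ii}=1$ is exactly $\normone{(\phi_i)|_{\fh}}=1$, so by (a) it is equivalent to $\normone{\phi|_{\fh}} = \normone{\phi}$, which already delivers (b). For (c), once each $(\phi_i)|_{\fh}$ with $m_i\neq 0$ is simple, the vanishing of $m_i m_k c_{ik}$ for $i\neq k$ is equivalent to $c_{ik}=0$, i.e.\ to $(\phi_i)|_{\fh}$ and $(\phi_k)|_{\fh}$ having no common constituent, which for two simple modules means $(\phi_i)|_{\fh}\not\cong(\phi_k)|_{\fh}$. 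Conversely, assuming $\normone{\phi|_{\fh}}=\normone{\phi}$ together with pairwise non-isomorphism makes every $c_{ii}=1$ and every off-diagonal $c_{ik}=0$, so the identity collapses to $\normtwo{\phi|_{\fh}}=\sum_i m_i^2=\normtwo{\phi}$; this establishes the equivalence in (c).

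The only genuinely delicate point is tracking the equality conditions through the two nested inequalities simultaneously and checking that, in the presence of simplicity, the algebraic condition $c_{ik}=0$ really matches the stated $(\phi_i)|_{\fh}\not\cong(\phi_k)|_{\fh}$. The subtlety is that $c_{ik}=0$ is a priori stronger, since it asserts that the two restrictions share \emph{no} constituent, whereas non-isomorphism is weaker in general; it is exactly the simplicity of the restricted modules, forced by the tightness of the diagonal bound, that closes this gap and makes the two conditions coincide. Everything else is routine bookkeeping with nonnegative integers.
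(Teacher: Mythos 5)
Your proposal is correct and follows essentially the same route as the paper: both expand $\normtwo{\phi|_{\fh}}=\sum_j\bigl(\sum_i n_{ji}m_i\bigr)^2$ into diagonal terms $\sum_j n_{ji}^2$ (your $c_{ii}$) and cross terms $\sum_j n_{ji}n_{jk}$ (your $c_{ik}$), then exploit nonnegativity of integers to pin down the equality conditions and feed them back through part (a). Your explicit naming of the intertwining numbers $c_{ik}$ and the remark that simplicity of the restrictions is what makes $c_{ik}=0$ equivalent to non-isomorphism are just a cleaner packaging of the paper's conditions (i) and (ii), not a different argument.
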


\begin{proof}
We assume during the proof that $\mathcal{I}$ contains only elements $i$ with $m_i\neq 0$. 
Note that $\normone{\phi}=\sum_{i \in \mathcal{I}} m_i$, $\normone{(\phi_i)|_{\fh}}=
\sum_{j \in \mathcal{J}} n_{ji}$, 
and $\normone{\phi|_{\fh}}=\sum_{j \in \mathcal{J}} n_{j}=\sum_{j \in \mathcal{J}} 
( \sum_{i \in \mathcal{I}} n_{ji}\, m_i)=\sum_{i \in \mathcal{I}} ( \sum_{j \in \mathcal{J}} n_{ji})\, m_i$. 
Furthermore, $\normone{\phi|_{\fh}} = \normone{\phi}$ if and only if $\sum_{j \in \mathcal{J}} n_{ji}=1$ for all
$i \in \mathcal{I}$ if and only if $\normone{(\phi_i)|_{\fh}}=1$ for all $i \in \mathcal{I}$. This completes the proof of (a).
We remark that $\normtwo{\phi}=\sum_{i \in \mathcal{I}} m_i^2$ and use the multinomial theorem to obtain
\begin{align*}
\normtwo{\phi|_{\fh}}&=\sum_{j \in \mathcal{J}} n_{j}^2=\sum_{j \in \mathcal{J}} ( \sum_{i \in \mathcal{I}} n_{ji}\, m_i)^2=
\sum_{j\in\mathcal{J}} 
(2\sum_{\genfrac{}{}{0pt}{}{i,\ell\in \mathcal{I}}{i\ne \ell}} n_{ji}\, m_i\, n_{j\ell}\, m_\ell
+\sum_{i\in\mathcal{I}} n_{ji}^2\, m_i^2)\\
&=
2\sum_{\genfrac{}{}{0pt}{}{i,\ell\in \mathcal{I}}{i\ne \ell}} 
(\sum_{j\in\mathcal{J}} 
n_{ji}\, n_{j\ell})\,
m_i \, m_\ell
+\sum_{i\in\mathcal{I}}
(\sum_{j\in\mathcal{J}} n_{ji}^2)\, m_i^2.
\end{align*}
Note that $\sum_{j\in\mathcal{J}} n_{ji}^2\geq 1$ for all $i \in \mathcal{I}$. 
We get from $\normtwo{\phi|_{\fh}} = \normtwo{\phi}$ that (i)
$\sum_{j\in\mathcal{J}} n_{ji}^2=1$ for all $i \in \mathcal{I}$
and that (ii)
$\sum_{j\in\mathcal{J}} 
n_{ji}\, n_{j\ell}=0$ for all $i,\ell \in \mathcal{I}$ with $i\ne \ell$. Condition (i) implies
that $\sum_{j\in\mathcal{J}} n_{ji}=1$ holds  for all $i \in \mathcal{I}$.
We can now prove (b)
by applying (a) to the fact that $\normone{(\phi_i)|_{\fh}}=\sum_{j \in \mathcal{J}} n_{ji}=1$
is valid for all $i \in \mathcal{I}$.
We consider now the statement (c). The fact that the condition 
($*$) $(\phi_i)|_{\fh} \not\cong (\phi_\ell)|_{\fh}$ holds for all  $i,\ell \in \mathcal{I}$ with $i\ne \ell$
is implied by
(i) and (ii). This completes the direction ``$\Rightarrow$''. It follows (i) from $\normone{\phi|_{\fh}} = \normone{\phi}$
by applying (a). The conditions (i) and ($*$) imply (ii) and
the conditions (i) and (ii) imply $\normtwo{\phi|_{\fh}} = \normtwo{\phi}$. This completes 
the direction
``$\Leftarrow$''.
\end{proof}
Note that the converse of part (b) in Proposition~\ref{one_two} is in general false.

\subsection{From $\phi \otimes \psi$ to $\phi \otimes \bar{\psi}$\label{commutant_sect}}
Here, we provide a Lie-algebraic argument why $\normtwo{\phi \otimes \psi}=\normtwo{\phi \otimes \bar{\psi}}$
holds for semisimple representations $\phi$ and $\psi$ of a compact Lie algebra.

\begin{proposition}\label{commutant_conjugated}
Given a compact Lie algebra $\fg$ and two  semisimple  
representations $\phi$ and $\psi$ of $\fg$, it follows that
$\comm (\phi \otimes \psi) = \unity \otimes \tau [\comm(\phi \otimes \bar{\psi})]$,
where $\tau$ denotes the transpose operation (acting on the second 
tensor component).
\end{proposition}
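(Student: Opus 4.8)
The plan is to exploit that for a compact Lie algebra the representations may be taken skew-Hermitian, so that the conjugate representation coincides with the dual one and satisfies the matrix identity $\bar\psi(g) = \overline{\psi(g)} = -\psi(g)^\rT$, and then to track how the partial transpose $\unity\otimes\tau$ on the second tensor factor interacts with the Lie bracket. Throughout I use the Lie-algebraic convention $(\phi\otimes\psi)(g) = \phi(g)\otimes\unity + \unity\otimes\psi(g)$ recalled at the start of the paper.

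First I would write a generic matrix on the representation space as $M = \sum_k A_k\otimes B_k$ and record that $\unity\otimes\tau$ is complex-linear and an involution. The elementary computation at the heart of the argument is to see that $\unity\otimes\tau$ commutes with bracketing against a first-factor operator, $(\unity\otimes\tau)[M,\,\phi(g)\otimes\unity] = [(\unity\otimes\tau)M,\,\phi(g)\otimes\unity]$, while it reverses the order of multiplication inside the second factor and thereby produces a transpose, $(\unity\otimes\tau)[M,\,\unity\otimes\psi(g)] = [(\unity\otimes\tau)M,\,\unity\otimes(-\psi(g)^\rT)]$. Both identities follow by expanding on the elementary tensors $A_k\otimes B_k$ and using $(B_kY - YB_k)^\rT = Y^\rT B_k^\rT - B_k^\rT Y^\rT$; this is the only place where a sign or an ordering could be mishandled.

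Next I would use skew-Hermiticity to replace $-\psi(g)^\rT$ by $\bar\psi(g)$, so that adding the two displayed identities over the full bracket yields $(\unity\otimes\tau)[M,\,(\phi\otimes\psi)(g)] = [(\unity\otimes\tau)M,\,(\phi\otimes\bar\psi)(g)]$ for every $g\in\fg$. Since $\unity\otimes\tau$ is a linear bijection, the left-hand side vanishes for all $g$ exactly when the right-hand side does; that is, $M\in\comm(\phi\otimes\psi)$ iff $(\unity\otimes\tau)M\in\comm(\phi\otimes\bar\psi)$. As $\unity\otimes\tau$ is an involution, this equivalence is precisely the asserted identity $\comm(\phi\otimes\psi) = \unity\otimes\tau[\comm(\phi\otimes\bar\psi)]$.

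The main obstacle is conceptual rather than computational: one must see where compactness is genuinely used. The order-reversal of the partial transpose unavoidably produces $-\psi(g)^\rT$, and it is only the existence of a skew-Hermitian realization---guaranteed by compactness---that upgrades this to the conjugate representation $\bar\psi(g)$ as an honest matrix equality. With merely an abstract equivalence $\bar\psi\cong\psi^*$ one would land in $\comm(\phi\otimes\psi^*)$ and recover the claim only up to conjugation by some $\unity\otimes S$, so I would take care to work in the skew-Hermitian gauge where $\bar\psi(g) = -\psi(g)^\rT$ holds on the nose.
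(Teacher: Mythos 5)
Your computation is correct and is essentially the paper's own proof: the paper likewise pushes the partial transpose through the commutator term by term, obtaining $[\,\sum_i v_i\otimes w_i^{\rT},(\phi\otimes\bar\psi)(g)\,]=\unity\otimes\tau\bigl([\,\sum_i v_i\otimes w_i,(\phi\otimes\psi)(g)\,]\bigr)$, and concludes that $M\in\comm(\phi\otimes\psi)$ iff $(\unity\otimes\tau)M\in\comm(\phi\otimes\bar\psi)$, using that $\unity\otimes\tau$ is a bijective involution. The one place where your framing departs from the paper is your final paragraph: you present $\bar\psi(g)=-\psi(g)^{\rT}$ as a matrix identity that must be secured by passing to a skew-Hermitian realization, and hence as the point where compactness is ``genuinely used.'' In the paper this identity is simply the \emph{definition} of the dual representation (Appendix~A, following Bourbaki): $\bar\psi(g):=-\psi(g)^{\rT}$ for an arbitrary representation of an arbitrary Lie algebra, with skew-Hermiticity entering only through the side remark that in that case $\bar\psi(g)=\overline{\psi(g)}$ as well. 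Consequently the proposition is a purely algebraic statement, its proof uses neither compactness nor semisimplicity, and the scenario you worry about---landing in $\comm(\phi\otimes\psi^{*})$ and recovering the claim only up to conjugation by some $\unity\otimes S$---cannot arise; no gauge-fixing is needed, and your argument is complete once that caveat is removed.
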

\begin{proof}
Let us recall that the dual representation is given
by  $\bar{\phi}(g):=-\phi(g)^T$ for $g \in \fg$, hence
$\phi \otimes \bar{\psi} (g)= \phi(g) \otimes \unity - \unity \otimes \psi(g)^T$. 
Suppose that $\sum_{i} v_i \otimes w_i \in \comm (\phi \otimes \psi)$, i.e.,
for any $g \in \fg$ one has $[(\phi \otimes \psi)(g), \sum_i v_i \otimes w_i]=0$. 
Considering the commutator of the element $\unity \otimes \tau (\sum_{i} v_i \otimes w_i ) 
= \sum_{i} v_i \otimes w_i^T $ with $(\phi\otimes \bar{\psi})(g)$, we arrive at the relation 
$[\sum_{i} v_i \otimes w_i^T , (\phi\otimes \bar{\psi})(g)]=\sum_{i}[v_i, \phi(g)] \otimes w_i^T - 
\sum_i v_i \otimes [w_i^T, \psi(g)^T]=\sum_{i}[v_i, \phi(g)] \otimes w_i^T + \sum_i v_i \otimes 
[w_i, \psi(g)]^T= \unity \otimes \tau([\sum_{i} v_i \otimes w_i , (\phi\otimes \psi)(g))] = \unity \otimes 
\tau (0)=0$, thus $(\unity \otimes \tau) (\sum_{i} v_i \otimes w_i) \in \comm(\phi\otimes \bar{\psi})$.
Completely analogously one can prove that for any $\sum p_i \otimes q_i \in \comm(\phi\otimes \bar{\psi})$ 
one has that $\unity \otimes \tau^{-1} (\sum_i  p_i \otimes q_i) = \unity \otimes \tau (\sum_i  p_i \otimes q_i) 
\in \comm(\phi\otimes \psi)$. 
This completes the proof.
\end{proof}

Proposition~\ref{commutant_conjugated} can be readily applied in the proof
of the following proposition.

\begin{proposition}\label{two_norm_conjugated}
Given a compact Lie algebra $\fg$ and two  semisimple 
representations $\phi$ and $\psi$ of $\fg$, it follows that
$\normtwo{\phi \otimes \psi}=\normtwo{\phi \otimes \bar{\psi}}$.
\end{proposition}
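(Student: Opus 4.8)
The plan is to exploit the fact, recorded just after the definition of $\normtwo{\cdot}$ in Section~\ref{one_and_two}, that for any semisimple representation $\chi$ of a compact Lie algebra the quantity $\normtwo{\chi}$ coincides with $\dim\comm(\chi)$, the dimension of the commutant (an immediate consequence of Lemma~\ref{reprtheory}). This converts the claimed equality of two-``norms'' into an equality of commutant dimensions, at which point Proposition~\ref{commutant_conjugated} is essentially all that is needed.

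Concretely, I would first note that both $\phi\otimes\psi$ and $\phi\otimes\bar{\psi}$ are semisimple: representations of a compact Lie algebra are completely reducible, and in particular the dual $\bar{\psi}$ of the semisimple representation $\psi$ is again semisimple, as is any tensor product of semisimple representations. Hence the identities $\normtwo{\phi\otimes\psi}=\dim\comm(\phi\otimes\psi)$ and $\normtwo{\phi\otimes\bar{\psi}}=\dim\comm(\phi\otimes\bar{\psi})$ both apply. Next I would invoke Proposition~\ref{commutant_conjugated}, which gives $\comm(\phi\otimes\psi)=(\unity\otimes\tau)[\comm(\phi\otimes\bar{\psi})]$ with $\tau$ the transpose on the second tensor factor. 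Since $\tau$ is a linear involution, the map $\unity\otimes\tau$ is a linear bijection from $\comm(\phi\otimes\bar{\psi})$ onto $\comm(\phi\otimes\psi)$ (indeed its own inverse), and a linear isomorphism preserves dimension. Therefore $\dim\comm(\phi\otimes\psi)=\dim\comm(\phi\otimes\bar{\psi})$, and combining this with the two ``norm''-commutant identities yields $\normtwo{\phi\otimes\psi}=\normtwo{\phi\otimes\bar{\psi}}$.

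There is no substantial obstacle here, since the heavy lifting is done by Proposition~\ref{commutant_conjugated}; the only point requiring a moment's care is verifying that $\unity\otimes\tau$ is genuinely a vector-space isomorphism between the two commutants rather than merely giving a one-sided containment. This is immediate once one observes that transposition is an involution, so $\unity\otimes\tau$ is invertible with $(\unity\otimes\tau)^{-1}=\unity\otimes\tau$; in fact Proposition~\ref{commutant_conjugated} already records the reverse containment explicitly through its analogous argument for elements of $\comm(\phi\otimes\bar{\psi})$. One should also confirm that the commutant-dimension identity is applied only to semisimple representations, which is guaranteed by the compactness of $\fg$.
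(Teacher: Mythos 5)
Your proposal is correct and follows essentially the same route as the paper: both reduce $\normtwo{\cdot}$ to commutant dimensions via Lemma~\ref{reprtheory} and then apply Proposition~\ref{commutant_conjugated}, using that the partial transpose is a linear bijection (the paper says ``non-degenerate linear map'') between the two commutants. Your extra remarks on semisimplicity of the tensor products and on the involutivity of $\unity\otimes\tau$ are sound elaborations of points the paper leaves implicit.
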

\begin{proof}
According to Lemma~\ref{reprtheory} of Appendix~\ref{app_pre}, it follows that 
$\normtwo{\phi \otimes \psi}=\dim \comm(\phi \otimes \psi)$ 
and $\normtwo{\phi \otimes \bar{\psi}}=\dim \comm(\phi \otimes \bar{\psi})$. 
From Proposition~\ref{commutant_conjugated} we know that 
$\comm(\phi \otimes \psi) $ is mapped by a non-degenerate linear map (the partial transpose) 
to $\comm (\phi \otimes \bar{\psi})$, so the dimensions 
of the two commutants are equal, thus also 
$\normtwo{\phi \otimes \psi}$ and $\normtwo{\phi \otimes \bar{\psi}}$ are equal.
\end{proof}

\subsection{Using the adjoint representation\label{use_adjoint}}
The adjoint representation plays a important part in our argument, and we 
recall and develop now some of its properties in order to prove
our central result of Theorem~\ref{one_norm_cond} as given below.

\begin{proposition}\label{adjoint_omnibus}
Consider a compact semisimple Lie algebra $\fg$ and its decomposition 
$\fg \cong \oplus_{i\in\mathcal{I}} \fg_i$ into  simple ideals $\fg_i$. (a) The adjoint representation 
$\theta_{\fg}$ of $\fg$ decomposes as
$\theta_{\fg}\cong \oplus_{i\in\mathcal{I}} \theta_{\fg_i}$. (b) It is 
simple if $\fg$ is simple.
(c) The adjoint representation $(\theta_{\fg})|_{\fh}$ of $\fg$ restricted to a proper subalgebra $\fh$
is reducible. (d) The adjoint representation  $\theta_{\fh}$ of $\fh$ occurs as a subrepresentation
of $(\theta_{\fg})|_{\fh}$.
\end{proposition}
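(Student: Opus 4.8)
\emph{Plan.} The single observation driving all four parts is that a linear subspace $W\subseteq\fg$ is stable under the adjoint action of a subalgebra $\fk\subseteq\fg$ exactly when $[\fk,W]\subseteq W$. In particular the $\theta_{\fg}$-invariant subspaces are precisely the ideals of $\fg$, whereas to analyze the restriction $(\theta_{\fg})|_{\fh}$ I only need subspaces $W$ with $[\fh,W]\subseteq W$. Since irreducibility is understood over $\C$, I would phrase the argument on the complexification $\fg\otimes_{\R}\C$, on which $\ad$ extends $\C$-linearly; a real subspace $W$ is $\ad(\fk)$-stable if and only if $W\otimes_{\R}\C$ is, so nothing is lost by passing back and forth.

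For part (a) I would use that distinct simple ideals commute: since both $\fg_i$ and $\fg_j$ are ideals, $[\fg_i,\fg_j]\subseteq\fg_i\cap\fg_j=0$ for $i\neq j$. Hence for $X=\sum_i X_i\in\fg$ and $Y\in\fg_j$ one has $\ad(X)\,Y=[X_j,Y]\in\fg_j$, so each $\fg_j$ is $\theta_{\fg}$-invariant, the action of $\fg$ on it factors through the projection onto $\fg_j$, and on $\fg_j$ it is by definition $\theta_{\fg_j}$; summing over $j\in\mathcal{I}$ yields $\theta_{\fg}\cong\oplus_{i\in\mathcal{I}}\theta_{\fg_i}$. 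For part (b), a $\theta_{\fg}$-invariant complex subspace is a complex ideal of the complexification of $\fg$; when $\fg$ is simple this complexification is a simple complex Lie algebra, whose only ideals are $0$ and the whole space, so $\theta_{\fg}$ is irreducible. This recovers Lemma~\ref{adjoint} of Appendix~\ref{Dynkin_techniques}.

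For parts (c) and (d) the key remark is that $\fh$ is itself $\ad(\fh)$-stable, since $[\fh,\fh]\subseteq\fh$; thus the subspace $\fh$ (equivalently $\fh\otimes_{\R}\C$) is a subrepresentation of $(\theta_{\fg})|_{\fh}$. Reading off the action of $X\in\fh$ on $Y\in\fh$ as $Y\mapsto[X,Y]$ shows that this subrepresentation is exactly the adjoint representation $\theta_{\fh}$, which proves (d)---and, by complete reducibility of representations of compact Lie algebras, it occurs as a genuine direct summand. For (c) I would note that when $\fh$ is proper this invariant subspace is proper: if $\fh\neq 0$ it is a nonzero proper $(\theta_{\fg})|_{\fh}$-invariant subspace, while if $\fh=0$ the restriction is the trivial action on a space of dimension at least $3$, which is reducible as well. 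Either way $(\theta_{\fg})|_{\fh}$ is reducible.

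The computational content is light, so the hard part will be the bookkeeping between the \emph{real} adjoint representation and irreducibility over $\C$, and in particular in (b) the use of the fact that a compact simple Lie algebra has a \emph{simple} complexification, so that the absence of proper real ideals genuinely upgrades to the absence of proper complex-invariant subspaces.
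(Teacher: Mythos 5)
Your proof is correct and follows essentially the same route as the paper's: parts (a) and (d) rest on closure of the bracket ($[\fg_i,\fg_j]=0$ for distinct simple ideals, $[\fh,\fh]\subseteq\fh$ for the subalgebra), while (b) and (c) are exactly the content of the paper's Lemma~\ref{adjoint}, namely that invariant subspaces of the adjoint action are ideals and that a proper subalgebra is itself a proper invariant subspace of the restricted action. The two places where you go beyond the paper are worth keeping: since irreducibility is meant over $\C$, your explicit appeal to the fact that a compact simple Lie algebra has a simple complexification (Lemma~\ref{complex_lemma}(ii)) closes a step that the proof of Lemma~\ref{adjoint}(ii) leaves implicit, and your separate treatment of the degenerate case $\fh=\{0\}$ in (c) handles an edge case the paper's one-line argument silently skips.
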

\begin{proof}
The statement (a) is apparent.
The statements (b) and (c) follow from Lemma~\ref{adjoint} of Appendix~\ref{Dynkin_techniques}.
The adjoint representation $\theta_{\fg}$ of $\fg$  constitutes an action $\fg \times \fg \to \fg$
which is defined using the commutator $[g_1,g_2]=g_3$ for $g_i$ in $\fg$. If one restricts $\theta_{\fg}$ 
to elements of $\fh$, the representation $(\theta_{\fg})|_{\fh}$ forms an action 
$\fh \times \fg \to \fg$ by $[h,g_2]=g_3$ with $h\in \fh$ and $g_i$ in $\fg$. 
The adjoint representation $\theta_{\fh}$ occurs as a subrepresentation as
$[h_1,h_2] \in \fh$ for $h_i \in \fh$. Statement (d) follows.
\end{proof}

This immediately implies the following result.

\begin{proposition}\label{adjoint_one_norm_cond}
Given a subalgebra $\fh$ of a compact semisimple Lie algebra $\fg$ and
their adjoint representations $\theta_{\fh}$ and $\theta_{\fg}$, then
$\fh = \fg$ if and only if $\normone{(\theta_{\fg})|_{\fh}}=\normone{\theta_{\fg}}$.
\end{proposition}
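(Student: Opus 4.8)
The plan is to prove the two implications separately; the forward direction is immediate, so the real content is the reverse. If $\fh=\fg$, then $(\theta_{\fg})|_{\fh}=\theta_{\fg}$, the two decompositions into simple constituents coincide, and hence $\normone{(\theta_{\fg})|_{\fh}}=\normone{\theta_{\fg}}$. For the converse I would argue by contraposition: assuming $\fh\subsetneq\fg$, the goal is to produce at least one additional simple constituent upon restriction, i.e.\ $\normone{(\theta_{\fg})|_{\fh}}>\normone{\theta_{\fg}}$.

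First I would reduce to the simple ideals. By Proposition~\ref{adjoint_omnibus}(a) we have $\theta_{\fg}\cong\oplus_{i\in\mathcal{I}}\theta_{\fg_i}$ with each $\theta_{\fg_i}$ simple by Proposition~\ref{adjoint_omnibus}(b), so $\normone{\theta_{\fg}}=|\mathcal{I}|$. Additivity of the one-norm (Proposition~\ref{sumrules}(i)) gives $\normone{(\theta_{\fg})|_{\fh}}=\sum_{i\in\mathcal{I}}\normone{(\theta_{\fg_i})|_{\fh}}$, and since every summand is at least $1$, Proposition~\ref{one_two}(a) shows that equality $\normone{(\theta_{\fg})|_{\fh}}=\normone{\theta_{\fg}}$ holds precisely when each $(\theta_{\fg_i})|_{\fh}$ is simple. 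Thus it suffices to understand when the restriction of a single simple adjoint stays irreducible. Writing $\pi_i\colon\fg\to\fg_i$ for the projection onto the $i$-th ideal (a Lie-algebra homomorphism), the remaining ideals centralise $\fg_i$, so $(\theta_{\fg_i})|_{\fh}$ is exactly the adjoint action of the subalgebra $\pi_i(\fh)\subseteq\fg_i$ on $\fg_i$. Applying Proposition~\ref{adjoint_omnibus}(c) to the simple algebra $\fg_i$ then shows that $(\theta_{\fg_i})|_{\fh}$ is simple if and only if $\pi_i(\fh)=\fg_i$.

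This reduces the claim to the implication: if $\fh$ surjects onto every simple ideal $\fg_i$, then $\fh=\fg$. For $\fg$ simple this is immediate, since there is a single ideal with $\normone{\theta_{\fg}}=1$ and Proposition~\ref{adjoint_omnibus}(c) already makes $(\theta_{\fg})|_{\fh}$ reducible for any proper $\fh$, forcing $\normone{(\theta_{\fg})|_{\fh}}\ge 2$. The main obstacle is the reducible case: surjectivity onto each factor does not by itself pin down $\fh$, the prototypical danger being a diagonal subalgebra of $\fg_i\oplus\fg_j$ with $\fg_i\cong\fg_j$, which surjects onto both factors while remaining proper. I expect controlling this to be the delicate step, and I would attack it by combining the subrepresentation $\theta_{\fh}\hookrightarrow(\theta_{\fg})|_{\fh}$ of Proposition~\ref{adjoint_omnibus}(d) with the invariant splitting $\fg=\fh\oplus\fm$ supplied by compactness, tracking how a diagonal identification of two isomorphic ideals forces the corresponding constituents $(\theta_{\fg_i})|_{\fh}$ to coincide and hence to contribute genuinely different simple summands of $(\theta_{\fg})|_{\fh}$ than a naive count of distinct factors suggests. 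This is exactly the point at which the argument is most subtle, and it motivates the passage to the finer two-norm in the next section.
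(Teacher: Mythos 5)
Your reduction to the simple ideals is correct, and it is in fact more careful than the paper's own argument: you rightly observe that $\normone{(\theta_{\fg})|_{\fh}}=\normone{\theta_{\fg}}$ holds exactly when every constituent $(\theta_{\fg_i})|_{\fh}$ stays simple, i.e.\ exactly when $\pi_i(\fh)=\fg_i$ for every $i$. But the step you defer as ``delicate'' is not delicate --- it is impossible, because the diagonal subalgebra you flag is an actual counterexample to the proposition. Take $\fg=\su(2)\oplus\su(2)$ and $\fh=\{(x,x)\,:\,x\in\su(2)\}$. Then $\theta_{\fg}\cong\theta_{\fg_1}\oplus\theta_{\fg_2}$, and the two summands are inequivalent as representations of $\fg$ (they have different kernels), so $\normone{\theta_{\fg}}=2$. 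Upon restriction to $\fh$ both summands become the adjoint representation of $\su(2)$, so $(\theta_{\fg})|_{\fh}\cong\theta_{\su(2)}^{\oplus 2}$ and $\normone{(\theta_{\fg})|_{\fh}}=2$ as well, although $\fh\subsetneq\fg$. A diagonal identification of isomorphic ideals makes the restricted constituents \emph{equivalent}, which raises a multiplicity but leaves the sum of multiplicities unchanged; it is visible only to the two-norm, $\normtwo{(\theta_{\fg})|_{\fh}}=4>2=\normtwo{\theta_{\fg}}$, never to the one-norm. So the mechanism you hope for (``contributing genuinely different simple summands'') does not exist, and your closing instinct --- that one must pass to the finer two-norm --- is not a refinement of the argument but the only correct resolution.

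You should also know that the paper's own proof fails at exactly the point you isolated. It argues that, for proper $\fh$, Proposition~\ref{adjoint_omnibus}(c) (reducibility of $(\theta_{\fg})|_{\fh}$) implies that the restriction has \emph{more} simple components than $\theta_{\fg}$. That inference is valid precisely when $\theta_{\fg}$ is itself irreducible, i.e.\ when $\fg$ is simple --- the case your argument already settles, since then $\normone{\theta_{\fg}}=1$ while the restriction has at least two components. For $\fg$ with two or more simple ideals, $\theta_{\fg}$ is already reducible and the inference is a non sequitur, as the diagonal example shows. Consequently Proposition~\ref{adjoint_one_norm_cond} is false as stated, and the defect propagates: the proof of Theorem~\ref{one_norm_cond} invokes it for arbitrary compact semisimple $\fg$, and through it the implications $(5)\Rightarrow(1)$ and $(7)\Rightarrow(1)$ of Theorem~\ref{thm:tensor_square}. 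A correct treatment must either assume $\fg$ simple, or replace $\normone{\cdot}$ by $\normtwo{\cdot}$, or handle subalgebras that surject onto every ideal (which, by a Goursat-type analysis, are products of diagonals) by exhibiting a summand that genuinely splits --- for Theorem~\ref{one_norm_cond} such a summand exists among the cross terms of $\phi\otimes\bar{\phi}$, but not inside $\theta_{\fg}$ itself. Neither the paper nor your sketch supplies this; your proposal has the merit of making the obstruction explicit rather than hiding it.
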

\begin{proof}
One has to show that $\normone{(\theta_{\fg})|_{\fh}}\neq \normone{\theta_{\fg}}$
if $\fh \neq \fg$. Assuming that $\fh \neq \fg$, Proposition~\ref{adjoint_omnibus}(c) implies that 
$(\theta_{\fg})|_{\fh}$ has more simple 
components than $\theta_{\fg}$. It follows that $\normone{(\theta_{\fg})|_{\fh}}> \normone{\theta_{\fg}}$
which concludes the proof.
\end{proof}

Let us recall some well-known connection between the standard and the adjoint representation
of $\su(\ell+1)$.

\begin{proposition}\label{adjoint_su}
Given the standard representation $\kappa$ of $\su(\ell+1)$, one obtains
$\kappa \otimes \bar{\kappa} \cong 1 \oplus \theta_{\su(\ell+1)}$, where 
$1$ denotes the trivial representation and 
$\theta_{\su(\ell+1)}$ denotes the adjoint representation of $\su(\ell+1)$.
\end{proposition}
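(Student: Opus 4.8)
The plan is to realize the tensor product $\kappa \otimes \bar{\kappa}$ as the commutator (adjoint) action on the space of all $(\ell+1)\times(\ell+1)$ complex matrices, and then to peel off the trace part. Write $V = \C^{\ell+1}$ for the space carrying the standard representation, so that $\kappa(g)v = gv$ for column vectors $v$. First I would fix the vector-space identification $\Phi\colon V \otimes V^* \to \gl(\ell+1,\C)$ sending a simple tensor $v \otimes \xi$, with $\xi$ a row vector, to the rank-one matrix $v\xi$. This is a linear isomorphism, since $\dim(V\otimes V^*) = (\ell+1)^2 = \dim\gl(\ell+1,\C)$ and the rank-one matrices span $\gl(\ell+1,\C)$.

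The central computation is to check that $\Phi$ intertwines $\kappa\otimes\bar{\kappa}$ with the commutator action. Using the infinitesimal dual action on row vectors $\bar{\kappa}(g)\xi = -\xi g$ (consistent with $\bar{\kappa}(g) = -\kappa(g)^T$ as in the proof of Proposition~\ref{commutant_conjugated}), one has $(\kappa\otimes\bar{\kappa})(g)(v\otimes\xi) = (gv)\otimes\xi - v\otimes(\xi g)$, which $\Phi$ maps to $(gv)\xi - v(\xi g) = g(v\xi) - (v\xi)g = [g,\, v\xi]$. Hence $\Phi\circ(\kappa\otimes\bar{\kappa})(g) = \ad_g\circ\Phi$ on all of $V\otimes V^*$, so $\kappa\otimes\bar{\kappa}$ is equivalent to the action $M \mapsto [g,M]$ of $\su(\ell+1)$ on $\gl(\ell+1,\C)$.

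It then remains to decompose $\gl(\ell+1,\C)$ under this commutator action. I would use the splitting $M = \tfrac{\tr M}{\ell+1}\,\unity + (M - \tfrac{\tr M}{\ell+1}\,\unity)$ to obtain the invariant direct sum $\gl(\ell+1,\C) = \C\,\unity \oplus \sll(\ell+1,\C)$: the scalar line $\C\,\unity$ is annihilated by every $\ad_g$ and so carries the trivial representation $1$, while $\sll(\ell+1,\C)$ is invariant because commutators are traceless. Since $\sll(\ell+1,\C)$ is the complexification of $\su(\ell+1)$, the commutator action on it is precisely the adjoint representation $\theta_{\su(\ell+1)}$ (which is moreover simple by Lemma~\ref{adjoint} of Appendix~\ref{Dynkin_techniques} and Proposition~\ref{adjoint_omnibus}(b)). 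Combining the two pieces yields $\kappa\otimes\bar{\kappa} \cong 1 \oplus \theta_{\su(\ell+1)}$.

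The only genuinely delicate step is the intertwining computation, where the dual representation must be tracked correctly: the sign in $\bar{\kappa}(g)=-\kappa(g)^T$ is exactly what converts the action on the second factor into the right-multiplication $-v(\xi g)$ needed to produce a commutator rather than an anticommutator. Everything else—the dimension count for $\Phi$, the invariance and complementarity of the two summands, and the irreducibility of the adjoint piece—is routine.
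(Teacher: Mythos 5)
Your proof is correct, but it takes a genuinely different route from the paper's. The paper's proof is a highest-weight argument: it records that $\kappa$, $\bar{\kappa}$, $\theta_{\su(\ell+1)}$, and the trivial representation have highest weights $(1,0,\ldots,0)$, $(0,\ldots,0,1)$, $(1,0,\ldots,0,1)$ (respectively $(2)$ for $\ell=1$), and $(0,\ldots,0)$, and then infers the decomposition from the statements on p.~225 of Fulton--Harris. You instead construct an explicit intertwiner $\Phi\colon v\otimes\xi \mapsto v\xi$ identifying $\kappa\otimes\bar{\kappa}$ with the commutator action on $\gl(\ell+1,\C)$, and split off the trace: $\gl(\ell+1,\C)=\C\,\unity\oplus\sll(\ell+1,\C)$. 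Your sign bookkeeping for the dual action ($\bar{\kappa}(g)\xi=-\xi g$ matching $\bar{\kappa}(g)=-\kappa(g)^T$) is right, and the identification of the commutator action on $\sll(\ell+1,\C)=\su(\ell+1)_{\C}$ with $\theta_{\su(\ell+1)}$ is legitimate within the paper's conventions, where a representation and its complexification are interchangeable up to equivalence (Lemma~\ref{complex}). What your approach buys: it is self-contained and elementary, it avoids any appeal to weight theory or external references, it treats all $\ell\geq 1$ uniformly (no special case for the $\su(2)$ highest weight $(2)$), and it fits the classification-free spirit of Section~\ref{classification-free}; it even yields slightly more than the statement requires, namely that the two summands are the irreducible constituents. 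What the paper's approach buys: brevity, and consistency with the highest-weight bookkeeping used throughout Appendices~\ref{Dynkin_techniques} and \ref{case-by-case}, where representations are systematically identified by their highest weights.
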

\begin{proof}
For $\su(\ell+1)$, the standard representation $\kappa$, the dual $\bar{\kappa}$ of $\kappa$,
the adjoint representation $\theta_{\su(\ell+1)}$, and the trivial representation have
highest weights  $(1,0,\ldots,0)$, $(0,\ldots,0,1)$, $(1,0,\ldots,0,1)$ (and $(2)$ for $\ell=1$),
and $(0,\ldots,0)$, respectively.
The proposition can now 
be inferred from the statements on p.~225 of Ref.~\onlinecite{FH91}.
\end{proof}

We can now combine all previous results in this section in order to prove 
the following central theorem.

\begin{theorem}\label{one_norm_cond}
Given a subalgebra $\fh$ of a compact semisimple Lie algebra $\fg$
and a faithful representation $\phi$ of $\fg$,
then $\fh = \fg$ if and only if $\normone{(\phi \otimes \bar{\phi})|_{\fh}}=\normone{\phi \otimes \bar{\phi}}$.
\end{theorem}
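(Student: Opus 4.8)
The plan is to prove the contrapositive of the nontrivial (``if'') direction: assuming $\fh \neq \fg$, I would establish the strict inequality $\normone{(\phi \otimes \bar{\phi})|_{\fh}} > \normone{\phi \otimes \bar{\phi}}$. The forward direction is immediate, since for $\fh = \fg$ restriction changes nothing. Because each simple $\fg$\nb component restricts to at least one simple $\fh$\nb component, the one-``norm'' can never decrease; hence by Proposition~\ref{one_two}(a) it suffices to exhibit a single simple $\fg$\nb component of $\phi \otimes \bar{\phi}$ whose restriction to $\fh$ becomes reducible.

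The key step is to identify the adjoint representation $\theta_{\fg}$ as a subrepresentation of $\phi \otimes \bar{\phi}$. First I would identify the representation space $V \otimes V^*$ of $\phi \otimes \bar{\phi}$ with the endomorphism space $\mathrm{End}(V)$; under this identification a short computation (using $\bar{\phi}(g) = -\phi(g)^T$) shows that $(\phi \otimes \bar{\phi})(g)$ acts as the commutator $A \mapsto [\phi(g), A]$. The image $\phi(\fg) \subseteq \mathrm{End}(V)$ is then a $\fg$\nb submodule, since $[\phi(g), \phi(g')] = \phi([g,g'])$, and the intertwining relation $\phi(\ad_g(g')) = [\phi(g), \phi(g')]$ together with the \emph{faithfulness} of $\phi$ (so that $\phi$ is a bijection onto its image) shows $\phi(\fg) \cong \theta_{\fg}$ as $\fg$\nb modules. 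Consequently every simple component of $\theta_{\fg}$ occurs with nonzero multiplicity among the simple components of $\phi \otimes \bar{\phi}$; by Proposition~\ref{adjoint_omnibus}(a) these components are exactly the adjoints $\theta_{\fg_i}$ of the simple ideals $\fg_i$.

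To finish, I would invoke the adjoint criterion. By Proposition~\ref{adjoint_one_norm_cond}, the assumption $\fh \neq \fg$ yields $\normone{(\theta_{\fg})|_{\fh}} > \normone{\theta_{\fg}}$. Applying the contrapositive of Proposition~\ref{one_two}(a) to the decomposition $\theta_{\fg} \cong \oplus_{i \in \mathcal{I}} \theta_{\fg_i}$, at least one summand $\theta_{\fg_{i_0}}$ satisfies $\normone{(\theta_{\fg_{i_0}})|_{\fh}} \geq 2$, i.e.\ it becomes reducible upon restriction to $\fh$. Since $\theta_{\fg_{i_0}}$ is precisely one of the simple $\fg$\nb components of $\phi \otimes \bar{\phi}$ established above, a final application of Proposition~\ref{one_two}(a) to $\phi \otimes \bar{\phi}$ delivers the desired strict inequality $\normone{(\phi \otimes \bar{\phi})|_{\fh}} > \normone{\phi \otimes \bar{\phi}}$.

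I expect the main obstacle to be the bookkeeping in the embedding step, specifically verifying carefully that the faithful image $\phi(\fg)$, viewed inside $\mathrm{End}(V)$ under the commutator action, is genuinely isomorphic to the \emph{full} adjoint representation $\theta_{\fg}$ (rather than the adjoint of some quotient), which is exactly where faithfulness is indispensable. The special case $\fg = \su(\ell+1)$ with the standard representation recorded in Proposition~\ref{adjoint_su} as $\kappa \otimes \bar{\kappa} \cong 1 \oplus \theta_{\su(\ell+1)}$ serves as a sanity check: there $\phi \otimes \bar{\phi}$ is exactly the trivial representation plus one copy of the simple adjoint.
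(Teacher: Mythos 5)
Your proposal is correct, and its overall skeleton coincides with the paper's: locate the adjoint representation $\theta_{\fg}$ inside $\phi \otimes \bar{\phi}$, invoke the splitting of the adjoint on restriction to a proper subalgebra (Propositions~\ref{adjoint_omnibus}(c) and \ref{adjoint_one_norm_cond}), and finish with a double application of Proposition~\ref{one_two}(a). Where you genuinely diverge is in the key embedding step. The paper proceeds extrinsically: faithfulness gives $\fg \subseteq \su(d)$ with $\phi = \kappa|_{\fg}$ for the standard representation $\kappa$ of $\su(d)$, then Proposition~\ref{adjoint_su} (proved via highest weights, $\kappa \otimes \bar{\kappa} \cong 1 \oplus \theta_{\su(d)}$) combined with Proposition~\ref{adjoint_omnibus}(d) (the adjoint of a subalgebra occurs in the restricted adjoint) yields that $\theta_{\fg}$ occurs in $\phi \otimes \bar{\phi}$. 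You instead argue intrinsically: identifying $V \otimes V^{*}$ with $\mathrm{End}(V)$, the action of $(\phi \otimes \bar{\phi})(g)$ becomes $A \mapsto [\phi(g),A]$, the subspace $\phi(\fg)$ is invariant, and faithfulness makes $g' \mapsto \phi(g')$ a module isomorphism onto it, so $\theta_{\fg}$ sits inside $\phi \otimes \bar{\phi}$ in one stroke. Your route is more self-contained and more transparent about where faithfulness enters (it is exactly the injectivity of the intertwiner $\fg \to \mathrm{End}(V)$), and it avoids both the $\su(d)$ detour and the highest-weight input behind Proposition~\ref{adjoint_su}; in effect it reproves the conjunction of Propositions~\ref{adjoint_su} and \ref{adjoint_omnibus}(d) directly. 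The paper's route buys modularity — it reuses propositions already established for other purposes and keeps the proof of the theorem itself to a few lines. One small point of hygiene your writeup leaves implicit: the passage from ``some simple component of $\theta_{\fg}$ splits'' to the strict inequality for $\phi \otimes \bar{\phi}$ needs semisimplicity of $(\phi \otimes \bar{\phi})|_{\fh}$ (Lemma~\ref{restr_semisimple}) together with your opening observation that restriction can never decrease the one-``norm''; both are available, so this is a presentational remark, not a gap.
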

\begin{proof}
Let $d:=\dim(\phi)$. As $\phi$ is faithful, $\fg\subseteq \su(d)$. Given the standard reprepresentation
$\kappa$ of $\su(d)$, we obtain $\phi=\kappa|_{\fg}$ 
and $\phi \otimes \bar{\phi}=\kappa|_{\fg} \otimes \bar{\kappa}|_{\fg}=(\kappa \otimes \bar{\kappa})|_{\fg}$.
Using Propositions~\ref{adjoint_su} and \ref{adjoint_omnibus}(d) it follows 
that $\theta_{\su(d)}$ occurs in $\kappa \otimes \bar{\kappa}$ 
and $\theta_{\fg}$ occurs in $\theta_{\su(d)}|_{\fg}$. Therefore, $\theta_{\fg}$ occurs in $\phi \otimes \bar{\phi}$.
But $\theta_{\fg}$ splits when restricted to $\fh\ne \fg$ [see Proposition~\ref{adjoint_omnibus}(c)]
and it follows that $\normone{(\theta_{\fg})|_{\fh}}>
\normone{\theta_{\fg}}$
[see Proposition~\ref{adjoint_one_norm_cond}]. 
We apply Proposition~\ref{one_two}(a)
and conclude that there exists a simple representation in the decomposition
of $\theta_{\fg}$ which splits when restricted to $\fh$. But this simple representation
also appears in the decomposition of $\phi \otimes \bar{\phi}$, and we can apply 
Proposition~\ref{one_two}(a) again to conclude that  $\normone{(\phi \otimes \bar{\phi})|_{\fh}}>
 \normone{\phi \otimes \bar{\phi}}$.
\end{proof}

\subsection{Classification-free theorem\label{class-free}}

Recall that $\comm[\phi]$ denotes the commutant of a representation $\phi$ of a Lie algebra $\fg$.
We summarize our results in a convenient omnibus theorem which relates
the equality of a compact semisimple Lie algebra $\fg$ to
one of its subalgebras $\fh$
to the condition $\dim(\comm[(\phi \otimes \phi)|_{\fh}]) = \dim(\comm[\phi \otimes \phi])$ for 
a faithful representation $\phi$ of $\fg$
(and thereby also
proving Theorem~\ref{ThmB}), as well as to various other variants.
\begin{theorem} \label{thm:tensor_square}
Given a subalgebra $\fh$ of a compact semisimple Lie algebra $\fg$
and a faithful representation $\phi$ of $\fg$,
then the following statements are equivalent:\\
(1) $\fh = \fg$,\\
(2) $\dim(\comm[(\phi \otimes \bar{\phi})|_{\fh}]) = \dim(\comm[\phi \otimes \bar{\phi}])$,\\
(3) $\dim(\comm[(\phi \otimes \phi)|_{\fh}]) = \dim(\comm[\phi \otimes \phi])$,\\
(4)  $\normtwo{(\phi \otimes \bar{\phi})|_{\fh}}=\normtwo{\phi \otimes \bar{\phi}}$,\\
(5)  $\normone{(\phi \otimes \bar{\phi})|_{\fh}}=\normone{\phi \otimes \bar{\phi}}$,\\
(6)  $\normtwo{(\phi \otimes \phi)|_{\fh}}=\normtwo{\phi \otimes \phi}$.\\
(7)  $\normone{(\phi \otimes \phi)|_{\fh}}=\normone{\phi \otimes \phi}$.
\end{theorem}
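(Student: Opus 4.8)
The plan is to prove the seven statements equivalent by first collapsing conditions (1)--(6) into a single cycle using the tools already in hand, and then attaching (7) at the end, where the real difficulty lies. The forward direction (1) $\Rightarrow$ (k) is free for every $k$: when $\fh=\fg$ each restriction is the identity, so all equalities hold. For the converse I would organize the two-norm conditions first. By Lemma~\ref{reprtheory} the dimension of a commutant equals the two-norm, so (2) is literally (4) and (3) is literally (6). Next, Proposition~\ref{two_norm_conjugated} applied to $\fg$ gives $\normtwo{\phi\otimes\phi}=\normtwo{\phi\otimes\bar\phi}$, and applied to $\fh$ with the representation $\phi|_{\fh}$ --- using that conjugation commutes with restriction, $(\bar\phi)|_{\fh}=\overline{\phi|_{\fh}}$ --- gives $\normtwo{(\phi\otimes\phi)|_{\fh}}=\normtwo{(\phi\otimes\bar\phi)|_{\fh}}$, so (4) and (6) are the same equation. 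Thus (2), (3), (4), (6) are equivalent outright.

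To close the cycle I would invoke Proposition~\ref{one_two}(b): the two-norm equality (4) implies the one-norm equality (5), and (5) is exactly the hypothesis of Theorem~\ref{one_norm_cond}, which yields (1). Together with the trivial (1) $\Rightarrow$ (4) this makes (1)--(6) mutually equivalent. Statement (7) enters easily in one direction: Proposition~\ref{one_two}(b) applied to the representation $\phi\otimes\phi$ gives (6) $\Rightarrow$ (7), and (1) $\Rightarrow$ (7) is again trivial. What remains, and what I expect to carry essentially all the content beyond Theorem~\ref{one_norm_cond}, is the single implication (7) $\Rightarrow$ (1).

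For (7) $\Rightarrow$ (1) I would argue contrapositively and reformulate via Proposition~\ref{one_two}(a): the equality $\normone{(\phi\otimes\phi)|_{\fh}}=\normone{\phi\otimes\phi}$ holds precisely when no simple constituent of $\phi\otimes\phi$ becomes reducible on restriction to $\fh$, so it suffices to show that whenever $\fh\subsetneq\fg$ some constituent of $\phi\otimes\phi$ genuinely splits. When $\phi$ is self-dual this is immediate: then $\phi\otimes\phi\cong\phi\otimes\bar\phi$ contains the adjoint $\theta_{\fg}$ (as in the proof of Theorem~\ref{one_norm_cond}, through $\theta_{\su(d)}\subseteq\kappa\otimes\bar\kappa$ and Proposition~\ref{adjoint_omnibus}(d)), and $\theta_{\fg}$ is reducible over any proper $\fh$ by Proposition~\ref{adjoint_omnibus}(c). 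The hard part --- and the main obstacle --- is the non-self-dual case, where $\theta_{\fg}$ need not appear in $\phi\otimes\phi$ at all: already for the defining representation of $\su(3)$ one has $\phi\otimes\phi\cong\phi_{(2,0)}\oplus\phi_{(0,1)}$, which misses the adjoint $\phi_{(1,1)}$. Here I cannot simply point to a distinguished splitting constituent.

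To overcome this I would combine two facts. First, the already-established equivalences force $\normtwo{(\phi\otimes\phi)|_{\fh}}>\normtwo{\phi\otimes\phi}$ strictly once $\fh\subsetneq\fg$. Second, by the analysis underlying Proposition~\ref{one_two}(c) this strict growth splits into a genuine-splitting part (a constituent becoming reducible) and a merging part (distinct constituents restricting to isomorphic summands), and statement (7) asserts exactly that the splitting part vanishes; the task is thus to rule out \emph{pure merging} for a tensor square. The route I would pursue is to track the extra self-duality that $\phi|_{\fh}$ may acquire over the smaller algebra, which makes an invariant form surface in $\Sym^2(\phi|_{\fh})$ or $\Alt^2(\phi|_{\fh})$ that was absent over $\fg$ and hence splits a constituent, and to argue that any proper $\fh$ must either split $\phi$, render $\phi|_{\fh}$ self-dual, or otherwise create such a splitting. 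An alternative I would weigh is to prove directly that $\normone{\phi\otimes\phi}=\normone{\phi\otimes\bar\phi}$ together with its restricted version, identifying (7) with (5) at a stroke; but the partial-transpose map of Proposition~\ref{commutant_conjugated} is only a linear isomorphism of commutants and need not respect their algebra structure, so it governs the two-norm and not the one-norm, and this shortcut is not obviously available from the stated results. Making the dichotomy that forces a splitting exhaustive, and classification-free, is the step I expect to require the most care.
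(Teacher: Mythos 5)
Your treatment of (1)--(6) coincides with the paper's: (2)$\Leftrightarrow$(4) and (3)$\Leftrightarrow$(6) via Lemma~\ref{reprtheory}, (4)$\Leftrightarrow$(6) via Proposition~\ref{two_norm_conjugated}, (4)$\Rightarrow$(5) via Proposition~\ref{one_two}(b), and (5)$\Rightarrow$(1) via Theorem~\ref{one_norm_cond}; your observation that (6)$\Rightarrow$(7) follows from Proposition~\ref{one_two}(b) is also fine. The genuine gap is (7)$\Rightarrow$(1), which you explicitly leave open. Your sketched dichotomy for the non-self-dual case --- that a proper subalgebra $\fh$ must either split $\phi$, render $\phi|_{\fh}$ self-dual, or otherwise force a constituent of $\phi\otimes\phi$ to split --- is not carried out, and there is no reason to expect it can be completed easily: the paper itself remarks that no classification-free argument for this step is known.

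What you are missing is that the ``shortcut'' you considered and dismissed is exactly the paper's route, only it does not come from the partial transpose. Appendix~\ref{CZ} contains Proposition~\ref{one_norm_conjugated}, a generalization of a theorem of Coquereaux and Zuber (Thm.~1 of Ref.~\onlinecite{CZ11}), asserting $\normone{\phi\otimes\psi}=\normone{\phi\otimes\bar\psi}$ for any two semisimple representations of a compact Lie algebra. Applying it over $\fg$ (with $\psi=\phi$) and again over $\fh$ (to the restricted representations, which remain semisimple by Lemma~\ref{restr_semisimple}, using $(\bar\phi)|_{\fh}=\overline{\phi|_{\fh}}$) identifies (7) with (5) at a stroke, and (5)$\Rightarrow$(1) is already in hand. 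You were right that Proposition~\ref{commutant_conjugated} only controls the two-norm; the one-norm statement is a genuinely deeper input, imported from the literature rather than derived from the partial transpose, and it is precisely the ingredient your proposal lacks. Without it (or a new classification-free substitute, which would itself be a contribution), your argument establishes only the equivalence of (1)--(6) together with the one-sided implications (1)$\Rightarrow$(7) and (6)$\Rightarrow$(7).
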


Note that the proof of the equivalence of (7) relies on 
a theorem of Coquereaux and Zuber (see Appendix~\ref{CZ}) for which a classification-free proof is 
(to our knowledge)
not yet known.
\begin{proof}
Obviously, statement (1) implies all the other ones. The equivalence of (2) and (4) as well as (3) and (6)
is a consequence of Lemma~\ref{reprtheory} in Appendix~\ref{app_pre}. 
One obtains the equivalence of (4) and (6) by applying Proposition~\ref{two_norm_conjugated}.
Proposition~\ref{one_two}(b) shows that (5) is implied by (4). It remains to prove that
(5) implies (1) which follows by Theorem~\ref{one_norm_cond}.
Note that (5) is equivalent to (7) by applying Proposition~\ref{one_norm_conjugated}
from Appendix~\ref{CZ}.
\end{proof}

\section{Discussion\label{discussion}}
It is natural to ask if and how one could extend the results of Theorem~\ref{thm:tensor_square}
to general compact Lie algebras.
A first step in this direction is given by the following theorem 
which
allows us to conclude that the semisimple parts of $\fh$ and $\fg$ have to be equal if
the dimensions of the commutant of the tensor squares are equal assuming that 
the corresponding representation is both faithful and semisimple.
\begin{theorem}\label{Thm_semsimple}
Consider a subalgebra $\fh$ of a compact Lie algebra $\fg= \fs(\fg) \oplus \fc(\fg)$ 
(which decomposes into its semisimple part  $\fs(\fg)$ and its center $\fc(\fg)$)
as well as a faithful and semisimple representation $\phi$ of $\fg$. 
One has $\fs(\fh)=\fs(\fg)$ if 
$\dim(\comm[(\phi \otimes \phi)|_{\fh}]) = \dim(\comm[\phi \otimes \phi])$.
\end{theorem}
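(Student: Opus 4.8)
The plan is to \emph{reduce the statement to the semisimple case already settled in Theorem~\ref{thm:tensor_square}} by pushing the hypothesis down to the semisimple parts $\fs(\fg)$ and $\fs(\fh)$, and then to invoke Theorem~\ref{one_norm_cond} for the compact semisimple Lie algebra $\fs(\fg)$. First I would assemble the structural facts. A subalgebra of a compact Lie algebra is again compact, hence reductive, so $\fh=\fs(\fh)\oplus\fc(\fh)$ with $\fs(\fh)=[\fh,\fh]$; since $[\fh,\fh]\subseteq[\fg,\fg]=\fs(\fg)$, the semisimple part $\fs(\fh)$ is a genuine subalgebra of the compact semisimple Lie algebra $\fs(\fg)$. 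As $\phi$ is injective on $\fg$, its restriction $\phi|_{\fs(\fg)}$ is a faithful representation of $\fs(\fg)$, so the hypotheses of Theorem~\ref{one_norm_cond} are met by the pair $(\fs(\fg),\fs(\fh))$.

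Next I would convert the two-norm hypothesis into a one-norm statement for the conjugated tensor product. By Lemma~\ref{reprtheory} the hypothesis reads $\normtwo{(\phi\otimes\phi)|_{\fh}}=\normtwo{\phi\otimes\phi}$. Applying Proposition~\ref{two_norm_conjugated} once over $\fg$ and once over $\fh$ (using $\overline{\phi|_{\fh}}=\bar\phi|_{\fh}$) turns this into $\normtwo{(\phi\otimes\bar\phi)|_{\fh}}=\normtwo{\phi\otimes\bar\phi}$, and Proposition~\ref{one_two}(b) then yields the one-norm equality $\normone{(\phi\otimes\bar\phi)|_{\fh}}=\normone{\phi\otimes\bar\phi}$.

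The crux is a \emph{descent lemma}: for any compact Lie algebra $\fa=\fs(\fa)\oplus\fc(\fa)$ and any representation $\Psi$ of $\fa$ one has $\normone{\Psi|_{\fs(\fa)}}=\normone{\Psi}$. Indeed, on every $\fa$-irreducible constituent of $\Psi$ the central elements $\fc(\fa)$ act by scalars (Schur), so the commutant of that constituent is unchanged when $\fc(\fa)$ is dropped; hence each constituent remains irreducible under restriction to $\fs(\fa)$ and the number of components counted with multiplicity is preserved. Applying this to $\fa=\fg$ with $\Psi=\phi\otimes\bar\phi$ and to $\fa=\fh$ with $\Psi=(\phi\otimes\bar\phi)|_{\fh}$, and chaining with the one-norm identity just obtained, gives $\normone{(\phi\otimes\bar\phi)|_{\fs(\fh)}}=\normone{(\phi\otimes\bar\phi)|_{\fs(\fg)}}$. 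Since $(\phi\otimes\bar\phi)|_{\fs(\fg)}=\phi|_{\fs(\fg)}\otimes\overline{\phi|_{\fs(\fg)}}$, this is precisely the one-norm equality of Theorem~\ref{one_norm_cond} for $(\fs(\fg),\fs(\fh))$, whence $\fs(\fh)=\fs(\fg)$.

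I expect the main obstacle to be conceptual rather than computational: one must route the argument through the \emph{one}-norm rather than the two-norm. Under restriction to the semisimple part, two distinct $\fg$-irreducibles differing only by a central character may become isomorphic, which inflates $\normtwo{\cdot}$ while leaving $\normone{\cdot}$ untouched; this is exactly why the two-norm hypothesis is first traded for a one-norm statement and why Theorem~\ref{one_norm_cond}, not its two-norm counterpart, is the natural endpoint. It is also why the conclusion can only concern the semisimple parts, the center carrying no information detectable by these invariants.
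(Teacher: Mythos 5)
Your proof is correct, and it takes a genuinely different route from the paper's. The paper stays in the two-norm throughout: it first reduces to the case $\fc(\fg)\subseteq\fh$ (enlarging $\fh$ to $\fs(\fg)\oplus[\fh\cap\fc(\fg)]$ and using monotonicity of commutant dimensions), decomposes $\phi$ into outer products $s_i\boxtimes c_k$ of simple representations of $\fs(\fg)$ with central characters, pushes the hypothesis down to $\normtwo{(s_i\otimes s_i)|_{\fh\cap\fs(\fg)}}=\normtwo{s_i\otimes s_i}$ via Corollary~\ref{compact_boxtimes_one} and Lemma~\ref{compact_boxtimes_two}, and then---since an individual $s_i\otimes s_i$ need not be faithful---applies Theorem~\ref{thm:tensor_square} to the quotients $\fs(\fg)/\kernel(s_i\otimes s_i)$, one for each simple ideal of $\fs(\fg)$. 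You instead trade the two-norm for the one-norm at the outset (Proposition~\ref{two_norm_conjugated} together with Proposition~\ref{one_two}(b)), prove a descent lemma saying the one-norm is blind to the center, and invoke Theorem~\ref{one_norm_cond} a single time, for the pair $(\fs(\fg),\fs(\fh))$ with the faithful representation $\phi|_{\fs(\fg)}$. Your route buys economy and robustness: using the full restriction $\phi|_{\fs(\fg)}$ (faithful because $\phi$ is) removes any need for quotients or per-ideal case analysis, and working with $\fs(\fh)=[\fh,\fh]\subseteq[\fg,\fg]=\fs(\fg)$ sidesteps the paper's opening reduction, which rests on the identity $\fh=[\fh\cap\fs(\fg)]\oplus[\fh\cap\fc(\fg)]$---an identity that fails for general subalgebras (e.g.\ the line spanned by $(X,1)$ in $\su(2)\oplus\uu(1)$ for $X\neq 0$) and hence needs extra justification there, whereas your inclusion is unconditional. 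What the paper's route buys is finer bookkeeping: Corollary~\ref{compact_boxtimes_one} and Lemma~\ref{compact_boxtimes_two} record exactly how central characters pair with the semisimple constituents, and the argument delivers $\fg_\ell\subseteq\fh$ ideal by ideal. Two minor patches would make your write-up airtight: in the descent lemma, ``trivial commutant'' yields irreducibility only because the restriction is semisimple (Lemma~\ref{restr_semisimple})---alternatively, argue directly that any $\fs(\fa)$-invariant subspace of an irreducible constituent is automatically $\fa$-invariant since the center acts by scalars; and the degenerate case $\fs(\fg)=\{0\}$ should be dispatched as trivial before citing Theorem~\ref{one_norm_cond}.
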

The corresponding proof is given in Appendix~\ref{app_semisimple}.
It is important to emphasize that the converse of Theorem~\ref{Thm_semsimple}
is in general not true as the following counter-example shows: 
Consider the two Lie-algebra generators $A$ and $B$ which are given
in a faithful representation $\phi$ as commuting matrices 
$$
\phi(A):=
\begin{pmatrix}
1 & 0 & 0\\
0 & 1 & 0\\
0 & 0 & -2
\end{pmatrix}
\;\text{ and }\;
\phi(B):=
\begin{pmatrix}
2 & 0 & 0\\
0 & -1 & 0\\
0 & 0 & -1
\end{pmatrix}.
$$
Note that $A$ generates a one-dimensional abelian Lie algebra $\fh$ isomorphic to $\uu(1)$
and that $A$ and $B$ generate a two-dimensional abelian Lie algebra $\fg$ isomorphic to $\uu(1)\oplus\uu(1)$.
Their semisimple parts $\fs(\fh)=\fs(\fg)=\{0\}$ are equal and trivial, but 
$\dim(\comm[(\phi \otimes \phi)|_{\fh}]) = 33$ and $\dim(\comm[(\phi \otimes \phi)|_{\fg}])=15$.
In order to treat the case of a general compact Lie algebra completely, an
additional condition for guaranteeing the equality of the centers of
$\fh$ and $\fg$ will be necessary. This is the topic 
of a related study\cite{ZZSB15a} which focuses
on control-theoretic applications in quantum systems.

A different possibility of extending our results 
in Theorem~\ref{thm:tensor_square} is by
providing not only inequalities  as 
$ \normone{(\phi \otimes \phi)|_{\fh}}  > \normone{\phi \otimes \phi}$ and
$ \normtwo{(\phi \otimes \phi)|_{\fh}}  > \normtwo{\phi \otimes \phi}$.
But one can in special cases determine bounds on the gap 
in these inequalities.
If $\fg$ is simple and $\phi$ is
self-dual, one can apply a result of  King and Wybourne\cite{KW96}
in order to derive the following bounds (see Appendix~\ref{sharp_app}).

\begin{proposition}\label{bound_self_dual}
Let $\alpha$ be a simple and  self-dual representation  of a compact simple
Lie algebra $\fg$, and let $\fh$ be a 
 subalgebra of $\fg$, then \\
(1)  $\normone{(\alpha \otimes \alpha)|_{\fh}} \ge b(\alpha) + \normone{\alpha \otimes \alpha}$,\\
(2)  $\normtwo{(\alpha \otimes \alpha)|_{\fh}} \ge b(\alpha)^2 + 
\normtwo{\alpha \otimes \alpha}$, and\\
(3) $\dim(\comm[(\alpha \otimes \alpha)|_{\fh}]) \ge b(\alpha)^2 +\dim(\comm[\alpha \otimes \alpha])$ hold,\\
where $b(\alpha)$ denotes the number of non-vanishing components in the highest weight $(\alpha_1,\allowbreak \ldots, \allowbreak \alpha_\ell)$
corresponding to $\alpha$.
\end{proposition}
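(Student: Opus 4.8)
The plan is to reduce the whole proposition to a single multiplicity statement about the adjoint representation and then let the elementary bookkeeping of Section~\ref{one_and_two} finish the job. Since $\alpha$ is self-dual, $\bar\alpha\cong\alpha$, hence $\alpha\otimes\alpha\cong\alpha\otimes\bar\alpha$ as representations of both $\fg$ and $\fh$; all three quantities in (1)--(3) are therefore unchanged if $\alpha\otimes\alpha$ is replaced by $\alpha\otimes\bar\alpha$ (for the two-norm this is also guaranteed by Proposition~\ref{two_norm_conjugated}). The key input I would extract from King and Wybourne\cite{KW96} is the exact count: for an irreducible $\alpha$ of the simple algebra $\fg$, the adjoint representation $\theta_{\fg}$ occurs in $\alpha\otimes\bar\alpha$ with multiplicity exactly $b(\alpha)$, the number of non-vanishing labels of the highest weight. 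One sees this on examples: for the $7$-dimensional representation of $\so(7)$ one has $b=1$ and $\theta_{\fg}=\Alt^2\alpha$ occurs once, while for the adjoint $V_{(1,1)}$ of $\su(3)$ one has $b=2$ and $8\otimes 8$ contains the adjoint twice.

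Granting this, I would write $\alpha\otimes\alpha\cong\theta_{\fg}^{\oplus b(\alpha)}\oplus R$, where $R$ collects all isotypic components other than the adjoint, so that $\theta_{\fg}^{\oplus b(\alpha)}$ and $R$ share no simple constituent. For (1), additivity of the one-norm (Proposition~\ref{sumrules}(i)) gives $\normone{(\alpha\otimes\alpha)|_{\fh}}=b(\alpha)\,\normone{(\theta_{\fg})|_{\fh}}+\normone{R|_{\fh}}$. Because $\fh$ is a proper subalgebra, $(\theta_{\fg})|_{\fh}$ is reducible by Proposition~\ref{adjoint_omnibus}(c), so $\normone{(\theta_{\fg})|_{\fh}}\ge 2$; and since each simple constituent of $R$ restricts to at least one piece, $\normone{R|_{\fh}}\ge\normone{R}$. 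Comparing with $\normone{\alpha\otimes\alpha}=b(\alpha)+\normone{R}$ yields $\normone{(\alpha\otimes\alpha)|_{\fh}}\ge 2b(\alpha)+\normone{R}=b(\alpha)+\normone{\alpha\otimes\alpha}$, which is (1).

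For (2) and (3) I would first note that these are the same statement, since the two-norm of a semisimple representation equals the dimension of its commutant (Lemma~\ref{reprtheory}). Applying superadditivity (Proposition~\ref{sumrules}(ii)) to the same decomposition gives $\normtwo{(\alpha\otimes\alpha)|_{\fh}}\ge\normtwo{(\theta_{\fg})|_{\fh}^{\oplus b(\alpha)}}+\normtwo{R|_{\fh}}=b(\alpha)^2\,\normtwo{(\theta_{\fg})|_{\fh}}+\normtwo{R|_{\fh}}$. Reducibility of $(\theta_{\fg})|_{\fh}$ forces $\normtwo{(\theta_{\fg})|_{\fh}}\ge 2$ (a reducible semisimple representation has one-norm at least $2$, hence two-norm at least $2$), while the monotonicity $\normtwo{R|_{\fh}}\ge\normtwo{R}$ follows from $\comm(R)\subseteq\comm(R|_{\fh})$ together with Lemma~\ref{reprtheory}. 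Since $\normtwo{\alpha\otimes\alpha}=b(\alpha)^2+\normtwo{R}$, I obtain $\normtwo{(\alpha\otimes\alpha)|_{\fh}}\ge 2b(\alpha)^2+\normtwo{R}=b(\alpha)^2+\normtwo{\alpha\otimes\alpha}$, giving (2) and hence (3).

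The only real obstacle is the multiplicity statement itself; everything after it is the machinery already set up in Sections~\ref{one_and_two} and \ref{use_adjoint}. Establishing that the adjoint occurs exactly $b(\alpha)$ times in $\alpha\otimes\bar\alpha$ amounts to computing $\dim\operatorname{Hom}_{\fg}(V_\lambda,\fg\otimes V_\lambda)$, which equals the number of simple roots not orthogonal to $\lambda$, i.e.\ the number of non-zero Dynkin labels; this is precisely the ingredient I would take from King and Wybourne\cite{KW96} rather than rederive. I would also flag that the hypothesis must be read with $\fh$ a \emph{proper} subalgebra, since for $\fh=\fg$ the claimed positive gaps $b(\alpha)$ and $b(\alpha)^2$ cannot hold.
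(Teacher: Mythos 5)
Your proof is correct, and it follows the same basic strategy as the paper: feed the King--Wybourne multiplicity count into the fact that the adjoint representation of $\fg$ splits when restricted to a proper subalgebra, then do the one-/two-norm bookkeeping. The difference is in the route. The paper proves an intermediate result (Lemma~\ref{lem:gap_1}) for an \emph{arbitrary} representation $\phi$ of a compact semisimple $\fg$, stated in terms of the multiplicity $m$ of the adjoint in $\phi\otimes\bar\phi$; it derives the bounds first for $\phi\otimes\bar\phi$ and then transfers them to $\phi\otimes\phi$ via Proposition~\ref{one_norm_conjugated} (the Coquereaux--Zuber generalization) for the one-norm and Proposition~\ref{two_norm_conjugated} (the partial-transpose argument) for the two-norm; Proposition~\ref{bound_self_dual} is then the specialization $m=b(\alpha)$. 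You instead invoke self-duality at the very start, so that $\alpha\otimes\alpha\cong\alpha\otimes\bar\alpha$ and you can work with $\alpha\otimes\alpha$ directly: this makes the transfer steps vacuous, and in particular your argument for this proposition is independent of the Coquereaux--Zuber theorem, which is worth noting since the paper singles that theorem out as lacking a classification-free proof. Your two-norm step is also organized slightly differently: you apply superadditivity (Proposition~\ref{sumrules}(ii)) at the level of the isotypic decomposition $\theta_\fg^{\oplus b(\alpha)}\oplus R$ and use $\normtwo{(\theta_\fg)|_\fh}\ge 2$, whereas the paper descends to simple constituents $\nu_i$ and isolates one $\nu_p$ whose restriction splits (needed because in Lemma~\ref{lem:gap_1} the algebra is only semisimple, so $\theta_\fg$ need not be simple). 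What each approach buys: the paper's lemma is reusable beyond the self-dual case (any $\phi$ with adjoint multiplicity $m$ in $\phi\otimes\bar\phi$ gets a gap of $m$, resp.\ $m^2$), while yours is leaner and self-contained for the statement at hand. Two small points in your write-up are fine but worth making explicit: the monotonicity $\normtwo{R|_\fh}\ge\normtwo{R}$ via commutant inclusion needs $R|_\fh$ to be semisimple, which is Lemma~\ref{restr_semisimple}; and your remark that $\fh$ must be read as a \emph{proper} subalgebra matches the hypothesis of the paper's Lemma~\ref{lem:gap_1}, so that caveat is consistent with the intended statement.
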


Using the structures of $\fg$ and its representation $\phi$,
it would be interesting to derive bounds which are 
better or which are applicable in other cases.

\begin{acknowledgments}
R.Z. acknowledges support from the {\em Deutsche Forschungsgemeinschaft}
({\sc dfg}) in the collaborative research center {\sc sfb} 631 as well as from
the EU programmes {\sc siqs} and {\sc quaint}.
Z.Z. acknowledges support from
the British {\em Engineering and Physical Sciences Research Council} ({\sc epsrc}).
\end{acknowledgments}

\appendix

\section{Preliminaries\label{app_pre}}

In this appendix, we recall some basic facts about compact Lie algebras
and their representations. Connections to the complexifications
are highlighted in Appendix~\ref{compact_Lie}. This is followed by the well-known classification of simple
representations into symplectic, orthogonal, and unitary ones. 
The definition and discussion of alternating and symmetric squares are given in
Appendix~\ref{def_alt_sym}.

\subsection{Compact Lie Algebras and their Representations\label{compact_Lie}}

Any compact (real) Lie algebra can be written as a direct sum of two compact Lie algebras
where one is semisimple and the other one is commutative (see Chap.~IX, Sec.~1.3, Prop.~1 of 
Ref.~\onlinecite{Bourb08b} and Chap.~I, Sec.~6.4, Prop.~5 of 
Ref~\onlinecite{Bourb89}). The semisimple Lie algebra can be further decomposed into
a direct sum of compact, simple Lie algebras which
consists of the classical ones $\su(\ell+1)$, $\so(2\ell+1)$, $\usp(\ell):=\usp(2\ell,\C) \cap \su(2\ell)$,
and $\so(2\ell)$ where $\ell \in \N\setminus\{0\}$ 
(excluding $\so(2)$ and $\so(4) \cong \su(2) \oplus \su(2)$ which are not simple) as well as the exceptional ones
$\fg_2$, $\ff_4$, $\fe_6$, $\fe_7$, $\fe_8$ (see Ref.~\onlinecite{Bourb08b}). 
Note the isomorphisms $\su(2) \cong \so(3) \cong \usp(1)$, $\so(5) \cong \usp(2)$, and 
$\su(4) \cong \so(6)$ (see, e.g., Thm.~X.3.12 in Ref.~\onlinecite{Helgason78}).  

We briefly recall some elementary facts about the complexification of a real Lie algebra
which will be directly applied to compact Lie algebras.
Let $\phi$ denote a (finite-dimensional) complex matrix representation
of a real Lie algebra $\fk$, where $\phi$ maps a Lie-algebra element $k \in \fk$ to a 
square matrix $\phi(k) \in \gl(d,\C)$ of finite degree $d:=\dim(\phi)$,
where $\gl(d,\C)$ denotes
the set of complex $d\times d$-matrices. All representations
in this manuscript are matrix representations with complex matrix entries.
For every real Lie algebra $\fk$, a representation
$\phi$ of $\fk$ in $\gl(d,\C)$ can be naturally extended to 
a representation $\phi_\C$ of $\fk_\C$ in $\gl(d,\C)$, where 
$\fk_{\C}:=\fk + i \fk$ is the complexification of $\fk$.

\begin{lemma}\label{complex}
Consider a real Lie algebra $\fk$
and its complexification $\fk_{\C}$. 
Let $\tau$ denote a representation of $\fk_\C$. Moreover, let $\phi$ and $\psi$ denote two 
representations of $\fk$, and 
$\phi_\C$ and $\psi_\C$ the corresponding representations of $\fk_\C$. The following statements hold:\\
(i) The representations $\phi$ and $\psi$ are isomorphic if and only if the representations
$\phi_{\C}$ and $\psi_{\C}$ are isomorphic.\\
(ii) For the representation $\tau$ of $\fk_{\C}$, there exists a representation $\eta$ of $\fk$ such that  $\tau=\eta_{\C}$.\\
(iii) $\phi$ is simple (i.e., irreducible) if and only if $\phi_{\C}$ is simple.\\
(iv) $\phi$ is semisimple (i.e., completely reducible) if and only if $\phi_{\C}$ is semisimple.\\
(v) If two real Lie algebras are isomorphic, then so are their complexifications.\\
(vi) $\fk$ is abelian if and only if $\fk_{\C}$ is abelian.\\
(vii) $\fk$ is simple if $\fk_{\C}$ is simple.\\
(viii) $\fk$ is semisimple if and only if $\fk_{\C}$ is semisimple.\\
(ix) $\fk$ is reductive if and only if $\fk_{\C}$ is reductive.
\end{lemma}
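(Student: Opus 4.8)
The plan is to derive all nine parts from the single structural fact that the complexification $\fk_{\C}=\fk+i\fk$ arises by complex-bilinear extension, so that the bracket and every representation extend uniquely and compatibly. Concretely, a representation $\phi$ of $\fk$ extends to $\phi_{\C}(k_1+ik_2):=\phi(k_1)+i\phi(k_2)$, with image $\phi_{\C}(\fk_{\C})=\phi(\fk)+i\,\phi(\fk)$. The observation I would use repeatedly is that a \emph{complex} subspace $W$ of the representation space is invariant under all $\phi(k)$ if and only if it is invariant under all $\phi_{\C}(x)$: the ``only if'' holds because $W$ is stable under multiplication by $i$, and the ``if'' is trivial since $\fk\subseteq\fk_{\C}$.

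First I would dispatch (i)--(iv). For (i), an intertwiner $T$ with $T\phi(k)=\psi(k)T$ for all $k\in\fk$ satisfies $T\phi_{\C}(x)=\psi_{\C}(x)T$ for all $x\in\fk_{\C}$ by complex-linearity, while restriction to $\fk$ recovers an intertwiner of $\phi$ and $\psi$; invertibility is untouched. For (ii), given $\tau$ on $\fk_{\C}$, set $\eta:=\tau|_{\fk}$, and complex-linearity of $\tau$ forces $\tau=\eta_{\C}$. Parts (iii) and (iv) then follow at once from the invariant-subspace correspondence: $\phi$ and $\phi_{\C}$ have literally the same invariant complex subspaces, hence the same irreducible, respectively semisimple, decompositions.

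Next I would treat the structural parts. For (v), a Lie isomorphism $f:\fk\to\fk'$ extends complex-linearly to a bijection $f_{\C}:\fk_{\C}\to\fk'_{\C}$ preserving the bracket by bilinearity. For (vi), if $\fk$ is abelian then $[k_1+ik_2,k_3+ik_4]=([k_1,k_3]-[k_2,k_4])+i([k_1,k_4]+[k_2,k_3])$ vanishes, so $\fk_{\C}$ is abelian, and the converse is immediate from $\fk\subseteq\fk_{\C}$. For (vii) I would argue by contrapositive: a proper nonzero ideal $\fa\subsetneq\fk$ yields an ideal $\fa_{\C}=\fa+i\fa$ of $\fk_{\C}$, nonzero and proper because $\dim_{\C}\fa_{\C}=\dim_{\R}\fa<\dim_{\R}\fk=\dim_{\C}\fk_{\C}$; combined with (vi) this gives that $\fk_{\C}$ simple implies $\fk$ simple. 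Only this direction holds, since a complex simple Lie algebra regarded as a real one is real-simple yet complexifies to a direct sum of two copies of itself. For (viii) I would invoke Cartan's criterion: the Killing form $B_{\fk_{\C}}$ is the complex-bilinear extension of $B_{\fk}$, so in a real basis of $\fk$, which is simultaneously a complex basis of $\fk_{\C}$, the two forms share a Gram matrix, and nondegeneracy, hence semisimplicity, transfers in both directions.

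Finally, for (ix) I would characterize reductivity by complete reducibility of the adjoint representation, so that $\fk$ is reductive exactly when $\ad_{\fk}$ is semisimple. Since $\ad_{\fk_{\C}}$ is precisely the complexification $(\ad_{\fk})_{\C}$ acting on the module $\fk_{\C}$, part (iv) gives that $\ad_{\fk}$ is semisimple iff $\ad_{\fk_{\C}}$ is, yielding the equivalence. The only genuinely delicate points are (vii), where the asymmetry must be explained through realification, and (ix), where one must select the adjoint-representation characterization of reductivity rather than the radical-equals-center definition so that the statement descends cleanly through (iv); everything else is routine bookkeeping with the complexification map.
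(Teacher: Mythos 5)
Your proposal is correct in substance and takes a genuinely different, more self-contained route than the paper. The paper only gives arguments for (i)--(v): it dismisses (i) as trivial, does (ii) and (v) exactly as you do, proves (iii) by a commutant argument (non-simplicity of $\phi$ is claimed to yield two linearly independent matrices commuting with $\phi(\fk)$, hence with $\phi_{\C}(\fk_{\C})$, contradicting Schur's lemma), deduces (iv) by decomposing into simple summands and invoking (ii), and then simply cites Bourbaki for (vi)--(ix). Your unifying observation --- that a \emph{complex} subspace is $\phi(\fk)$-invariant iff it is $\phi_{\C}(\fk_{\C})$-invariant --- settles (iii) and (iv) in one stroke, and it is in fact tighter than the paper's treatment of (iii): the claim that a non-simple representation has a commutant of dimension at least two is automatic for semisimple representations but not for arbitrary ones (the defining representation of the two-dimensional non-abelian Lie algebra of upper-triangular matrices is reducible yet has scalar commutant), whereas your subspace correspondence needs no such hypothesis. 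Likewise your explicit proofs of (vi)--(viii) --- bilinearity of the bracket, complexification of a proper ideal together with the count $\dim_{\C}\fa_{\C}=\dim_{\R}\fa$, and the shared Gram matrix of the two Killing forms --- are sound and replace the paper's references; your side remark on the one-sidedness of (vii) is also correct.

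The one genuine gap is in (ix). Part (iv) is a statement about \emph{complex} matrix representations of $\fk$, but the adjoint representation $\ad_{\fk}$ is a real representation on the real vector space $\fk$, so (iv) does not literally apply to it. What your argument needs is the bridge fact that a real representation is semisimple iff its scalar extension to $\C$ (i.e., $\fk$ acting on $\fk_{\C}$) is semisimple; this is true --- semisimplicity is preserved under separable scalar extension, proved for instance by playing complex submodules against their conjugates --- but it is not contained in (i)--(viii) and has genuine content, so as written your appeal to (iv) conflates the real and complex module categories. The cleanest repair inside your own framework is to replace the adjoint-representation characterization by the equivalent one that $\fk$ is reductive iff $[\fk,\fk]$ is semisimple: since $[\fk_{\C},\fk_{\C}]=[\fk,\fk]_{\C}$ by bilinearity, part (viii) applied to $[\fk,\fk]$ then yields (ix) immediately, with no real-versus-complex subtleties.
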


\begin{proof} 
Part (i) is trivial. In order to prove (ii), we set $\eta:=\tau |_{\fk}$. The $\R$-linearity 
of $\eta$ follows from the $\C$-linearity of $\tau$. We obtain
$\eta_{\C}(\fk_{\C})=c_1 \eta(\fk) + c_2 \eta(\fk)= c_1 (\tau |_{\fk})(\fk) + c_2 (\tau |_{\fk})(\fk)
=c_1 \tau(\fk) + c_2 \tau(\fk)  = \tau(c_1 \fk+ c_2\fk)= \tau(\fk+ i\fk) = \tau(\fk_{\C})$ with $c_j \in \C$,
and part (ii) follows. We address now part (iii): If $\phi$ is simple, so is $\phi_{\C}$. Assume that $\phi_{\C}$ 
is simple and $\phi$ is not simple. Thus, there exists two linear-independent complex matrices $M_1$ and $M_2$
which commute with $\phi(\fk)$, $i\phi(\fk)$, and $\phi(\fk)+i\phi(\fk)=\phi_{\C}(\fk_{\C})$. This proves
(iii) by contradiction. Given that $\phi$ is semisimple, we obtain that $\phi=\oplus_{k}\phi_k$ with $\phi_k$ simple.
It follows that $\phi_{\C}$ is semisimple as it is isomorphic to $\oplus_{k} (\phi_k)_{\C}$. Assuming that
$\phi_{\C}$ is semisimple, we have $\phi_{\C}=\oplus_k \tau_k$ where $\tau_k$ are
simple representations of $\fk_{\C}$. Applying (ii) we obtain (complex) representations $\eta_k$ with
$\tau_k=(\eta_k)_{\C}$. The representation $\phi$ is semisimple as it is isomorphic to $\oplus_k \eta_k$,
which completes the proof of (iv). Assume that the real Lie algebras $\fk$ and $\fk'$ 
are isomorphic. It follows that $\fk_{\C}=\fk+i\fk\cong\fk'+i\fk'=\fk'_{\C}$ which proves (v).
For (vi)-(ix), we refer to Chap.~I, Sec.~1.9 and Sec.~6.10 of Ref.~\onlinecite{Bourb89}.
\end{proof}

We can sharpen Lemma~\ref{complex} if we assume that the considered real Lie algebra is compact.
Two compact Lie algebras $\fg$ and $\fg'$ are isomorphic iff their complexifications $\fg_{\C}$ and $\fg'_{\C}$
are isomorphic (see Cor.~1 of Thm.~1 in Chap.~IX, Sec.~3.3 of Ref.~\onlinecite{Bourb08b}). 
Referring to p.~283 and p.~300 of Ref.~\onlinecite{Bourb08b}, we also obtain that
a compact Lie algebra $\fg$ and its complexification $\fg_{\C}$ are reductive 
as well as that $\fg$ is simple (resp.\ semisimple) if and only if $\fg_{\C}$ is simple (resp.\ semisimple).
\begin{lemma}\label{complex_lemma}
Consider two compact Lie algebras $\fg$ and $\fg'$ 
as well as their complexifications $\fg_{\C}$ and $(\fg')_{\C}$. 
We obtain that
(i) $\fg$ is isomorphic to $\fg'$ iff
$\fg_{\C}$ is isomorphic to $(\fg')_{\C}$
as well as (ii) $\fg$ is simple iff $\fg_{\C}$ is simple.
\end{lemma}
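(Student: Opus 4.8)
The plan is to obtain both equivalences by upgrading the one-directional statements of Lemma~\ref{complex} with the structural input already recorded in the paragraph preceding the lemma, so that essentially no new computation is required.

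For part (i), the forward implication is nothing but Lemma~\ref{complex}(v), which holds for arbitrary real Lie algebras: an isomorphism $\fg \to \fg'$ extends $\C$-linearly to an isomorphism $\fg_{\C} \to (\fg')_{\C}$. The converse is the substantive direction, and here I would first fix an isomorphism $\Phi \colon \fg_{\C} \to (\fg')_{\C}$ and observe that it carries the compact real form $\fg \subset \fg_{\C}$ onto a compact real form $\Phi(\fg)$ of $(\fg')_{\C}$. The key structural fact, which is exactly Cor.~1 of Thm.~1 in Chap.~IX, Sec.~3.3 of Ref.~\onlinecite{Bourb08b}, is that the compact real form of a reductive complex Lie algebra is unique up to isomorphism (indeed up to conjugation by an inner automorphism); recall that $\fg_{\C}$ is reductive since $\fg$ is compact. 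Hence $\Phi(\fg)$ is isomorphic to the given compact real form $\fg'$, so that $\fg \cong \Phi(\fg) \cong \fg'$.

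For part (ii), the implication ``$\fg_{\C}$ simple $\Rightarrow$ $\fg$ simple'' is Lemma~\ref{complex}(vii). The converse fails for general real Lie algebras---the standard obstruction being that a complex simple Lie algebra regarded as a real one, e.g.\ $\sll(2,\C)$ viewed over $\R$, has non-simple complexification $\sll(2,\C) \oplus \sll(2,\C)$---so one genuinely needs compactness. Here I would invoke the statements on p.~283 and p.~300 of Ref.~\onlinecite{Bourb08b}, according to which a compact Lie algebra $\fg$ is simple precisely when $\fg_{\C}$ is simple; this closes the remaining direction.

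The main obstacle is entirely concentrated in the converse of (i): everything rests on the uniqueness (up to isomorphism) of compact real forms, which is the deep Cartan-type input cited from Ref.~\onlinecite{Bourb08b}. The whole content of the lemma is to phrase that result, together with its simplicity counterpart from part (ii), as a clean dictionary between compact Lie algebras and their complexifications; once those references are in place, no further argument is needed.
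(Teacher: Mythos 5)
Your proposal is correct and takes essentially the same route as the paper: the paper's justification is exactly the paragraph preceding the lemma, which quotes Cor.~1 of Thm.~1 in Chap.~IX, Sec.~3.3 of Ref.~\onlinecite{Bourb08b} for part (i) and pp.~283 and 300 of the same reference for part (ii) --- precisely the facts you invoke, with the forward directions covered by Lemma~\ref{complex}(v) and (vii). The only differences are cosmetic: you unfold the cited corollary into the uniqueness-of-compact-real-forms argument rather than quoting it as the equivalence itself, and your parenthetical claim of uniqueness ``up to conjugation by an inner automorphism'' holds only for the semisimple part (inner automorphisms act trivially on the center, so distinct real forms of the abelian factor need not be conjugate), but this does not matter since your argument uses only uniqueness up to isomorphism.
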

As a consequence of the previous discussion, certain questions  about a compact Lie algebra 
$\fg$ (and its representations) will be answered by switching to
its complexification $\fg_{\C}$ without further comment.---Although a compact Lie algebra might
have a  representation which is not semisimple (e.g., consider a one-dimensional
Lie algebra with a representation which maps
a Lie-algebra element $\lambda$ to the matrix
$
\left(
\begin{smallmatrix}
0 & 0 \\
\lambda & 0
\end{smallmatrix}
\right)
$), any restriction of a semisimple
representation to one of its ideals or subalgebras stays semisimple.
\begin{lemma}\label{restr_semisimple}
Let us consider a semisimple (complex) representation $\phi$ of a compact Lie algebra $\fg$.
The restriction $\phi|_{\fh}$ of $\phi$ to any subalgebra $\fh$ of $\fg$ is still semisimple.
\end{lemma}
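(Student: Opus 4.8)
The plan is to reduce the statement to an elementary fact about self\nb adjoint operator algebras, after isolating the role of the centre. The essential difficulty to keep in mind is that one \emph{cannot} simply unitarize $\phi$ on all of $\fg$: as the example following Theorem~\ref{Thm_semsimple} shows, a semisimple representation of a compact Lie algebra may send central elements to operators with non\nb imaginary eigenvalues, so $\phi(\fg)$ need not be conjugate to skew\nb Hermitian matrices. The remedy is to unitarize only the semisimple part of $\fg$ and to treat the centre separately. Throughout, let $V$ denote the representation space of $\phi$.

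First I would reduce to the case where $\phi$ is simple. Writing $\phi\cong\oplus_{i}\phi_i$ with each $\phi_i$ simple (Lemma~\ref{reprtheory} of Appendix~\ref{app_pre}), the restriction distributes as $\phi|_{\fh}\cong\oplus_i(\phi_i)|_{\fh}$, and a direct sum of semisimple representations is again semisimple; hence it suffices to show that $(\phi_i)|_{\fh}$ is semisimple for each simple $\phi_i$. So assume $\phi$ is simple and write $\fg=\fs\oplus\fz$ for the decomposition into its semisimple part $\fs$ and its centre $\fz$ (see Appendix~\ref{compact_Lie}).

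Next I would invoke two standard inputs. By Schur's lemma (irreducibility being taken over $\C$) the central elements act by scalars, $\phi(z)=\mu(z)\,\unity$ for a linear functional $\mu\colon\fz\to\C$. Since $\fs$ is a compact semisimple Lie algebra, every representation of $\fs$ is unitarizable (Weyl's unitarian trick), so there is a Hermitian inner product $\langle\cdot,\cdot\rangle$ on $V$ with respect to which $\phi(s)$ is skew\nb Hermitian for all $s\in\fs$. Each $h\in\fh$ decomposes as $h=s_h+z_h$ with $s_h\in\fs$ and $z_h\in\fz$, whence $\phi(h)=\phi(s_h)+\mu(z_h)\,\unity$, a skew\nb Hermitian operator plus a scalar.

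Finally I would pass to the unital associative algebra $\mathcal{A}\subseteq\mathrm{End}(V)$ generated by $\phi(\fh)$. Since $\unity\in\mathcal{A}$, each $\phi(s_h)=\phi(h)-\mu(z_h)\,\unity$ lies in $\mathcal{A}$, so $\mathcal{A}$ is generated by $\unity$ together with the skew\nb Hermitian operators $\{\phi(s_h):h\in\fh\}$; as the adjoint of every generator again lies in $\mathcal{A}$, the algebra $\mathcal{A}$ is closed under taking adjoints. For such a self\nb adjoint subalgebra the orthogonal complement of any invariant subspace is invariant: if $aW\subseteq W$ with $a^*\in\mathcal{A}$, then $\langle aw',w\rangle=\langle w',a^*w\rangle=0$ for $w'\in W^{\perp}$ and $w\in W$, so $aW^{\perp}\subseteq W^{\perp}$. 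Because the $\fh$\nb invariant subspaces of $V$ are exactly the $\mathcal{A}$\nb invariant ones, it follows that $V$ is completely reducible and hence $\phi|_{\fh}$ is semisimple. I expect the step requiring the most care to be the separation of the central scalar part from the skew\nb Hermitian part in the third paragraph, which is precisely the point where a naive unitarization of $\phi$ on all of $\fg$ would break down.
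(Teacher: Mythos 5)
Your proof is correct, and it takes a genuinely different route from the paper's. The paper decomposes the \emph{subalgebra} $\fh=\fh_s\oplus\fh_a$ into its semisimple and abelian parts and then quotes two nontrivial Bourbaki results: the criterion that a representation of a reductive Lie algebra is semisimple if and only if the image of its abelian part consists of semisimple (diagonalizable) matrices, and the fact that a semisimple representation of the reductive algebra $\fg$ maps any Cartan subalgebra $\ft_{\fg}$ (into which $\fh_a$ is embedded as a maximal abelian subalgebra contains it) to semisimple matrices. You instead decompose the \emph{ambient} algebra $\fg=\fs\oplus\fz$, reduce to $\phi$ simple, make the centre act by scalars via Schur's lemma, unitarize only on the compact semisimple part $\fs$ by Weyl's trick, and conclude with the elementary observation that the unital associative algebra generated by $\phi(\fh)$ is then $*$-closed, so orthogonal complements of invariant subspaces are invariant. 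Both proofs must circumvent the same obstruction, which you correctly identify: $\phi$ need not be unitarizable on the centre of $\fg$ (the paper's two-matrix counterexample after Theorem~\ref{Thm_semsimple} shows exactly this), and the paper handles it through diagonalizability while you handle it through the skew-Hermitian-plus-scalar decomposition $\phi(h)=\phi(s_h)+\mu(z_h)\unity$, which is legitimate precisely because you first reduced to $\phi$ simple. What each approach buys: the paper's argument is shorter modulo the citations and needs no reduction to the simple case, since the diagonalizability criterion treats all isotypic components at once; yours is essentially self-contained (Schur's lemma, Weyl's unitarian trick, and inner-product linear algebra), makes transparent where compactness and semisimplicity enter, and avoids invoking the structure theory of Cartan subalgebras and the semisimplicity criterion for reductive Lie algebras altogether.
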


\begin{proof}
Note that the corresponding conclusion  for ideals 
holds for arbitrary  Lie algebras of characteristic zero, see
Chap.~I, Sec.~6.5, Cor.~4 to Thm.~4 of Ref.~\onlinecite{Bourb89}. In order to prove the statement,
we apply that both $\fg$ and its subalgebra $\fh$ are compact and consequently reductive. We have
$\fh=\fh_s \oplus \fh_a$ where $\fh_s$ is semsimple and $\fh_a$ is abelian. 
The representation $\phi|_\fh$ is semsimple if and only if the set $\phi(\fh_a)$ contains only
semisimple matrices
(i.e., all elements of $\phi(\fh_a)$ are diagonalizable over $\C$), 
see Chap.~I, Sec.~6.5, Thm.~4 of Ref.~\onlinecite{Bourb89}. 
We remark that $\fh_a$ is contained 
in a Cartan subalgebra $\ft_{\fg}$ of the compact Lie algebra $\fg$ as $\ft_{\fg}$ is a maximal 
abelian subalgebra (see  Chap.~IX, Sec.~2.1, Thm.~1 of Ref.~\onlinecite{Bourb08b}). 
In particular, $\ft_{\fg}$ is abelian and $\phi(\ft_{\fg})$ contains only
semisimple matrices if $\phi$ is semisimple and $\fg$ is reductive 
(see  Chap.~VII, Sec.~2.4, Cor.~3 to Thm.~2 of Ref.~\onlinecite{Bourb08b}).
Thus, 
$\phi(\fh_a)$ also contains only semisimple matrices which completes the proof.
\end{proof}

We also state some elementary consequences for the representations of compact Lie
algebras from the representation theory of compact groups (see Theorem~1.5 of Ref.~\onlinecite{Ledermann}).

\begin{lemma}\label{reprtheory}
Let $\phi$ denote a  semisimple (complex) representation 
of a compact Lie algebra $\fg$. The degree of $\phi$ is given by $d:=\dim(\phi)$.
The commutant algebra $\comm(\phi)$ is defined as all complex $d \times d$-matrices which
simultaneously
commute with $\phi(g)$ for all $g\in \fg$.
We obtain that
$\phi$ is equivalent to
$\oplus_{i\in \mathcal{I}} \left[ \unity_{m_i} \otimes \phi_i \right]$
where the distinct inequivalent simple representations $\phi_i$ of $\fg$ have 
degree $d_i$ and occur with multiplicity $m_i$ in the decomposition of  $\phi$.
In particular,
(a) $\dim \comm(\phi)=\sum_{i\in\mathcal{I}} m_i^2$,
(b) $\dim \centre(\comm(\phi))=\abs{\mathcal{I}}=\sum_{i\in \mathcal{I}, m_i \neq 0} 1$, and
(c) $d= \sum_{i\in\mathcal{I}} d_i m_i$.
\end{lemma}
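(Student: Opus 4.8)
The plan is to obtain the stated decomposition from complete reducibility and then read off the three numerical identities using Schur's lemma. First I would use that $\phi$ is semisimple, hence completely reducible, to write it as a direct sum of simple subrepresentations; collecting mutually isomorphic simple constituents into isotypic blocks produces exactly the asserted form $\phi \cong \oplus_{i\in\mathcal{I}} \left[ \unity_{m_i} \otimes \phi_i \right]$, where the $\phi_i$ are the distinct inequivalent simple representations of degree $d_i$ and $m_i$ is the multiplicity of $\phi_i$. Equivalently, one may simply invoke the cited Theorem~1.5 of Ref.~\onlinecite{Ledermann} to produce this decomposition.

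The core step is to compute $\comm(\phi)$ via Schur's lemma over $\C$. Because all representations here are finite-dimensional and irreducibility is taken over the algebraically closed field $\C$, any matrix commuting with a simple representation is a scalar multiple of the identity, while any intertwiner between two inequivalent simple representations vanishes. Applied to the isotypic form, this forces every element of $\comm(\phi)$ to be block diagonal, with the $i$-th block of the shape $A_i \otimes \unity_{d_i}$ for some $A_i \in \gl(m_i,\C)$; conversely, every such block-diagonal matrix commutes with $\phi$. Hence $\comm(\phi) \cong \oplus_{i\in\mathcal{I}} \gl(m_i,\C)$ as an algebra.

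The three identities then follow by direct bookkeeping. For (a), $\dim\comm(\phi) = \sum_{i\in\mathcal{I}} \dim \gl(m_i,\C) = \sum_{i\in\mathcal{I}} m_i^2$. For (b), the center of $\oplus_{i\in\mathcal{I}} \gl(m_i,\C)$ is $\oplus_{i\in\mathcal{I}} \C\,\unity_{m_i}$, so that $\dim \centre(\comm(\phi)) = \abs{\mathcal{I}}$, under the convention that $\mathcal{I}$ indexes precisely those components with $m_i \neq 0$. For (c), the block $\unity_{m_i} \otimes \phi_i$ has degree $m_i d_i$, whence $d = \sum_{i\in\mathcal{I}} m_i d_i$.

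This is the classical structure theorem for commutants of completely reducible complex representations, so no genuine obstacle arises. The only point deserving care is the invocation of Schur's lemma in this Lie-algebra setting: one must use that ``simple'' is understood as irreducibility over $\C$ and that the algebraic closedness of $\C$ collapses the commuting division algebra of a simple representation to the scalars. Since the preceding discussion already fixes irreducibility over $\C$ and attributes the result to standard compact-group representation theory, this presents no real difficulty.
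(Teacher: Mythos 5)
Your proof is correct, and all three identities follow exactly as you say. The one structural difference worth noting is that the paper itself offers no proof of this lemma at all: it is introduced as an ``elementary consequence'' of compact-group representation theory and disposed of by citing Theorem~1.5 of Ledermann, i.e.\ the analogous structure theorem for representations of compact groups. Your argument instead works directly at the Lie-algebra level: complete reducibility gives the isotypic decomposition $\phi \cong \oplus_{i\in\mathcal{I}}\left[\unity_{m_i}\otimes\phi_i\right]$, and Schur's lemma over the algebraically closed field $\C$ forces every intertwiner between inequivalent simples to vanish and every self-intertwiner of a simple to be scalar, yielding $\comm(\phi)\cong\oplus_{i\in\mathcal{I}}\gl(m_i,\C)$ realized as blocks $A_i\otimes\unity_{d_i}$, from which (a), (b), (c) are immediate bookkeeping. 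This is arguably cleaner than the paper's citation route, since it never needs to integrate the Lie-algebra representation to a group representation; it uses only that $\phi$ is a finite-dimensional completely reducible complex representation, which is precisely the ``semisimple'' hypothesis. You were also right to flag the two points of care: irreducibility is over $\C$ (so the commuting division algebra collapses to scalars), and $\mathcal{I}$ must be read as indexing the components with $m_i\neq 0$ in part (b), matching the paper's convention $\abs{\mathcal{I}}=\sum_{i\in\mathcal{I},\, m_i\neq 0}1$.
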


Recall that 
a  bilinear form $B$ on 
$\C^{\dim(\phi)}$  is denoted as $\phi(\fg)$-invariant iff $B(\phi(g) v_1,v_2)=\allowbreak -B(v_1, \phi(g) v_2)$ for 
all $g\in\fg$ and $v_i \in \C^{\dim(\phi)}$. Let $\mathcal{B}_\phi$ denote the vector space 
of $\phi(\fg)$-invariant bilinear forms.
The \emph{dual representation} $\bar{\phi}$ is given
as  $\bar{\phi}(g):=-\phi(g)^T$ where $g \in \fg$ (see Chap.~I, Sec.~3.3
of Ref~\onlinecite{Bourb89}); note that $\bar{\phi}(g)=-\phi(g)^T=\overline{\phi(g)}$ if
$\phi(g)$ is skew-hermitian. We call a representation  \emph{self-dual}
if it is equivalent to its dual representation. Assuming that $\phi$ is simple and self-dual, 
it follows that $\dim(\mathcal{B}_\phi)=1$ and every nonzero element of $\mathcal{B}_\phi$ is nondegenerate
(see Chap.~VIII, Sec.~7.5 and App.~II.2 of Ref.~\onlinecite{Bourb08b} or 
Lemma~13 in Ref.~\onlinecite{ZS11}). 
A self-dual, simple representation $\phi$ is called either
\emph{orthogonal} or \emph{symplectic} depending on whether the corresponding
$\phi(\fg)$-invariant, nondegenerate bilinear form is symmetric or alternating (i.e., skew-symmetric), respectively.
A simple representation $\phi$ of $\fg$ is orthogonal [or symplectic]
iff $\phi(\fg)$ conjugate to a subalgebra of $\so(\dim(\phi))$ [or $\usp(\dim(\phi)/2)$]
(cf.\ p.~336 of Ref.~\onlinecite{Dynkin57} or Thm.~H on p.~144 of Ref.~\onlinecite{Samelson90}).
If $\phi$ is simple but not self-dual, then no nondegenerate bilinear form is $\phi(\fg)$-invariant 
and we denote $\phi$ as \emph{unitary}. The orthogonal, symplectic, and
unitary representations are also referred to as representations of real, quaternionic, or complex type, respectively.

A simple representation of a compact semisimple Lie algebra $\fg$ can be further
characterized using the theory of the highest weight:\cite{Bourb08b}
the highest weight is an integer vector $(x_1,\ldots, x_{\ell})$ with $x_j\geq 0$
and identifies classes of simple representations where the coefficient
$x_j$ indicates the contribution of the $j$-th fundamental weight (see Chap.~VIII, Sec.~7.2 of Ref.~\onlinecite{Bourb08b}). The coefficients
are ordered according to a convention of Bourbaki
(see Plate~I--IX of Ref.~\onlinecite{Bourb08a} or Table~\ref{dyn_tab}).
The length $\ell$ is equal to the rank of $\fg$ which is defined
as the dimension of any maximal abelian subalgebra of $\fg$.
Using the notation of the highest weight, simple representations of compact, simple Lie algebras
can be explicitly classified as symplectic, orthogonal, or unitary by the work of
Malcev~\cite{Malcev50} (see Refs.~\onlinecite{Dynkin57,BP70,McKay81,Samelson90}):

{
\fontsize{10}{12}
\renewcommand{\baselinestretch}{1}
\selectfont
\begin{table}[tb]
\caption{\label{dyn_tab}Dynkin diagrams for the compact simple Lie algebras  [$\so(2\ell)$ is only simple for $\ell\geq 3$]}
\begin{ruledtabular}
\begin{tabular}{c}
\includegraphics[scale=.825]{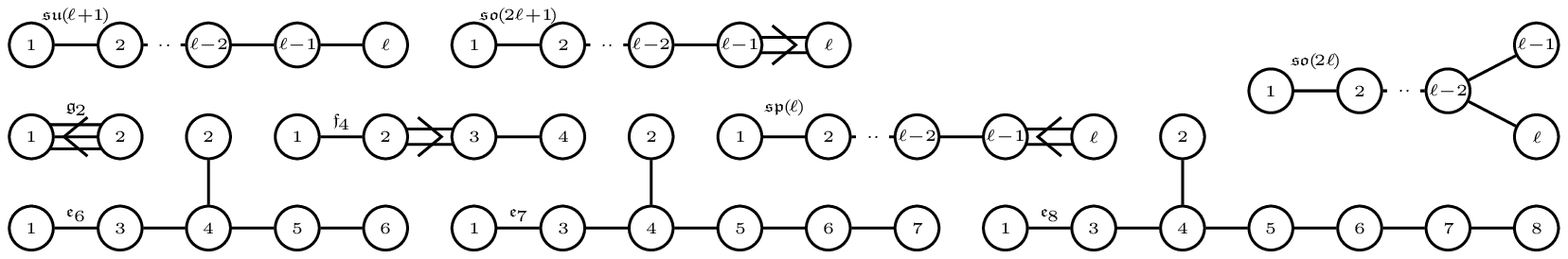}
\end{tabular}
\end{ruledtabular}
\end{table}
}

\begin{proposition}[Malcev]\label{Malcev}
A simple representation $\phi$ of a compact simple Lie algebra $\fg$
is either symplectic, orthogonal, or unitary depending on 
its highest weight $x=(x_1,\ldots,x_{\ell})$:
(1) Assuming that $\fg\cong\su(\ell+1)$, $\phi$ is unitary iff $x$ is not symmetric and 
symplectic iff $x$ is symmetric, $(\ell\bmod{4}) = 1$,
and $x_{((\ell-1)\, \mathrm{div}\, 2)+1}$ is odd; it is orthogonal
in all other cases. (2) In the case of $\fg\cong\so(2\ell+1)$ with $\ell \geq 3$,
we have that $\phi$ is symplectic iff $(\ell\bmod{4}) \in \{1,2\}$ and $x_\ell$ is odd,
while it is orthogonal otherwise. 
(3) For $\fg=\usp(\ell)$ with $\ell \geq 2$,
$\phi$ is symplectic if $\sum_{1\leq 2j+1 \leq \ell} x_{2j+1}$ is odd for $j\in\N\cup\{0\}$, 
while it is orthogonal otherwise.
(4) Assuming that $\fg=\so(2\ell)$ for $\ell \geq 3$, 
(a) $\phi$ is symplectic if $(\ell\bmod{4})=2$ and $x_{\ell-1}+x_{\ell}$ is odd,
(b) $\phi$ is orthogonal if either (i)
$(\ell\bmod{4})=2$ and $x_{\ell-1}+x_{\ell}$ is even,
(ii) $(\ell\bmod{4})=0$, or (iii)
$\ell$ is odd and $x_{\ell-1}= x_{\ell}$,
and (c) $\phi$ is unitary if $\ell$ is odd and $x_{\ell-1}\neq x_{\ell}$.
(5) The representation $\phi$ is always orthogonal for
$\fg \in \{\fg_2,\ff_4,\fe_8\}$. (6) Assuming that $\fg\cong\fe_6$,  $\phi$ is orthogonal iff $x_1=x_6$ 
and $x_3=x_5$; it is unitary in all other cases. (7) For $\fg\cong\fe_7$, $\phi$ is symplectic 
iff $x_2+x_5+x_7$ is odd and orthogonal otherwise. 
\end{proposition}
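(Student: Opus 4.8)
The plan is to recast the orthogonal/symplectic/unitary trichotomy through the Frobenius--Schur indicator. Passing from the compact simple $\fg$ to its associated simply connected compact group $G$ (legitimate since every complex simple representation of $\fg$ integrates to $G$), a simple representation $\phi=V_\lambda$ with highest weight $\lambda=(x_1,\ldots,x_\ell)$ is unitary, orthogonal, or symplectic exactly when its indicator $\nu(\phi)=\int_G \chi_\phi(g^2)\,dg$ equals $0$, $+1$, or $-1$, respectively. This matches the bilinear-form definitions recalled in Appendix~\ref{app_pre}: a self-dual simple $\phi$ carries a one-dimensional space $\mathcal{B}_\phi$ of invariant nondegenerate forms, and $\nu(\phi)=\pm1$ records whether the generating form is symmetric or alternating, while $\nu(\phi)=0$ signals $\phi\not\cong\bar\phi$ and hence the absence of any invariant nondegenerate form. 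The computation therefore splits into (i) deciding self-duality and (ii) for self-dual $\phi$, computing a single sign.

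For step (i), the dual $\bar\phi$ has highest weight $-w_0\lambda$ with $w_0$ the longest element of the Weyl group, so $\phi$ is self-dual iff $\lambda=-w_0\lambda$. The map $-w_0$ acts as an automorphism of the Dynkin diagram, which I would read off type by type: it is the identity for $\so(2\ell+1)$, $\usp(\ell)$, $\fe_7$, $\fe_8$, $\ff_4$, $\fg_2$, and for $\so(2\ell)$ with $\ell$ even (so every representation is self-dual there); it is the order-two diagram flip for $\su(\ell+1)$ and for $\fe_6$ (so self-duality means $\lambda$ is symmetric, i.e.\ palindromic, and $x_1=x_6$, $x_3=x_5$ for $\fe_6$); and for $\so(2\ell)$ with $\ell$ odd it swaps the two spinor nodes (self-duality means $x_{\ell-1}=x_\ell$). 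This already produces the unitary-versus-self-dual dichotomies in cases (1), (4), and (6), and confirms that all representations are self-dual in the remaining cases.

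For step (ii) I would use the standard evaluation of the indicator of a self-dual $V_\lambda$ through the central involution $z=\exp(2\pi\ri\,\rho^\vee)$, where $\rho^\vee=\tfrac12\sum_{\alpha>0}\alpha^\vee$ is the half-sum of positive coroots, giving
\[
\nu(V_\lambda)=\lambda(z)=(-1)^{\langle\lambda,\,2\rho^\vee\rangle}.
\]
One checks directly that $z$ is central and that $\langle\lambda,2\rho^\vee\rangle\in\Z$ for integral $\lambda$, so the sign is well defined; the normalization is fixed by $\su(2)$, where $\langle\lambda,2\rho^\vee\rangle=x_1$ recovers that the spin-$j$ representation is orthogonal for integer $j$ and symplectic for half-integer $j$. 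Since $\langle\lambda,2\rho^\vee\rangle=\sum_j x_j\langle\omega_j,2\rho^\vee\rangle$, the indicator depends only on the parity of $\sum_j x_j$ taken over those fundamental weights $\omega_j$ with $\langle\omega_j,2\rho^\vee\rangle$ odd, i.e.\ $\lambda(z)$ is the value of the appropriate central character. The entire content of cases (1)--(7) then reduces to determining, for each simple type, exactly which fundamental weights are ``odd'' in this sense, using Bourbaki's root data and ordering of the $\omega_j$. I expect this final bookkeeping to be the main obstacle: the arithmetic of $\langle\omega_j,2\rho^\vee\rangle\bmod 2$ is what makes the set of relevant indices vary periodically with $\ell$, producing the $(\ell\bmod4)$ conditions of (1)--(4) (for instance $\langle\omega_\ell,2\rho^\vee\rangle$ is odd for $\so(2\ell+1)$ precisely when $\ell\equiv 1,2\pmod 4$, reproducing the spinor reality pattern), while $\fg_2$, $\ff_4$, $\fe_8$ (trivial center) and $\fe_6$ (center of odd order $3$) admit no central involution, forcing $z=\unity$ and hence $\nu\equiv+1$ on self-dual representations. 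Care with conventions, and with combining the self-duality constraint and the parity in the mixed cases, is the only real difficulty.
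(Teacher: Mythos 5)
The paper contains no internal proof of this proposition to compare against: it is quoted as Malcev's classical result, with the criteria taken from Refs.~\onlinecite{Malcev50,Dynkin57,BP70,McKay81,Samelson90}. Your proposal therefore supplies an actual derivation where the paper offers only a citation, and the derivation you sketch is the correct classical one. Its two pillars are sound: self-duality of $V_\lambda$ is equivalent to $\lambda=-w_0\lambda$, with $-w_0$ acting as a diagram automorphism that is nontrivial exactly for $\su(\ell{+}1)$, $\so(2\ell)$ with $\ell$ odd, and $\fe_6$ (this yields the unitary branches of cases (1), (4c), and (6)); and for self-dual $\lambda$ the symmetric/alternating alternative is $(-1)^{\langle\lambda,2\rho^\vee\rangle}$, the central character evaluated at the central involution $z=\exp(2\pi\ri\,\rho^\vee)$. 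Your spot checks are right, and the remaining parities do come out as the proposition demands: $\langle\omega_j,2\rho^\vee\rangle=j(\ell{+}1{-}j)$ in type $A_\ell$, which for palindromic $\lambda$ reduces to the middle coefficient together with $\ell\equiv 1\pmod 4$; $\tfrac{\ell(\ell+1)}{2}$ at the spinor node of $\so(2\ell{+}1)$; $\langle\omega_j,2\rho^\vee\rangle\equiv j\pmod 2$ in type $C_\ell$; $\tfrac{\ell(\ell-1)}{2}$ at both half-spinor nodes of $\so(2\ell)$; the odd nodes $\{\omega_2,\omega_5,\omega_7\}$ for $\fe_7$; and $z=\unity$ for $\fg_2,\ff_4,\fe_8,\fe_6$, whose centers have order $1$ or $3$. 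Two caveats for a complete write-up: first, the identity $\nu(V_\lambda)=\lambda(z)$ for self-dual $V_\lambda$ is not a ``direct check'' but a theorem of the same classical vintage as the proposition itself, so it must be cited rather than waved at (it belongs to Bourbaki's treatment of invariant bilinear forms and real/quaternionic representations, Chap.~VIII, Sec.~7.5 and App.~II of Ref.~\onlinecite{Bourb08b}, the same places the paper cites for $\mathcal{B}_\phi$); second, the type-by-type parity table you defer is the actual content of the bookkeeping, and the coroot normalization is essential in the non-simply-laced types --- pairing with $\rho$ instead of $\rho^\vee$ would ruin the type-$B$ and type-$C$ answers, which your computation correctly avoids. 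What your route buys over the paper's bare citation is a uniform, verifiable mechanism that explains the $(\ell\bmod 4)$ periodicities; what the citation buys is brevity.
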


\subsection{Alternating and Symmetric Square\label{def_alt_sym}}

The (inner) tensor product $\phi \otimes \psi$ of two representations $\phi$ and $\psi$ of a compact Lie algebra 
$\fg$ is defined as $(\phi \otimes \psi)(g):=
\phi(g) \otimes \unity_{\dim(\psi)} + \unity_{\dim(\phi)} \otimes \psi(g)$ for $g \in \fg$, or more explicitly as
$(\phi\otimes \psi)(g) [v_{i} \otimes w_{j}]:=
[\phi(g) v_{i}] \otimes w_{j} + v_{i} \otimes [\psi(g) w_{j}]
$ when acting on a basis
$\{ v_{i} \otimes w_{j}\,|\, i \in \{1,\ldots,\dim(\phi)\} \text{ and } j \in \{1,\ldots,\dim(\psi)\} \}$
of $\C^{\dim(\phi)}\otimes\C^{\dim(\psi)}$ (cf.\ Chap.~I, Sec.~3.2
of Ref.~\onlinecite{Bourb89}). Similarly, we define the tensor square $\phi^{\otimes 2}(g)
:=(\phi\otimes \phi)(g)=\phi(g) \otimes \unity_{\dim(\phi)} + \unity_{\dim(\phi)} \otimes \phi(g)$ and the tensor power
$\phi^{\otimes k}(g):=\phi(g) \otimes \unity_{\dim(\phi^{\otimes k-1})} + 
\unity_{\dim(\phi)} \otimes \phi^{\otimes k{-}1}(g)$. This  parallels the case of groups where the 
tensor product of representations is directly given by the Kronecker product of the representation matrices.

Let $\phi$ denote a $d$-dimensional representation of a compact Lie algebra $\fg$ acting on the vector space
$V:=\C^d$ with basis $\{v_{i} \,|\, i \in \{1,\ldots,d\}\}$. Recall that
$V^{\otimes 2}= V \otimes V$ decomposes into $\Alt^2 V \oplus \Sym^2 V$ where
$\Alt^2 V$ and $\Sym^2 V$ are spanned by 
$\{v_{i} \otimes v_{j} - v_{j} \otimes v_{i}\,|\, i, j \in \{1,\ldots,d\} \text{ and } i \neq j   \}$
and $\{v_{i} \otimes v_{j} + v_{j} \otimes v_{i}\,|\, i, j \in \{1,\ldots,d\}\}$, respectively.
The alternating square $\Alt^2 \phi:=\phi^{\otimes 2}|_{\Alt^2 V}$
of $\phi$ is defined by restricting $\phi^{\otimes 2}$ to $\Alt^2 V$. Accordingly,
the symmetric square $\Sym^2 \phi$ is given by $\phi^{\otimes 2}|_{\Sym^2 V}$.
The corresponding actions on basis vectors are 
\begin{align*}
(\Alt^2 \phi)(g)[v_{i}\otimes v_{j} - v_{j} \otimes v_{i}]
 & =  \left(
[\phi(g) v_{i}] \otimes v_{j} - v_{j} \otimes [\phi(g) v_{i}]
\right)\\
& \phantom{=\;} +  \left(
v_{i} \otimes [\phi(g) v_{j}] - [\phi(g) v_{j}] \otimes v_{i}
\right) \text{ and} \\
(\Sym^2 \phi)(g)[v_{i}\otimes v_{j} + v_{j} \otimes v_{i}]
 & =  \left(
[\phi(g) v_{i}] \otimes v_{j} + v_{j} \otimes [\phi(g) v_{i}]
\right)\\
& \phantom{=\;}
+  \left(
v_{i} \otimes [\phi(g) v_{j}] + [\phi(g) v_{j}] \otimes v_{i}
\right).
\end{align*}
Note that  $\dim(\Alt^2\phi)=\dim(\Alt^2V)=d(d{-}1)/2$ and $\dim(\Sym^2\phi)=\dim(\Sym^2V)=d(d{+}1)/2$.
In summary, we obtain the decomposition  $\phi^{\otimes 2}=\phi\otimes\phi=\Alt^2 \phi \oplus \Sym^2 \phi$.
First, we 
recall one elementary lemma.

\begin{lemma}\label{elem}
Consider two compact Lie algebras $\fh$ and $\fg$ with $\fh \subseteq \fg$ as well as
a faithful representation $\phi$ of $\fg$. 
We obtain $(\phi^{\otimes 2})|_{\fh}=(\phi |_{\fh})^{\otimes 2}$, 
$(\Alt^2 \phi)|_{\fh}\cong \Alt^2 (\phi |_{\fh})$,  and  $(\Sym^2 \phi)|_{\fh}\cong \Sym^2 (\phi |_{\fh})$.
\end{lemma}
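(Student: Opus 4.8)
The plan is to reduce everything to the observation that restriction to a subalgebra merely shrinks the domain of a representation while leaving the underlying representation space and the defining matrices untouched. First I would establish the operator identity $(\phi^{\otimes 2})|_{\fh}=(\phi|_{\fh})^{\otimes 2}$. For an arbitrary $h\in\fh$ one has $(\phi|_{\fh})(h)=\phi(h)$ by the very definition of restriction, so unwinding the definition of the tensor square gives
\begin{align*}
(\phi^{\otimes 2})|_{\fh}(h) &= \phi(h)\otimes\unity_{\dim(\phi)} + \unity_{\dim(\phi)}\otimes\phi(h)\\
&= (\phi|_{\fh})(h)\otimes\unity_{\dim(\phi)} + \unity_{\dim(\phi)}\otimes(\phi|_{\fh})(h) = (\phi|_{\fh})^{\otimes 2}(h),
\end{align*}
which is in fact an equality of matrices for every $h\in\fh$, not merely an isomorphism.

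Second, I would exploit that the splitting $V^{\otimes 2}=\Alt^2 V\oplus\Sym^2 V$ for $V:=\C^{\dim(\phi)}$ is intrinsic to the vector space $V$: the spanning sets $\{v_i\otimes v_j-v_j\otimes v_i\}$ and $\{v_i\otimes v_j+v_j\otimes v_i\}$ make no reference to the Lie algebra at all. Moreover, the explicit basis-action formulas recalled just above the lemma show that each operator $\phi^{\otimes 2}(g)$ preserves both $\Alt^2 V$ and $\Sym^2 V$; this invariance is a structural feature of the tensor-square construction and holds for every Lie-algebra element, in particular for every $h\in\fh\subseteq\fg$. Hence $\Alt^2 V$ and $\Sym^2 V$ are literally the same invariant subspaces whether one works with $\phi$ or with $\phi|_{\fh}$.

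Combining the two steps, I would restrict the equal operators $(\phi^{\otimes 2})|_{\fh}(h)=(\phi|_{\fh})^{\otimes 2}(h)$ to the common invariant subspace $\Alt^2 V$ to obtain $(\Alt^2\phi)|_{\fh}(h)=(\phi^{\otimes 2}(h))|_{\Alt^2 V}=\Alt^2(\phi|_{\fh})(h)$, and likewise on $\Sym^2 V$ for the symmetric square. This yields the claimed identifications (indeed as equalities, which a fortiori are isomorphisms).

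There is essentially no genuine obstacle here; the lemma is a bookkeeping statement. The only point requiring a word of care is the claim that the subspaces $\Alt^2 V$ and $\Sym^2 V$ — together with their invariance — are truly independent of which algebra acts, so that passing from $\fg$ to $\fh$ does not alter the ambient decomposition. I would also note that the faithfulness of $\phi$, though assumed in the statement for consistency with the surrounding discussion, plays no role in the argument: the identities hold for any representation of $\fg$.
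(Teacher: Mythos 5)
Your proof is correct and follows essentially the same route as the paper's: the paper likewise proves the tensor-square identity by unwinding the definitions (written there as an equality of the sets $\{\phi(g)\otimes\unity+\unity\otimes\phi(g)\}$ restricted to $\fh$) and then dismisses the alternating and symmetric cases with ``the rest follows along the same lines,'' which is exactly the invariant-subspace argument you spell out. Your observations that the identities are genuine equalities and that faithfulness is never used are both accurate refinements of the paper's terse presentation.
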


\begin{proof}
We have $((\{\phi(g)| g \in \fg\})^{\otimes 2})|_{\fh}=(\{\phi(g)\otimes \unity +  \unity \otimes \phi(g)| g \in \fg\})|_{\fh}=
\{\phi(g)\otimes \unity +  \unity \otimes \phi(g)| g \in \fh\}=((\{\phi(g)| g \in \fg\})|_{\fh})^{\otimes 2}$. 
The rest follows along the same lines.
\end{proof}

Another elementary consequence of the definitions is the following lemma.
\begin{lemma}\label{plus}
Let $\phi$ and $\psi$ denote two  representations  of a compact Lie algebra $\fg$.\\
(i) $\Alt^2(\phi \oplus \psi)=\Alt^2 \phi \oplus (\phi \otimes \psi) \oplus \Alt^2 \psi$,\\
(ii) $\Sym^2(\phi \oplus \psi)=\Sym^2 \phi \oplus (\phi \otimes \psi) \oplus \Sym^2 \psi$,\\
(iii) $\phi$ is simple if $\Sym^2\phi$ is simple, and\\ 
(iv) $\phi$ is simple if $\Alt^2\phi$ is simple, $\phi$ is faithful, and $\fg$ is semisimple.
\end{lemma}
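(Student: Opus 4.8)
The plan is to treat the four parts in the order stated, establishing the two decomposition identities (i) and (ii) first and then deducing the implications (iii) and (iv) from them.

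For (i) and (ii) I would work at the level of the underlying vector spaces. Writing $V_\phi$ and $V_\psi$ for the representation spaces, the space $(V_\phi \oplus V_\psi)^{\otimes 2}$ splits into the four blocks $V_\phi^{\otimes 2}$, $V_\psi^{\otimes 2}$, and the two cross blocks $V_\phi \otimes V_\psi$ and $V_\psi \otimes V_\phi$. Passing to the antisymmetric (resp.\ symmetric) part, the diagonal blocks contribute $\Alt^2 V_\phi$ and $\Alt^2 V_\psi$ (resp.\ their symmetric analogues), while the two cross blocks combine into a single copy spanned by the vectors $x \otimes y \mp y \otimes x$ with $x \in V_\phi$ and $y \in V_\psi$. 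First I would check that the linear map $x \otimes y \mapsto x \otimes y \mp y \otimes x$ is an isomorphism from $V_\phi \otimes V_\psi$ onto this cross space, and then verify by a direct computation on these spanning vectors that it intertwines the Lie-algebra tensor product $\phi \otimes \psi$ (in the paper's convention $\phi(g)\otimes\unity + \unity\otimes\psi(g)$) with the action induced by $(\phi\oplus\psi)^{\otimes 2}$. Since each of the three pieces is invariant, this yields the claimed direct-sum decompositions; this step is routine bookkeeping.

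For (iii) I would argue by contraposition. If $\phi$ is not simple it has a proper nonzero invariant subspace $W \subseteq V$. Then $\Sym^2 W$ is a subspace of $\Sym^2 V$ that is invariant under $\Sym^2\phi$ (because $\phi(g)W \subseteq W$), nonzero since $\dim W \ge 1$, and proper since $\dim\Sym^2 W = \tfrac{1}{2}\dim W(\dim W + 1) < \tfrac{1}{2} d(d+1) = \dim\Sym^2 V$ whenever $\dim W < d$. Hence $\Sym^2\phi$ is reducible, which proves (iii); note this uses neither faithfulness nor semisimplicity of $\fg$.

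For (iv) the same invariant-subspace trick fails when the only proper invariant subspaces are one-dimensional, since then $\Alt^2 W = 0$; this is exactly where the extra hypotheses enter, and it is the main obstacle. The plan is as follows: since $\fg$ is compact semisimple, every finite-dimensional representation is semisimple (complete reducibility), so if $\phi$ were not simple we could write $\phi \cong \phi_1 \oplus \phi_2$ with both summands nonzero. Applying part (i) gives $\Alt^2\phi \cong \Alt^2\phi_1 \oplus (\phi_1 \otimes \phi_2) \oplus \Alt^2\phi_2$, in which the cross term $\phi_1 \otimes \phi_2$ has positive dimension. Simplicity of $\Alt^2\phi$ then forces $\Alt^2\phi_1 = \Alt^2\phi_2 = 0$, i.e.\ $\dim\phi_1 = \dim\phi_2 = 1$. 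But a one-dimensional representation of a nonzero semisimple Lie algebra is trivial, because it must annihilate $[\fg,\fg]=\fg$, so $\phi$ would be trivial, contradicting faithfulness. Hence $\phi$ is simple. The delicate conceptual point is to recognize that faithfulness and the semisimplicity of $\fg$ are invoked precisely to exclude the degenerate situation in which $\Alt^2$ kills the one-dimensional constituents and thereby conceals the reducibility of $\phi$.
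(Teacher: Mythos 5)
Your proposal is correct. For (i), (ii) and (iv) it is essentially the paper's route made explicit: the paper's own proof simply cites p.~473 of Ref.~\onlinecite{FH91} for the two decompositions and asserts that (iii) and (iv) are consequences of them, and your proof of (iv) --- complete reducibility of $\phi$ (Weyl's theorem, available precisely because $\fg$ is assumed semisimple there), then part (i) forcing $\dim\phi_1=\dim\phi_2=1$, and the vanishing of one-dimensional representations of a semisimple algebra contradicting faithfulness --- is exactly the deduction the paper leaves implicit. Your treatment of (iii) is genuinely different and in fact more robust. Deducing (iii) from (ii), as the paper indicates, requires splitting a non-simple $\phi$ as $\phi_1\oplus\phi_2$, i.e.\ complete reducibility of $\phi$; but statement (iii) carries no semisimplicity hypothesis on $\fg$ or $\phi$, and the paper itself emphasizes (Appendix~\ref{compact_Lie}) that representations of compact Lie algebras need not be semisimple. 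Your argument --- a proper nonzero invariant subspace $W\subseteq V$ yields the invariant subspace $\Sym^2 W\subseteq\Sym^2 V$, nonzero and proper by the dimension count $1\le\tfrac{1}{2}\dim W(\dim W+1)<\tfrac{1}{2}d(d+1)$ --- needs no direct-sum decomposition at all, so it establishes (iii) in the full stated generality, and it simultaneously isolates why the same trick fails for the alternating square (a one-dimensional $W$ gives $\Alt^2 W=0$), i.e.\ exactly why (iv) requires the extra hypotheses. The only caveat, common to both your proof and the paper's, is the degenerate reading $\fg=\{0\}$ in (iv), which is tacitly excluded by calling $\fg$ semisimple.
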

\begin{proof}
Statements (iii) and (iv) are a consequence of (i) and (ii) 
which both can be found on p.~473 of Ref.~\onlinecite{FH91}. 
\end{proof}

Let $\phi_1$ and   $\phi_2$ denote representations of the Lie algebras $\fg_1$
and $\fg_2$, respectively. The outer tensor product $\phi_1 \boxtimes \phi_2$ is defined as
$(\phi \boxtimes \psi)(g_1,g_2):=
\phi_1(g_1) \otimes \unity_{\dim(\phi_2)} + \unity_{\dim(\phi_1)} \otimes \phi_2(g_2)$ with $g_1 \in \fg_1$ and 
$g_2 \in \fg_2$. The statement of the following lemma can be deduced from
pp.~48--49  of Ref.~\onlinecite{Landsberg12} (see also p.~69 of Ref.~\onlinecite{Wolf68}).

\begin{lemma}\label{outertensor}
Assume that $\phi_i$ is a   representation of a compact Lie algebra $\fg_i$ where $i \in \{1,2\}$.\\
(i) $\Alt^2(\phi_1 \boxtimes \phi_2)
= (\Sym^2\phi_1 \boxtimes \Alt^2 \phi_2) \oplus (\Alt^2\phi_1 \boxtimes \Sym^2 \phi_2)$,\\
(ii) $\Sym^2(\phi_1 \boxtimes \phi_2)
= (\Sym^2\phi_1 \boxtimes \Sym^2 \phi_2) \oplus (\Alt^2\phi_1 \boxtimes \Alt^2 \phi_2)$.
\end{lemma}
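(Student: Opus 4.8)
The plan is to reduce both identities to a single reshuffling isomorphism together with the eigenspace description of the symmetric and alternating squares. Write $V_i := \C^{\dim(\phi_i)}$ and $W := V_1 \otimes V_2$, so that $\phi_1 \boxtimes \phi_2$ acts on $W$. For any representation space $U$, the subspaces $\Sym^2 U$ and $\Alt^2 U$ are precisely the $(+1)$- and $(-1)$-eigenspaces of the flip operator $P_U$ on $U \otimes U$ that swaps the two tensor factors, and $P_U$ commutes with the tensor-square action because the latter is symmetric in its two factors. Hence both claims follow once I understand how $P_W$ and $(\phi_1 \boxtimes \phi_2)^{\otimes 2}$ transform after transporting $W \otimes W$ to $(V_1 \otimes V_1) \otimes (V_2 \otimes V_2)$.

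First I would introduce the canonical isomorphism $\sigma \colon W \otimes W \to (V_1 \otimes V_1) \otimes (V_2 \otimes V_2)$ that swaps the two inner factors, sending $(v_1 \otimes v_2) \otimes (v_1' \otimes v_2') \mapsto (v_1 \otimes v_1') \otimes (v_2 \otimes v_2')$. A direct check on basis vectors shows that $\sigma$ conjugates the flip $P_W$ into $P_{V_1} \otimes P_{V_2}$. Using $(\phi_1 \boxtimes \phi_2)(g_1,g_2) = \phi_1(g_1) \otimes \unity + \unity \otimes \phi_2(g_2)$ together with the additive definition $\phi^{\otimes 2} = \phi \otimes \unity_W + \unity_W \otimes \phi$, I would then verify that $\sigma$ conjugates $(\phi_1 \boxtimes \phi_2)^{\otimes 2}(g_1,g_2)$ into $\phi_1^{\otimes 2}(g_1) \otimes \unity + \unity \otimes \phi_2^{\otimes 2}(g_2)$, i.e. into $(\phi_1^{\otimes 2} \boxtimes \phi_2^{\otimes 2})(g_1,g_2)$. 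Concretely one expands the square into its four summands, applies $\sigma$ to each, and collects terms; this computation is the technical heart of the argument.

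Second, since $P_{V_1} \otimes P_{V_2}$ commutes with $\phi_1^{\otimes 2} \boxtimes \phi_2^{\otimes 2}$ (each $P_{V_i}$ commutes with $\phi_i^{\otimes 2}$), I would read off its eigenspaces from $V_i \otimes V_i = \Sym^2 V_i \oplus \Alt^2 V_i$ via the sign rule $(\pm 1)(\pm 1)$: the $(+1)$-eigenspace is $(\Sym^2 V_1 \otimes \Sym^2 V_2) \oplus (\Alt^2 V_1 \otimes \Alt^2 V_2)$ and the $(-1)$-eigenspace is $(\Sym^2 V_1 \otimes \Alt^2 V_2) \oplus (\Alt^2 V_1 \otimes \Sym^2 V_2)$. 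Restricting $\phi_1^{\otimes 2} \boxtimes \phi_2^{\otimes 2}$ to each summand yields exactly the claimed outer tensor products, since $\phi_i^{\otimes 2}$ restricts to $\Sym^2 \phi_i$ on $\Sym^2 V_i$ and to $\Alt^2 \phi_i$ on $\Alt^2 V_i$. Transporting back through $\sigma$, the $(+1)$-eigenspace of $P_W$ is $\Sym^2(\phi_1 \boxtimes \phi_2)$ and the $(-1)$-eigenspace is $\Alt^2(\phi_1 \boxtimes \phi_2)$, which gives (ii) and (i) respectively.

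The main obstacle is purely bookkeeping: confirming that the single map $\sigma$ simultaneously intertwines $P_W$ with $P_{V_1} \otimes P_{V_2}$ and the square representation with $\phi_1^{\otimes 2} \boxtimes \phi_2^{\otimes 2}$, which demands careful attention to the order of tensor factors throughout. The one genuinely substantive point to get right is that the \emph{additive} Lie-algebra action (unlike the multiplicative group action) reassembles, after reshuffling, into precisely the $\boxtimes$ of the two squares. Beyond the eigenspace characterization of $\Sym^2$ and $\Alt^2$, no representation-theoretic input is needed, so the proof is elementary and classification-free.
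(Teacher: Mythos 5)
Your proof is correct: the flip operator $P_U$ on $U\otimes U$ has $\Sym^2 U$ and $\Alt^2 U$ as its $(+1)$- and $(-1)$-eigenspaces, the reshuffling map $\sigma$ intertwines $P_W$ with $P_{V_1}\otimes P_{V_2}$, and your key computation---that $\sigma$ carries the \emph{additive} square $(\phi_1\boxtimes\phi_2)^{\otimes 2}$ onto $\phi_1^{\otimes 2}\boxtimes\phi_2^{\otimes 2}$, because the four summands $\phi_1(g_1)$ acting in slots $1,3$ and $\phi_2(g_2)$ acting in slots $2,4$ regroup cleanly under the permutation $(1,2,3,4)\mapsto(1,3,2,4)$---goes through exactly as you say; the sign rule $(\pm1)(\pm1)$ then yields (i) and (ii) simultaneously, with equality understood as equivalence of representations (the paper's convention elsewhere, cf.\ its Lemma on restrictions of squares). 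The comparison with the paper is simple: the paper gives no proof at all, deferring to pp.~48--49 of Landsberg's book and p.~69 of Wolf, where the underlying decomposition of $\Lambda^2(A\otimes B)$ and $S^2(A\otimes B)$ is established as a statement about vector spaces respectively $\GL(A)\times\GL(B)$-modules. Your argument is therefore a genuine addition rather than a repetition: it supplies, self-contained and classification-free, precisely the step the citation leaves implicit---namely that the reshuffling isomorphism is equivariant for the Lie-algebra (derivation-type) tensor-square action, not just for the multiplicative group action treated in the references. What the paper's route buys is brevity; what yours buys is that the reader never has to translate between the group-level and Lie-algebra-level notions of tensor square, which is the one place where a careless transfer could go wrong.
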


A compact Lie algebra $\fg$ can be decomposed as 
$\fg\cong\oplus_k \fg_k$ where $\fg_k$ denotes a simple or abelian ideal of $\fg$. 
Moreover, 
all  simple representations of 
$\fg$ are of the form $\boxtimes_k \phi_k$ where $\phi_k$ denotes a simple representation
of $\fg_k$.

\begin{lemma}\label{irred_prod}
Consider a compact Lie algebra $\fg\cong\fg_1 \oplus \fg_2$. All finite-dimensional simple representations
$\phi$ of $\fg$ are of the form $\phi_1 \boxtimes \phi_2$ where $\phi_k$ denotes simple representations
of $\fg_k$. Furthermore, $\phi_1 \boxtimes \phi_2$ is a simple representation of $\fg$
for any  simple representations $\phi_k$ of $\fg_k$.
\end{lemma}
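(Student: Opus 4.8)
The plan is to reduce both assertions to standard facts about the unital associative matrix algebras generated by the representing matrices, where Schur's lemma and the density theorem of Burnside are available. Everything takes place over $\C$, which is exactly the field for which irreducibility is understood in this paper, so Schur's lemma (the commutant of a simple module is $\C\unity$) and Burnside's theorem (an associative subalgebra of $\mathrm{End}(V)$ acting irreducibly equals all of $\mathrm{End}(V)$) apply without modification; if one prefers to stay within the stated framework, one may pass to the complexification $\fg_{\C}\cong(\fg_1)_{\C}\oplus(\fg_2)_{\C}$, which changes nothing about simplicity by Lemma~\ref{complex}(iii).

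For the ``furthermore'' clause, let $\phi_1,\phi_2$ be simple, put $V_i=\C^{\dim(\phi_i)}$, and let $A_i$ denote the unital associative algebra generated by $\phi_i(\fg_i)$ inside $\mathrm{End}(V_i)$. First I would show that the unital associative algebra generated by $(\phi_1\boxtimes\phi_2)(\fg)$ is precisely $A_1\otimes A_2$: setting $g_2=0$ (respectively $g_1=0$) in the generators $\phi_1(g_1)\otimes\unity+\unity\otimes\phi_2(g_2)$ places $A_1\otimes\unity$ and $\unity\otimes A_2$ inside this algebra, and forming products yields all of $A_1\otimes A_2$, while the reverse inclusion is immediate. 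Since each $\phi_i$ is simple, Burnside's theorem gives $A_i=\mathrm{End}(V_i)$, whence $A_1\otimes A_2=\mathrm{End}(V_1\otimes V_2)$ acts irreducibly on $V_1\otimes V_2$; therefore $\phi_1\boxtimes\phi_2$ is simple.

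For the decomposition claim, let $\phi$ be a simple representation of $\fg$ on $V$. Because $\fg_1$ is an ideal, $\phi|_{\fg_1}$ is semisimple by Lemma~\ref{restr_semisimple}, so $V\cong\bigoplus_i V_1^{(i)}\otimes M_i$ with the $V_1^{(i)}$ the distinct simple $\fg_1$-modules and the $M_i$ the corresponding multiplicity spaces. Since $[\fg_1,\fg_2]=0$ we have $[\phi(g_1),\phi(g_2)]=\phi([g_1,g_2])=0$, so each $\phi(g_2)$ is a $\fg_1$-module endomorphism of $V$ and hence preserves every $\fg_1$-isotypic component. Each such component is then simultaneously $\fg_1$- and $\fg_2$-stable, i.e.\ a $\fg$-submodule, and simplicity of $\phi$ forces exactly one of them to be nonzero. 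Thus $V\cong V_1\otimes M$ with $V_1$ a single simple $\fg_1$-module and $\fg_1$ acting as $\phi_1\otimes\unity$. By Schur's lemma the commutant of $\phi_1(\fg_1)$ on $V_1\otimes M$ is $\unity\otimes\mathrm{End}(M)$, so $\phi(g_2)=\unity\otimes\phi_2(g_2)$ for a representation $\phi_2$ of $\fg_2$ on $M$, exhibiting $\phi=\phi_1\boxtimes\phi_2$. Finally, were $\phi_2$ reducible with $M=M'\oplus M''$ as an $\fg_2$-module, then $V_1\otimes M'$ and $V_1\otimes M''$ would be proper $\fg$-submodules of $V$, contradicting simplicity; hence $\phi_2$ is simple.

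I expect the main obstacle to lie in the decomposition claim, and specifically in its two structural steps: verifying that $\phi(\fg_2)$ respects the $\fg_1$-isotypic decomposition so that simplicity of $\phi$ collapses it to a single component, and then using Schur's lemma to realize the $\fg_2$-action purely on the multiplicity space $M$. The ``furthermore'' clause is comparatively routine once the identification of the generated associative algebra with $A_1\otimes A_2$ is made and Burnside's theorem is invoked. The only point requiring genuine care is to ensure that all applications of Schur and of the density theorem are over $\C$, which is why I would make the field of scalars explicit at the outset.
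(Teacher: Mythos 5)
Your proof is correct, but it is genuinely more self-contained than the paper's: the paper disposes of this lemma with a one-line citation to the corresponding fact for finite-dimensional representations over $\C$ of unital associative algebras (Sect.~12.1 of Bourbaki's \emph{Alg\`ebre}, Chap.~8), leaving implicit even the passage from the Lie-algebra statement to the associative-algebra statement. You carry out that passage explicitly (identifying the unital algebra generated by $(\phi_1\boxtimes\phi_2)(\fg)$ with $A_1\otimes A_2$) and then prove both halves from scratch: Burnside's density theorem for the ``furthermore'' clause, and the isotypic decomposition of $\phi|_{\fg_1}$ (via Lemma~\ref{restr_semisimple}) plus Schur's lemma for the decomposition claim --- which is essentially the content of the cited Bourbaki section, reproved in-line. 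What the paper's route buys is brevity; what yours buys is transparency about the ingredients: where algebraic closure of $\C$ enters (Schur, Burnside) and where compactness of $\fg$ enters (semisimplicity of the restriction $\phi|_{\fg_1}$). One cosmetic repair: in your last step, reducibility of $\phi_2$ gives a proper nonzero $\fg_2$-submodule $M'\subsetneq M$ but not a priori a direct complement $M''$; this costs nothing, since $V_1\otimes M'$ alone is already a proper nonzero $\fg$-submodule contradicting simplicity (alternatively, $\phi_2$ is automatically semisimple because $\phi|_{\fg_2}\cong\phi_2^{\oplus\dim V_1}$ is, again by Lemma~\ref{restr_semisimple}).
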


\begin{proof}
This follows directly from the corresponding fact for
(finite-dimensional) representations over $\C$ of unital, associative algebras over $\C$, see Sect.~12.1 
in Ref.~\onlinecite{Bourb12}.
\end{proof}

Recall that the trivial representation maps every 
element to the zero matrix of degree one, which is the only 
one-dimensional representation of semisimple Lie algebras.
We obtain as a  consequence of Lemmas~\ref{plus}--\ref{irred_prod} the following theorem.

\begin{theorem}\label{simple}
Let $\phi$ denote a faithful, semisimple representation of a compact  Lie algebra $\fg
\cong\fs \oplus \fc$ where  $\fs\cong\oplus_k \fs_k$ is semisimple (or zero)
and $\fc$ is abelian.
We assume that $\Alt^2\phi$ or $\Sym^2\phi$ is simple. It follows that $\phi |_{\fs}$ is simple, and 
$\phi |_{\fs_k}$ is distinct from the trivial representation
for at most one $k$. Moreover, the Lie algebra $\fs$ is simple (or zero).
\end{theorem}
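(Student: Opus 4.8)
The plan is to reduce the whole statement to the case where $\phi$ is \emph{simple} as a representation of $\fg$, and then to read off the structure of $\fs$ from the behaviour of alternating and symmetric squares under outer tensor products (Lemma~\ref{outertensor}). Throughout I would keep all reducibility statements relative to $\fg$ itself, so that Lemma~\ref{outertensor} and Lemma~\ref{irred_prod} can be applied to the direct-sum decomposition of $\fg$ without separately tracking the central part $\fc$.

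First I would show that $\phi$ is simple. Writing $\phi\cong\bigoplus_j\rho_j$ with $\rho_j$ simple, Lemma~\ref{plus}(i)--(ii) gives $\Alt^2\phi\cong\bigoplus_j\Alt^2\rho_j\oplus\bigoplus_{j<j'}\rho_j\otimes\rho_{j'}$ and the analogous expansion for $\Sym^2\phi$. If $\Sym^2\phi$ is simple then $\phi$ is simple, which is exactly Lemma~\ref{plus}(iii) (two distinct summands would already force the nonzero pieces $\Sym^2\rho_j$ and $\rho_j\otimes\rho_{j'}$). If instead $\Alt^2\phi$ is simple, the same counting shows that $\phi$ can fail to be simple only when it splits into exactly two one-dimensional pieces $\rho_1\oplus\rho_2$, so that $\Alt^2\rho_1=\Alt^2\rho_2=0$. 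These are characters, hence trivial on $\fs$, and faithfulness of $\phi$ then forces $\fs=0$; in this degenerate configuration $\fs$ is zero and there are no simple factors $\fs_k$, so the remaining assertions hold degenerately.

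So assume $\phi$ is simple. By Lemma~\ref{irred_prod} applied to $\fg\cong\fs\oplus\fc$ I can write $\phi\cong\sigma\boxtimes\chi$ with $\sigma$ a simple representation of $\fs$ and $\chi$ one-dimensional on the abelian $\fc$; restriction gives $\phi|_{\fs}\cong\sigma$, which is simple, proving the first assertion. For the second, suppose for contradiction that $\phi|_{\fs_{k_0}}$ and $\phi|_{\fs_{k_1}}$ are both nontrivial for $k_0\ne k_1$. Grouping $\fg\cong\fs_{k_0}\oplus\fm$ with $\fm:=\bigl(\bigoplus_{k\ne k_0}\fs_k\bigr)\oplus\fc$ and writing $\phi\cong\phi_{k_0}\boxtimes\phi_{\fm}$ (Lemma~\ref{irred_prod} again), both factors have dimension at least two, since a nontrivial simple representation of a semisimple Lie algebra is at least two-dimensional and $\phi_{\fm}$ restricts nontrivially to $\fs_{k_1}$. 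Then $\Alt^2\phi_{k_0}$, $\Sym^2\phi_{k_0}$, $\Alt^2\phi_{\fm}$ and $\Sym^2\phi_{\fm}$ are all nonzero, so by Lemma~\ref{outertensor}(i) both summands of $\Alt^2\phi$, and by Lemma~\ref{outertensor}(ii) both summands of $\Sym^2\phi$, are nonzero. This contradicts the simplicity of whichever square is assumed simple, so $\phi|_{\fs_k}$ is nontrivial for at most one $k$.

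Finally, the third assertion follows from faithfulness: any simple factor $\fs_k$ on which $\phi$ (equivalently $\sigma$) acts trivially lies in $\ker\phi$ and hence vanishes, so at most one $\fs_k$ survives and $\fs$ is simple or zero. I expect the main obstacle to be the alternating-square step: unlike the symmetric case handled directly by Lemma~\ref{plus}(iii), simplicity of $\Alt^2\phi$ does not by itself give simplicity of $\phi$, and the degenerate two-character configuration must be isolated and then eliminated by faithfulness. The remaining work — turning the outer-tensor decomposition of the squares into the ``at most one nontrivial factor'' conclusion — is then a short nonvanishing argument via Lemma~\ref{outertensor}.
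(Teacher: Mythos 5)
Your proof is correct and follows essentially the same route as the paper's own: Lemma~\ref{plus} to reduce to a simple $\phi$, Lemma~\ref{irred_prod} to write $\phi$ as an outer tensor product over $\fg\cong\fs\oplus\fc$, Lemma~\ref{outertensor} to exclude more than one nontrivial simple factor, and faithfulness to conclude that $\fs$ is simple or zero. The one place where you go beyond the paper is worth flagging: the paper's proof simply asserts that $\phi|_{\fs}$ is simple ``by Lemma~\ref{plus}'', but part (iv) of that lemma requires $\fg$ itself to be semisimple, which is exactly what fails in this theorem; so the alternating-square case needs an extra argument. Your isolation of the degenerate configuration $\phi\cong\rho_1\oplus\rho_2$ with two one-dimensional characters --- which genuinely occurs, e.g.\ for $\fg\cong\uu(1)$ acting by two distinct nontrivial characters, where $\Alt^2\phi=\rho_1\otimes\rho_2$ is one-dimensional and hence simple --- and your observation that faithfulness then forces $\fs=0$, so that all assertions hold degenerately, is precisely the repair that makes this step airtight; the symmetric-square case is covered verbatim by Lemma~\ref{plus}(iii) in both treatments. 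The remainder (nontriviality on two factors $\fs_{k_0},\fs_{k_1}$ forces both outer factors to have dimension at least two, whence both summands in Lemma~\ref{outertensor}(i) and (ii) are nonzero) matches the paper's argument, just written out in full.
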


\begin{proof}
As $\fg$ is compact, it is reductive and decomposes
as $\fg\cong\fs \oplus \fc$ where $\fs$ is semisimple and $\fc$ is abelian.
The representation $\phi |_{\fs}$ is simple by Lemma~\ref{plus}.
It follows from Lemma~\ref{irred_prod} that all simple representations of a direct sum of Lie algebras are
given by outer tensor products of simple representations.  Assuming that 
$\phi |_{\fs_k}$ is distinct from the trivial representation for more than one $k$ results in a contradiction
with Lemma~\ref{outertensor}. The remaining statement follows immediately, see also
pp.~27--28 and p.~321 of Ref.~\onlinecite{GG78} for information on irreducible subalgebras.
\end{proof}

As the decomposition $\fg\cong\fs \oplus \fc$ of a compact (or reductive) Lie algebra will appear
often, we recall that  $[\fg,\fg]:=\{ [g_1,g_2] \,|\, g_1,g_2 \in \fg\}$ denotes an 
ideal (and a subalgebra) of the Lie algebra $\fg$.
If $\fg$ is reductive (or compact), $[\fg,\fg]$ is a semisimple subalgebra of $\fg$ and we denote it by $\fs(\fg):=[\fg,\fg]$.
Moreover, the center  of $\fg$ is given by $\fc(g):=\{g_1 \in \fg \,|\, [g_1,g_2]=0\, \text{ for all }\, g_2 \in \fg\}$.
We obtain that $\fg = \fs(\fg) \oplus  \fc(\fg)$ if $\fg$ is reductive (or compact), where 
$\fs = \fs(\fg)$ and $\fc = \fc(\fg)$.
Theorem~\ref{simple} can be used
to simplify the discussion by limiting the search for simple, alternating and symmetric squares to
simple representations of compact simple Lie algebras. 

\section{Some techniques of Dynkin\label{Dynkin_techniques}}

We recall some elementary but powerful techniques of Dynkin
(see Sec.~3 of Ref.~\onlinecite{Dynkin57}) which allow us to analyze the decomposition
of a tensor product of representations very efficiently. In particular, 
we discuss \emph{minimal chains}, the notion of \emph{subordination}, and 
the \emph{method of parts}. 
At the end of this appendix, some elementary facts about the \emph{adjoint representation} are 
also recalled.
Combinations of these techniques will be directly applied, e.g., in Appendix~\ref{case-by-case}.

{
\fontsize{10}{12}
\renewcommand{\baselinestretch}{1}
\selectfont
\begin{table}[tb]
\caption{\label{car_tab}Cartan matrices $\alpha_{\fg}$ corresponding to the root system of $\fg_{\C}$}
\begin{ruledtabular}
\begin{tabular}{c}
$
\alpha_{\su(\ell{+}1)}=
\left(
\begin{smallmatrix}
2 & -1 & 0 & \cdot\cdot & 0 & 0\\
-1 & 2 & -1 & \ddotdot & \vdotdot & \vdotdot\\
0 & -1 & \ddotdot  & \ddotdot & 0 & 0\\
\vdotdot & \ddotdot & \ddotdot & 2 & -1 & 0\\
0 & \cdot\cdot & 0 & -1 & 2 & -1\\
0 & \cdot\cdot & 0 & 0 & -1 & 2\\
\end{smallmatrix}
\right)
$,
$\alpha_{\so(2\ell{+}1)}=
\left(
\begin{smallmatrix}
2 & -1 & 0 & \cdot\cdot & 0 & 0\\
-1 & 2 & -1 & \ddotdot & \vdotdot & \vdotdot\\
0 & -1 & \ddotdot  & \ddotdot & 0 & 0\\
\vdotdot & \ddotdot & \ddotdot & 2 & -1 & 0\\
0 & \cdot\cdot & 0 & -1 & 2 & -2\\
0 & \cdot\cdot & 0 & 0 & -1 & 2\\
\end{smallmatrix}
\right)
$,
$\alpha_{\usp(\ell)}=
\left(
\begin{smallmatrix}
2 & -1 & 0 & \cdot\cdot & 0 & 0\\
-1 & 2 & -1 & \ddotdot & \vdotdot & \vdotdot\\
0 & -1 & \ddotdot  & \ddotdot & 0 & 0\\
\vdotdot & \ddotdot & \ddotdot & 2 & -1 & 0\\
0 & \cdot\cdot & 0 & -1 & 2 & -1\\
0 & \cdot\cdot & 0 & 0 & -2 & 2\\
\end{smallmatrix}
\right)
$,\\[5mm]
$\alpha_{\so(2\ell)}=
\left(
\begin{smallmatrix}
2 & -1 & 0 & \cdot\cdot & 0 & 0\\
-1 & 2 & -1 & \ddotdot & \vdotdot & \vdotdot\\
0 & -1 & \ddotdot  & \ddotdot & 0 & 0\\
\vdotdot & \ddotdot & \ddotdot & 2 & -1 & -1\\
0 & \cdot\cdot & 0 & -1 & 2 & 0\\
0 & \cdot\cdot & 0 & -1 & 0 & 2\\
\end{smallmatrix}
\right)
$,
$\alpha_{\fg_2}=
\left(
\begin{smallmatrix}
2 & -1\\
-3 & 2
\end{smallmatrix}
\right)
$,
$\alpha_{\ff_4}=
\left(
\begin{smallmatrix}
2 & -1 & 0  & 0\\
-1 & 2 & -2 & 0\\
0 & -1 & 2 & -1\\
0 & 0 & -1 & 2
\end{smallmatrix}
\right)
$,
$\alpha_{\fe_6}=
\left(
\begin{smallmatrix}
2 & 0 & -1 & 0 & 0 & 0\\
0 & 2 & 0 & -1 &  0 & 0\\
-1 & 0 & 2  & -1 & 0 & 0\\
0 & -1 & -1 & 2 & -1 & 0\\
0 & 0 & 0 & -1 & 2 & -1\\
0 & 0 & 0 & 0 & -1 & 2\\
\end{smallmatrix}
\right)
$,\\[5mm]
$\alpha_{\fe_7}=
\left(
\begin{smallmatrix}
2 & 0 & -1 & 0 & 0 & 0 & 0\\
0 & 2 & 0 & -1 &  0 & 0 & 0\\
-1 & 0 & 2  & -1 & 0 & 0 & 0\\
0 & -1 & -1 & 2 & -1 & 0 & 0\\
0 & 0 & 0 & -1 & 2 & -1 & 0\\
0 & 0 & 0 & 0 & -1 & 2 & -1\\
0 & 0 & 0 & 0 & 0 & -1 & 2\\
\end{smallmatrix}
\right)
$,
$\alpha_{\fe_8}=
\left(
\begin{smallmatrix}
2 & 0 & -1 & 0 & 0 & 0 & 0 & 0\\
0 & 2 & 0 & -1 &  0 & 0 & 0 & 0\\
-1 & 0 & 2  & -1 & 0 & 0 & 0 & 0\\
0 & -1 & -1 & 2 & -1 & 0 & 0 & 0\\
0 & 0 & 0 & -1 & 2 & -1 & 0 & 0\\
0 & 0 & 0 & 0 & -1 & 2 & -1 & 0\\
0 & 0 & 0 & 0 & 0 & -1 & 2 & -1\\
0 & 0 & 0 & 0 & 0 & 0 & -1 & -2\\
\end{smallmatrix}
\right)
$
\end{tabular}
\end{ruledtabular}
\end{table}
}

As discussed in Appendix~\ref{compact_Lie}, the highest weight $x_{\phi}=(x_{\phi,1},\ldots,x_{\phi,\ell})$ characterizes 
an equivalence class corresponding to a simple representation $\phi$ 
of a compact semisimple Lie algebra $\fg$ of rank $\ell$. 
The root system $R$ of $\fg_{\C}$ (see Chap.~VIII, Sec.~2 of Ref.~\onlinecite{Bourb08b} and Chap.~VI of 
Ref.~\onlinecite{Bourb08a}) has a basis $\{\alpha_1,\ldots,\alpha_\ell\}$ of so-called 
\emph{simple roots} 
$\alpha_a=(\alpha_{a,1},\ldots,\alpha_{a,\ell})$ such that the corresponding Cartan matrix $\alpha_{\fg}$ 
has entries $\alpha_{a,b}$ (see Table~\ref{car_tab}). The Dynkin diagrams in Table~\ref{dyn_tab}
can be recovered from the Cartan matrices: there is a link between nodes $a$ and $b$ if $\alpha_{a,b}\neq 0$,
and its type is then given by the value of $\alpha_{a,b}/\alpha_{b,a} \in \{1,2,3,1/2,1/3\}$ (see Chap.~VI, Sec.~4.2 of 
Ref.~\onlinecite{Bourb08a}). The basis of simple roots
introduces a partial order (e.g.,) on  the highest weights: $x_{\phi} \leq x_{\psi}$ if $x_{\psi}-x_{\phi}=
\sum_{j=1}^{\ell} n_j \alpha_j$ for some non-negative integers $n_j$. 
Using this notation, one obtains the following lemma 
(see  Chap.~VIII, Sec.~7.4, Prop.~9(ii) of Ref.~\onlinecite{Bourb08b}).

\begin{lemma}\label{tensor_rule}
Consider a compact semisimple Lie algebra $\fg$ and two of its  simple 
representations $\phi$ and $\psi$. The tensor product 
$\phi\otimes\psi\cong \oplus_k \tau_k$ decomposes into simple representations $\tau_k$
with $x_{\tau_k}\leq x_{\phi}+x_{\psi}$
\end{lemma}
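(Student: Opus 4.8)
The plan is to pass to the complexification and argue entirely in terms of weights, exploiting the fact that the inner tensor product of this paper acts additively on weight vectors. By Lemma~\ref{complex}(i) and (iii) it suffices to prove the statement for $\fg_{\C}$, where the theory of the highest weight is available. First I would fix a Cartan subalgebra $\ft$ of $\fg_{\C}$ together with the basis $\{\alpha_1,\ldots,\alpha_\ell\}$ of simple roots already used above to define the partial order, so that for any simple representation $\sigma$ with highest weight $x_\sigma$ one has the standard domination property: every weight $\mu$ occurring in $\sigma$ satisfies $\mu \leq x_\sigma$.

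Next I would record the additivity of weights under the tensor product. If $v_\mu$ is a weight vector of $\phi$ of weight $\mu$ and $w_\nu$ a weight vector of $\psi$ of weight $\nu$, then for every $h \in \ft$ the action defined in Appendix~\ref{def_alt_sym} gives
\begin{equation*}
(\phi\otimes\psi)(h)\,(v_\mu \otimes w_\nu) = \phi(h)v_\mu \otimes w_\nu + v_\mu \otimes \psi(h)w_\nu = \bigl(\mu(h)+\nu(h)\bigr)\,(v_\mu \otimes w_\nu),
\end{equation*}
so that $v_\mu \otimes w_\nu$ is a weight vector of $\phi\otimes\psi$ of weight $\mu + \nu$. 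Since such tensors span the representation space, every weight of $\phi\otimes\psi$ has the form $\mu + \nu$ with $\mu$ a weight of $\phi$ and $\nu$ a weight of $\psi$.

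I would then combine these two facts. Each simple component $\tau_k$ in $\phi\otimes\psi \cong \oplus_k \tau_k$ contains a highest-weight vector, which is in particular a weight vector of $\phi\otimes\psi$; hence $x_{\tau_k}$ is a weight of $\phi\otimes\psi$ and therefore $x_{\tau_k}=\mu+\nu$ for some weight $\mu$ of $\phi$ and some weight $\nu$ of $\psi$. Using $\mu \leq x_\phi$ and $\nu \leq x_\psi$ together with compatibility of the root order with addition (if $x_\phi-\mu$ and $x_\psi-\nu$ are each non-negative integer combinations of the simple roots, so is their sum), I conclude $x_{\tau_k} = \mu + \nu \leq x_\phi + x_\psi$, as claimed.

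The only point requiring genuine care is the tensor-product convention: because the inner tensor product acts by $\phi(g)\otimes\unity + \unity\otimes\psi(g)$ rather than by a group-style Kronecker product, the weights genuinely add, which is precisely what drives the bound. Beyond this bookkeeping the argument is the standard highest-weight estimate, and indeed the statement may be cited directly from Chap.~VIII, Sec.~7.4, Prop.~9(ii) of Ref.~\onlinecite{Bourb08b}.
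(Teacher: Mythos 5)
Your proposal is correct, and it is worth noting that the paper itself gives no proof of this lemma at all: the lemma is stated with a bare citation to Chap.~VIII, Sec.~7.4, Prop.~9(ii) of Ref.~\onlinecite{Bourb08b}, which is exactly the citation you offer in your closing sentence. So your argument supplies the details the paper omits, and it does so along the standard route: complexify via Lemma~\ref{complex}, observe that weights add under the Lie-algebra tensor-product convention $\phi(g)\otimes\unity + \unity\otimes\psi(g)$, note that the highest weight $x_{\tau_k}$ of each simple summand is itself a weight of $\phi\otimes\psi$ and hence of the form $\mu+\nu$, and finish with the domination property $\mu\leq x_{\phi}$, $\nu\leq x_{\psi}$ plus additivity of the partial order. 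Two small points deserve flagging, though neither is a gap. First, the domination property (every weight of a simple representation is the highest weight minus a non-negative integer combination of simple roots) is itself a theorem of highest-weight theory; you invoke it as standard, which is legitimate, but it is doing essentially all the work, and it lives in the same section of Bourbaki as the cited proposition. Second, the existence of the decomposition $\phi\otimes\psi\cong\oplus_k\tau_k$ rests on Weyl's complete-reducibility theorem, which the lemma statement (and your proof) presuppose; for a compact $\fg$ this is automatic, so no harm is done. In short: your proof is a correct, self-contained replacement for the paper's black-box citation, trading the paper's brevity for transparency about exactly which facts of weight theory are needed.
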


Let $(\cdot | \cdot)$ denote a non-degenerate symmetric bilinear form on the real vector space generated by
the root system $R$ where $\alpha_{a,b}=2 (\alpha_a | \alpha_b)/  (\alpha_b | \alpha_b)$ and  
$x_{a}=2 (x | \alpha_a)/(\alpha_a | \alpha_a)$ (see pp.~157--158 and 180--181 of Ref.~\onlinecite{Bourb08a}). 
The explicit form of $(\cdot | \cdot)$
will not be important in the following, but we emphasize that $(\alpha_a | \alpha_b) =0 \Leftrightarrow
\alpha_{a,b}=0 \Leftrightarrow \alpha_{b,a}=0$.
Now, more detailed information on the tensor product decomposition can be obtained by 
applying the notion of  a \emph{minimal chain} linking two highest weights (see Sec.~3.1 of Ref.~\onlinecite{Dynkin57}, 
Chap.~VIII, Sec.~7, Ex.~18 of Ref.~\onlinecite{Bourb08b}, or Ref.~\onlinecite{Aslaksen94}).

\begin{definition}
Consider two highest weights $x_{\phi}$ and $x_{\psi}$ corresponding to two simple representations
$\phi$ and $\psi$ of a compact semisimple Lie algebra $\fg$. A sequence
$[\alpha_{j_1},\ldots,\alpha_{j_n}]$ of simple roots of $\fg_{\C}$ is said to 
be a \emph{chain} joining
$x_{\phi}$ and $x_{\psi}$ if 
(a) $( x_{\phi} | \alpha_{j_1}) \neq 0$,
(b) $( \alpha_{j_k} | \alpha_{j_{k+1}}) \neq 0$  for $k \in \{1,\ldots,n{-}1\}$,
and (c) $( \alpha_{j_n}  |  x_{\psi}) \neq 0$. Such a chain 
is called \emph{minimal} if no proper subsequence joins $x_{\phi}$ and $x_{\psi}$, i.e.,
(d) $( x_{\phi} | \alpha_{j_k}) = 0$ for $k>1$,
(e) $( \alpha_{j_k} | \alpha_{j_{h}}) = 0$  for $h>k+1$, and
(f) $( \alpha_{j_k}  |  x_{\psi}) = 0$ for $k<n$.
\end{definition}

Using a minimal chain, we can restrict the possible highest weights in the decomposition
of a tensor product.

\begin{proposition}
If $x_{\tau}\neq x_{\phi}+x_{\psi}$ is the highest weight of a simple representation $\tau$
occurring in the decomposition of the tensor product $\phi\otimes\psi$
of two simple representations $\phi$ and $\psi$, then 
$x_{\tau}\leq x_{\phi}+x_{\psi}-\sum_{k=1}^{n} \alpha_{j_k}$
for some minimal chain $[\alpha_{j_1},\ldots,\alpha_{j_n}]$ linking $x_{\phi}$ and $x_{\psi}$.
In particular,  both $x_{\phi}+x_{\psi}-\sum_{k=1}^{n} \alpha_{j_k}$ and
$x_{\phi}+x_{\psi}$ appear in the decomposition. 
\end{proposition}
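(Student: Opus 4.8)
\emph{Setup.} The plan is to measure how far $x_\tau$ lies below $x_\phi+x_\psi$ and to read off a minimal chain from the simple roots that are actually used in this gap. By Lemma~\ref{tensor_rule} we have $x_\tau\le x_\phi+x_\psi$, so I would write $\mu:=x_\phi+x_\psi-x_\tau=\sum_j n_j\alpha_j$ with integers $n_j\ge 0$ and set $S:=\{j:n_j>0\}$; since $x_\tau\neq x_\phi+x_\psi$, the support $S$ is nonempty. The two things to prove are then (i) that $S$ contains a minimal chain $[\alpha_{j_1},\dots,\alpha_{j_n}]$ joining $x_\phi$ and $x_\psi$, which immediately yields the displayed bound, and (ii) that both extreme weights actually occur as constituents.

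\emph{Connectivity (the crux).} To locate a chain inside $S$, I would use that $x_\tau$ is a weight of $\phi\otimes\psi$ and hence $x_\tau=\lambda+\nu$ for some weight $\lambda$ of $\phi$ and $\nu$ of $\psi$; writing $x_\phi-\lambda=\sum_j p_j\alpha_j$ and $x_\psi-\nu=\sum_j q_j\alpha_j$ with $p_j,q_j\ge 0$ gives $n_j=p_j+q_j$, so the supports of $p$ and of $q$ both lie in $S$. The standard support lemma for weights of a simple module then says that every connected component (in the Dynkin diagram) of the support of $p$ contains a node $j$ with $(x_\phi\mid\alpha_j)\neq0$, and likewise every component of the support of $q$ meets $\{j:(x_\psi\mid\alpha_j)\neq0\}$. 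The hard part will be to upgrade these two local statements into a \emph{single} walk inside $S$ that runs from a node paired nontrivially with $x_\phi$ to one paired nontrivially with $x_\psi$, i.e.\ to force the two supports into one connected component of $S$. Here I would exploit that $x_\tau$ is not merely a weight but the highest weight of a genuine constituent, so that there is a vector of weight $x_\tau$ annihilated by every raising operator; analysing this annihilation condition (equivalently, choosing the splitting $\lambda+\nu$ extremally and invoking dominance of $x_\tau$) is what rules out $S$ disconnecting $x_\phi$ from $x_\psi$. This is the step I expect to require the most care.

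\emph{Extracting the minimal chain.} Once $S$ is known to contain some chain joining $x_\phi$ and $x_\psi$, I would take one of \emph{shortest} length. A shortest such chain automatically satisfies the minimality conditions (d)--(f) of the definition, since any violation would exhibit a strictly shorter joining subsequence, and a shortest walk has pairwise distinct roots. As all of its roots lie in $S$, each satisfies $n_{j_k}\ge1$, whence $\sum_{k=1}^n\alpha_{j_k}\le\mu$ coefficientwise and therefore $x_\tau=x_\phi+x_\psi-\mu\le x_\phi+x_\psi-\sum_{k=1}^n\alpha_{j_k}$, which is exactly the asserted bound.

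\emph{Occurrence of the extreme weights.} For the ``in particular'' claim the top weight is immediate: the tensor $v_\phi\otimes v_\psi$ of highest-weight vectors is itself a highest-weight vector of weight $x_\phi+x_\psi$, giving the Cartan constituent. For $x_\phi+x_\psi-\sum_k\alpha_{j_k}$ I would build an explicit highest-weight vector by applying the lowering operators $f_{j_1},\dots,f_{j_n}$ along the chain to $v_\phi\otimes v_\psi$, splitting the chain so that a prefix acts on the $\phi$-factor and the complementary suffix on the $\psi$-factor, with coefficients chosen as in the Dynkin/Bourbaki construction. The nonvanishing conditions (a)--(c) of the chain ($(x_\phi\mid\alpha_{j_1})\neq0$, consecutive $(\alpha_{j_k}\mid\alpha_{j_{k+1}})\neq0$, and $(\alpha_{j_n}\mid x_\psi)\neq0$) guarantee that the resulting vector is nonzero, while the minimality conditions (d)--(f) guarantee that every raising operator $e_i$ annihilates it; hence its weight is a new highest weight occurring in the decomposition. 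Combining (i) and (ii) would then complete the proof.
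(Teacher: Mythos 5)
First, a remark on the comparison itself: the paper does \emph{not} prove this proposition. It states it as a known result, pointing to Sec.~3.1 of Ref.~\onlinecite{Dynkin57}, Chap.~VIII, Sec.~7, Ex.~18 of Ref.~\onlinecite{Bourb08b}, and Ref.~\onlinecite{Aslaksen94}, so your attempt must be judged as a self-contained proof against those classical arguments. As such it is a correct \emph{plan} --- essentially Dynkin's plan --- but not a proof, because the two statements carrying all of the content are announced rather than established. The first is the connectivity claim: that the support $S$ of $\mu=x_\phi+x_\psi-x_\tau$ contains a chain joining $x_\phi$ and $x_\psi$. Everything else in your part (i) (shortest chain implies minimal, distinctness of its roots, the componentwise bound $\sum_k\alpha_{j_k}\le\mu$) is routine bookkeeping; the connectivity claim \emph{is} the theorem, and you explicitly defer it (``the step I expect to require the most care''). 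For the record, it can be closed along the lines you hint at, but the mechanism is not ``dominance of $x_\tau$'': assuming no chain exists, split $S=S_\phi\sqcup S_\psi$ into the components of $S$ meeting the nodes non-orthogonal to $x_\phi$, resp.\ $x_\psi$ (disjoint by assumption, with no edges between them); your support lemma then forces \emph{every} decomposition $x_\tau=\lambda+\nu$ into weights of $\phi$ and $\psi$ to equal the single pair $\lambda_0=x_\phi-\sum_{j\in S_\phi}n_j\alpha_j$, $\nu_0=x_\psi-\sum_{j\in S_\psi}n_j\alpha_j$, so a highest-weight vector $v$ of $\tau$ lies in the single summand $V_\phi[\lambda_0]\otimes V_\psi[\nu_0]$; writing $v=\sum_a u_a\otimes w_a$ with each family linearly independent, the two components of $e_i v=0$ lie in \emph{different} weight spaces of the tensor product and hence vanish separately, giving $e_i u_a=0$ for all $i$ and all $a$; irreducibility of $V_\phi$ then forces $\lambda_0=x_\phi$, i.e.\ $S_\phi=\emptyset$, and symmetrically $S_\psi=\emptyset$, contradicting $S\neq\emptyset$. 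No argument of this kind appears in your text.

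The second gap is the occurrence of $x_\phi+x_\psi-\sum_{k=1}^n\alpha_{j_k}$. Here your proposal is circular: ``coefficients chosen as in the Dynkin/Bourbaki construction'' names the very proof being requested, and neither of your two claims --- that conditions (a)--(c) force nonvanishing, and that minimality (d)--(f) forces annihilation by every $e_i$ --- holds for an \emph{arbitrary} distribution of the lowering operators over the two factors. One must actually exhibit the right linear combination over the prefix/suffix splittings $\bigl(f_{j_a}\cdots f_{j_1}v_\phi\bigr)\otimes\bigl(f_{j_{a+1}}\cdots f_{j_n}v_\psi\bigr)$, $a=0,\ldots,n$, and verify by $\sll_2$-theory that the system $e_{j_k}v=0$ has a nonzero solution; alternatively one can count multiplicities of this weight in $\phi\otimes\psi$ versus the Cartan component and use part (i) together with minimality to pin down the extra constituent. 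Either way a genuine computation is required. In summary: your reduction and the easy half (occurrence of $x_\phi+x_\psi$ via $v_\phi\otimes v_\psi$) are fine and your strategy matches the classical one, but as written the proposal leaves unproven exactly the two assertions that make the proposition nontrivial.
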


We follow Dynkin (Sec.~3.3 in Ref.~\onlinecite{Dynkin57})
and introduce the powerful notion of \emph{subordination}, see also Ref.~\onlinecite{Aslaksen94} and
Chap.~VIII, Sec.~7, Ex.~16 of Ref.~\onlinecite{Bourb08b}.

\begin{definition}
We consider the  representations $\phi\cong\oplus_{j=1}^r \phi_j$ and 
$\psi\cong\oplus_{j=1}^s \psi_j$
of a compact simple Lie algebra which decompose into simple representations $\phi_j$ and $\psi_j$.
Let $x_\tau=(x_{\tau,1},\ldots,x_{\tau,\ell})$  denote the highest weight
of a simple representation $\tau$. Assuming that both $\phi$ and $\psi$ are simple
(i.e., $r=s=1$), $\phi$ is called \emph{subordinate} to $\psi$ 
(written $\phi \sqsubseteq \psi$ or $x_\phi \sqsubseteq x_\psi$) if $x_{\phi,k} \leq x_{\psi,k}$ for 
all $k\in \{1,\ldots,\ell\}$. Otherwise, we call $\phi$ subordinate to $\psi$ if $r\leq s$ and
there exists a re-ordering of the simple components such that $\phi_j \sqsubseteq \psi_j$ for $j \in \{1,\ldots,
r\}$.
\end{definition}

Important consequences of  subordination are summarized in the following proposition where $m(\phi,\psi)$ (or $m(x_\phi,\psi)$)
denotes the multiplicity of the simple representation $\phi$ in the decomposition of the representation $\psi$.

\begin{proposition}\label{prop_subord}
Let $\phi$, $\phi_j$, $\psi$, and $\psi_j$ denote simple representations of 
a compact simple Lie algebra $\fg$ such that $\phi \sqsubseteq \psi$ and 
$\phi_j \sqsubseteq \psi_j$ for $j\in\{1,2\}$.
Given a basis $\{\alpha_1,\ldots,\alpha_\ell\}$ of simple roots for $\fg_{\C}$ and
a set $\{n_1,\ldots,n_\ell\}$ of non-negative integers, we obtain:\\
(i) $\phi_1 \otimes \phi_2 \sqsubseteq \psi_1 \otimes \psi_2$.\\
(ii) $m(x_{\phi_1}+x_{\phi_2}-\sum_{k=1}^\ell n_k\, \alpha_{k},\phi_1\otimes\phi_2)  
\leq m(x_{\psi_1}+x_{\psi_2}-\sum_{k=1}^\ell n_k\, \alpha_{k},\psi_1\otimes\psi_2)$.\\
(iii) $m(2 x_{\phi}-\sum_{k=1}^\ell n_k\, \alpha_k,\Sym^2\phi)  
\leq m(2 x_{\psi}-\sum_{k=1}^\ell n_k\, \alpha_k,\Sym^2\psi)$.\\
(iv) $m(2 x_{\phi}-\sum_{k=1}^\ell n_k\, \alpha_k,\Alt^2\phi)  
\leq m(2 x_{\psi}-\sum_{k=1}^\ell n_k\, \alpha_k,\Alt^2\psi)$.
\end{proposition}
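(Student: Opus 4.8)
The plan is to reduce all four parts to the construction of a single linear map between the underlying representation spaces that commutes with the raising operators and strictly shifts weights, and then to argue its injectivity. Throughout I pass to the complexification $\fg_{\C}$ (legitimate by Lemma~\ref{complex}, since multiplicities and subordination are read off the highest weights), fix a Cartan subalgebra $\ft$ and a system of simple roots, and write $e_\alpha$ for the raising operators spanning $\fn^{+}$. The organizing observation is that the multiplicity $m(\nu,\chi)$ of a simple representation $V_\nu$ in a semisimple representation $\chi$ equals the dimension of the space of \emph{singular vectors} of weight $\nu$ in $\chi$, i.e.\ the vectors killed by every $e_\alpha$ and of $\ft$-weight $\nu$; such vectors automatically carry a dominant weight. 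Hence it suffices to produce, for each inclusion $\phi_j\sqsubseteq\psi_j$, an injection of the relevant singular-vector spaces that shifts the weight by $\sigma_j:=x_{\psi_j}-x_{\phi_j}$, a dominant weight precisely because $\phi_j\sqsubseteq\psi_j$ means that $\sigma_j$ has nonnegative coordinates.

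First I would build the basic map. Let $\sigma:=x_\psi-x_\phi$ be dominant, let $V_\sigma$ be its simple module with highest weight vector $w_\sigma$, and let $p\colon V_\phi\otimes V_\sigma\to V_\psi$ be the equivariant projection onto the Cartan component, the unique copy of $V_\psi=V_{x_\phi+\sigma}$ in $V_\phi\otimes V_\sigma$ (Lemma~\ref{tensor_rule}), whose highest weight vector is $u_\phi\otimes w_\sigma$. Define $\beta\colon V_\phi\to V_\psi$ by $\beta(v):=p(v\otimes w_\sigma)$. Since $w_\sigma$ is highest, $e_\alpha w_\sigma=0$ for every simple $\alpha$, so $\beta(e_\alpha v)=p(e_\alpha v\otimes w_\sigma)=p\bigl(e_\alpha(v\otimes w_\sigma)\bigr)=e_\alpha\beta(v)$; thus $\beta$ commutes with all raising operators and raises weights by $\sigma$. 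It is not a module homomorphism, but this partial equivariance is all that is needed.

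The crux is injectivity of $\beta$. Its kernel is $\fn^{+}$-stable (if $\beta(v)=0$ then $\beta(e_\alpha v)=e_\alpha\beta(v)=0$) and weight graded. Were it nonzero, then because the $e_\alpha$ act nilpotently on the finite-dimensional module $V_\phi$, Engel's theorem would furnish a nonzero common null vector in $\ker\beta$, i.e.\ a singular vector; but the singular vectors of the irreducible module $V_\phi$ form exactly the one-dimensional highest weight line $\C\,u_\phi$, and $\beta(u_\phi)=u_\psi\neq0$. This contradiction gives $\ker\beta=0$. I expect this injectivity step -- exploiting that $\ker\beta$ is only an $\fn^{+}$-submodule, not a $\fg$-submodule -- to be the main obstacle, and the one place where irreducibility of $V_\phi$ is genuinely used.

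With $\beta$ in hand the four statements follow formally. For (ii), the map $\beta_1\otimes\beta_2\colon V_{\phi_1}\otimes V_{\phi_2}\to V_{\psi_1}\otimes V_{\psi_2}$ is injective and, being a tensor product of raising-equivariant maps, again commutes with every $e_\alpha$ while raising weights by $\sigma_1+\sigma_2$; it therefore carries singular vectors of weight $x_{\phi_1}+x_{\phi_2}-\sum_k n_k\alpha_k$ injectively to singular vectors of weight $x_{\psi_1}+x_{\psi_2}-\sum_k n_k\alpha_k$, which is exactly the asserted inequality. For (iii) and (iv) I would take $\phi_1=\phi_2=\phi$, $\psi_1=\psi_2=\psi$ and note that $\beta\otimes\beta$ commutes with the transposition of the two tensor factors, hence restricts to injective, raising-equivariant maps $\Sym^2V_\phi\to\Sym^2V_\psi$ and $\Alt^2V_\phi\to\Alt^2V_\psi$, yielding the two inequalities verbatim. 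Finally (i) is a bookkeeping consequence of (ii): translation by the fixed dominant weight $\sigma_1+\sigma_2$ is injective and sends each $\nu=x_{\phi_1}+x_{\phi_2}-\sum_k n_k\alpha_k$ to a weight with componentwise larger coordinates, so $V_\nu\sqsubseteq V_{\nu+\sigma_1+\sigma_2}$; matching the $m(\nu,\phi_1\otimes\phi_2)$ copies of $V_\nu$ to distinct copies of $V_{\nu+\sigma_1+\sigma_2}$ in $\psi_1\otimes\psi_2$ (available by (ii), and non-overlapping by injectivity of the translation) produces an injective, subordination-respecting correspondence of simple components, which is precisely $\phi_1\otimes\phi_2\sqsubseteq\psi_1\otimes\psi_2$. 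This argument is self-contained and runs in the spirit of Dynkin's method of parts.
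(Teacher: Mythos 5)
Your proof is correct, and it takes a genuinely different route from the paper: the paper offers no self-contained argument for this proposition at all, disposing of part (i) by citing Thm.~3.17 of Ref.~\onlinecite{Dynkin57} and of parts (ii)--(iv) by citing Thm.~1 of Ref.~\onlinecite{Aslaksen94}. Your argument, by contrast, proves everything from first principles, and I found no gap in it: the identification of $m(\nu,\chi)$ with the dimension of the space of singular vectors of weight $\nu$, the raising-equivariant map $\beta(v)=p(v\otimes w_\sigma)$ into the Cartan component (shifting weights by the dominant difference $\sigma=x_\psi-x_\phi$, which is dominant precisely because subordination is componentwise inequality of fundamental-weight coordinates), the injectivity of $\beta$ via Engel's theorem applied to the $\fn^{+}$-stable, weight-graded kernel, the passage to $\beta_1\otimes\beta_2$ for (ii), the compatibility with the flip operator for (iii) and (iv), and the translation bookkeeping for (i) are all sound; the injectivity step is indeed the crux, and your use of the irreducibility of $V_\phi$ there (singular vectors form only the highest-weight line) is exactly what makes it work. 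Two small points should be tightened. First, the multiplicity-one property of the Cartan component $V_{x_\phi+\sigma}$ inside $V_\phi\otimes V_\sigma$ is not literally contained in Lemma~\ref{tensor_rule}, which only bounds the highest weights that can occur; it follows from the standard fact that the weight $x_\phi+\sigma$ has weight-multiplicity one in the tensor product, and this should be stated since both the existence and the equivariance of the canonical projection $p$ rest on it. Second, in part (i) your matching tacitly uses that every simple summand of $\phi_1\otimes\phi_2$ has highest weight of the form $x_{\phi_1}+x_{\phi_2}-\sum_{k}n_k\alpha_k$; that is precisely Lemma~\ref{tensor_rule} and deserves an explicit reference. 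What your approach buys is a classification-free, self-contained proof in the same spirit as the paper's Section~\ref{classification-free}; what the paper's approach buys is brevity, at the cost of deferring the entire content of the proposition to Dynkin, Aslaksen, and (for the companion Proposition~\ref{prop_part}) Kr{\"a}mer.
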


\begin{proof}
Part (i) is Thm.~3.17 of Ref.~\onlinecite{Dynkin57}. A proof of the other parts is given in Thm.~1 of 
Ref.~\onlinecite{Aslaksen94}.
\end{proof}

Consider the basis $\{\alpha_1,\ldots,\alpha_\ell\}$ of simple roots 
of $\fg_{\C}$ for a compact semisimple Lie algebra $\fg$.
A proper subset of the simple roots obtained by
deleting the simple roots $\alpha_j$ with  $j \in D\subsetneq \{1,\ldots,\ell\}$
generates a smaller root system 
corresponding to a semisimple subalgebra $\fh$ of $\fg$ (resp.\ $\fh_{\C}$ of $\fg_{\C}$), 
cf.\ Chap.~VIII, Sec.~3.1 of Ref.~\onlinecite{Bourb08b}. This is equivalent to deleting the nodes
in $D$ from the Dynkin diagram or shortening the highest weight $x$ of a simple representation
by deleting the entries $x_j$ with $j \in D$.
This allows us to introduce the notion of parts of highest weights 
(see Sec.~3.2 of Ref.~\onlinecite{Dynkin57} and Chap.~XIV of Ref.~\onlinecite{Cahn84}).

\begin{definition}
Consider the  representations $\phi\cong\oplus_{j=1}^r \phi_j$ and 
$\psi\cong\oplus_{j=1}^s \psi_j$ of the compact semsimple Lie algebras $\fh$ and $\fg$,
respectively. In addition, $\phi_j$ and $\psi_j$ denote simple representations and $\ff$ is 
a subalgebra of $\fg$. Assuming that both $\phi$ and $\psi$ are simple (i.e., $r=s=1$),
$\phi$ is called a \emph{part} of $\psi$ (written $\phi \subseteq \psi$ or $x_\phi \subseteq x_\psi$) 
if $\phi$ is faithful and $x_\phi$ is a shortened version of 
$x_\psi$ which is obtained by deleting some (but not all) or none of the entries $x_{\psi,j}$.
Otherwise, $\phi$ is a part of $\psi$ if $r\leq s$ and
there exists a re-ordering of the simple components such that $\phi_j \subseteq \psi_j$ for $j \in \{1,\ldots,r\}$.
\end{definition}

We obtain the following proposition which is essential in applying the method of parts.

\begin{proposition}\label{prop_part}
Consider the semisimple subalgebra $\fh$ of the compact semisimple Lie algebra $\fg$.
Let $\phi$ and $\phi_j$ (resp.\ $\psi$ and $\psi_j$) denote  simple representations of $\fh$ 
(resp.\ $\fg$). Assume that $\phi \subseteq \psi$ and $\phi_j \subseteq \psi_j$. 
Given a basis $\{\alpha_1,\ldots,\alpha_\ell\}$ of the simple roots of $\fg_{\C}$ and
a set $\{n_1,\ldots,n_\ell\}$ of non-negative integers, we obtain:\\
(i) $\phi_1 \otimes \phi_2 \subseteq \psi_1 \otimes \psi_2$.\\
(ii) $m(x_{\phi_1}+x_{\phi_2}-\sum_{k=1}^\ell n_k\, \alpha_{k},\phi_1\otimes\phi_2)  
\leq m(x_{\psi_1}+x_{\psi_2}-\sum_{k=1}^\ell n_k\, \alpha_{k},\psi_1\otimes\psi_2)$.\\
(iii) $m(2 x_{\phi}-\sum_{k=1}^\ell n_k\, \alpha_k,\Sym^2\phi)  
\leq m(2 x_{\psi}-\sum_{k=1}^\ell n_k\, \alpha_k,\Sym^2\psi)$.\\
(iv) $m(2 x_{\phi}-\sum_{k=1}^\ell n_k\, \alpha_k,\Alt^2\phi)  
\leq m(2 x_{\psi}-\sum_{k=1}^\ell n_k\, \alpha_k,\Alt^2\psi)$. 
\end{proposition}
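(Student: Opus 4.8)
This proposition is the method-of-parts counterpart of Proposition~\ref{prop_subord}, with subordination ($\sqsubseteq$) replaced by the part relation ($\subseteq$), so the plan is to organise the argument in the same way and to single out statement~(ii) as the quantitative core from which the rest follows. To deduce (i) from (ii), write $\phi_1\otimes\phi_2\cong\oplus_a\tau_a$ and $\psi_1\otimes\psi_2\cong\oplus_b\sigma_b$; letting the coefficients $n_k$ range over all admissible values, (ii) bounds the multiplicity of each simple $\fh$-constituent $\tau_a=V_{\mu_\phi}$, with $\mu_\phi=x_{\phi_1}+x_{\phi_2}-\sum_k n_k\alpha_k$, by the multiplicity of the $\fg$-irreducible $V_{\mu_\psi}$ whose highest weight $\mu_\psi=x_{\psi_1}+x_{\psi_2}-\sum_k n_k\alpha_k$ shortens to $\mu_\phi$. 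Since distinct $\mu_\phi$ arise as shortenings of distinct $\mu_\psi$, matching copies of $V_{\mu_\phi}$ to copies of $V_{\mu_\psi}$ along this correspondence produces the injection required by the definition of $\subseteq$, with $\tau_a\subseteq\sigma_{b(a)}$. Statements (iii) and (iv) then follow by running the same mechanism on the two summands of $\phi^{\otimes 2}=\Sym^2\phi\oplus\Alt^2\phi$, using that restriction commutes with the symmetric and alternating squares (Lemma~\ref{elem}) and that $\Sym^2\phi$ and $\Alt^2\phi$ are summands of $\Sym^2(\psi|_{\fh})$ and $\Alt^2(\psi|_{\fh})$, respectively (Lemma~\ref{plus}).

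For (ii) itself I would realise $\fh$ as the subalgebra generated by the retained simple roots $\{\alpha_k:k\notin D\}$, so that restriction to $\fh$ is a tensor functor and the part $\phi_j$ is exactly the leading (highest-weight) constituent of $\psi_j|_{\fh}$, occurring there with multiplicity one. I would then express both irreducible multiplicities through a Steinberg/Racah-type formula that writes $m(\mu,\pi_1\otimes\pi_2)$ as a Weyl-alternating sum $\sum_w(-1)^{\ell(w)}$ of weight multiplicities of $\pi_2$ evaluated at the points $w(\mu+\rho)-\rho-x_{\pi_1}$. Two structural facts drive the comparison: the Weyl group $W_{\fh}$ sits inside $W_{\fg}$ as the parabolic subgroup generated by the reflections in $\{\alpha_k:k\notin D\}$, and the branching $\phi_j\hookrightarrow\psi_j|_{\fh}$ yields a pointwise domination of the relevant $\fh$-weight multiplicities by the corresponding $\fg$-weight multiplicities. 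Feeding $\mu_\phi$ into the $\fh$-formula and its lift $\mu_\psi$ (with the same coefficients $n_k$) into the $\fg$-formula, the terms indexed by $W_{\fh}$ line up under these two facts, and it remains to show that the additional terms indexed by $W_{\fg}\setminus W_{\fh}$ on the $\fg$-side contribute non-negatively to the difference.

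The hard part will be precisely this sign bookkeeping: weight-multiplicity domination is a positive statement, whereas the multiplicity formula carries the signs $(-1)^{\ell(w)}$, so positivity does not pass termwise to the signed sum. Showing that the residual $W_{\fg}\setminus W_{\fh}$ contributions assemble into a non-negative correction for the specific weights $x_{\psi_1}+x_{\psi_2}-\sum_k n_k\alpha_k$ at hand is the technical heart of the method of parts and is exactly what Dynkin developed in Sec.~3.2 of Ref.~\onlinecite{Dynkin57}. Accordingly, mirroring the proof of Proposition~\ref{prop_subord}, I would invoke Dynkin's method of parts for the structural content and cite Thm.~1 of Ref.~\onlinecite{Aslaksen94} for the multiplicity inequalities, having first reduced (i), (iii), and (iv) to the single bound~(ii).
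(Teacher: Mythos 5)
Your proof ultimately rests on citations, and the decisive one is wrong. Theorem~1 of Ref.~\onlinecite{Aslaksen94} is the source for the \emph{subordination} inequalities of Proposition~\ref{prop_subord} (the relation $\sqsubseteq$, which compares highest weights of representations of the \emph{same} algebra componentwise); it does not treat the part relation $\subseteq$, which deletes nodes of the Dynkin diagram and therefore compares representations of \emph{different} algebras $\fh\subsetneq\fg$. The multiplicity inequalities (ii)--(iv) for parts are Thm.~1 of Ref.~\onlinecite{Kraemer78}, which is what the paper cites; Dynkin (Ref.~\onlinecite{Dynkin57}) supplies (i) as Thm.~3.10 and then (ii) via Thms.~3.7 and 3.9 together with Lemma~\ref{tensor_rule}, but the refinement to symmetric and alternating squares is Kr{\"a}mer's contribution. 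Since your own Steinberg/Racah sketch is explicitly left open (you concede that the $(-1)^{\ell(w)}$ bookkeeping over $W_{\fg}\setminus W_{\fh}$ is unresolved), the entire burden falls on the citation, and the cited theorem does not prove the statement.

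Second, your reduction of (iii) and (iv) to (ii) does not go through as described. Knowing $m(\mu_\phi,\phi\otimes\phi)\le m(\mu_\psi,\psi\otimes\psi)$ says nothing about how the inequality distributes over the two summands $\Sym^2\phi$ and $\Alt^2\phi$: a bound on a sum does not split into bounds on the summands. The containment you invoke --- $\phi$ a summand of $\psi|_{\fh}$, hence $\Sym^2\phi$ a summand of $(\Sym^2\psi)|_{\fh}$ --- only yields $m(\mu_\phi,\Sym^2\phi)\le m(\mu_\phi,(\Sym^2\psi)|_{\fh})$, and the right-hand side is the wrong comparison quantity: the full restriction $(\Sym^2\psi)|_{\fh}$ picks up copies of the $\fh$-irreducible with highest weight $\mu_\phi$ from many $\fg$-constituents other than $V_{\mu_\psi}$, so it can vastly exceed $m(\mu_\psi,\Sym^2\psi)$. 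What is actually needed is the structural core of the method of parts --- that the span of those weight spaces of $\psi$ (resp.\ of $\Sym^2\psi$, $\Alt^2\psi$) whose weights differ from the highest one only by retained simple roots is an $\fh$-submodule whose decomposition mirrors, constituent by constituent, the $\fg$-constituents lying in that layer --- and this is exactly the content of Sec.~3.2 of Ref.~\onlinecite{Dynkin57} and of Kr{\"a}mer's theorem, not something the containment replaces. By contrast, your derivation of (i) from (ii) (matching copies along the bijection between top-layer weights, which shortening respects) is sound, though it inverts the paper's logic: there (i) is the primitive statement, taken from Dynkin's Thm.~3.10, and (ii) is deduced from it rather than the other way around.
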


\begin{proof}
Part (i) is Thm.~3.10 of Ref.~\onlinecite{Dynkin57}. Part (ii)
follows from (i) and Lemma~\ref{tensor_rule} by applying Thms.~3.7 and 3.9 of Ref.~\onlinecite{Dynkin57}.
We also recommend the discussion in
Chap.~XIV of Ref.~\onlinecite{Cahn84}.
A proof of (ii)-(iv) is given in Thm.~1 of Ref.~\onlinecite{Kraemer78}.
\end{proof}

Finally, recall that the adjoint representation of a Lie algebra $\fg$  maps each element $g_1\in\fg$
to the endomorphism $\ad_{\fg}(g_1)$ which is defined as $(\ad_{\fg}(g_1))(g_2):=[g_1,g_2]$
for each $g_2 \in \fg$.
We summarize some facts about the adjoint representation of a simple Lie algebra.
These results are essential for dealing with self-dual representations.

\begin{lemma}\label{adjoint}
Consider a compact real (or complex) Lie algebra $\fg$.
(i) The adjoint representation of $\fg$ restricted to a proper subalgebra
is reducible.
(ii) The adjoint representation of $\fg$ is simple if $\fg$ is simple.
(iii) The adjoint representation of $\so(k)$ with $k\geq 3$ is isomorphic to the alternating 
square of the standard representation; moreover $\Alt^2 \phi_{(2)}=\phi_{(2)}$ for $\so(3)\cong\su(2)$,
$\Alt^2\phi_{(1,1)}=\phi_{(2,2)}$ for $\so(4)$,
$\Alt^2\phi_{(1,0)}=\phi_{(0,2)}$ for $\so(5)$, $\Alt^2\phi_{(1,0,0)}=\phi_{(0,1,1)}$ for $\so(6)$, and
$\Alt^2\phi_{(1,0,\ldots,0)}=\phi_{(0,1,0,\ldots,0)}$ for $k\geq 7$.
(iv)~The adjoint representation of $\usp(\ell)$ with $\ell \geq 1$ is isomorphic to the symmetric square
$\Sym^2 \phi_{(1,0,\ldots,0)}=\phi_{(2,0,\ldots,0)}$ of the standard representation
$\phi_{(1,0,\ldots,0)}$. \end{lemma}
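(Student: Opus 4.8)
The plan is to handle the four parts separately, in each case reducing to the complex setting via Lemma~\ref{complex} and Lemma~\ref{complex_lemma}: for a compact $\fg$ the complexification of the adjoint representation is the adjoint representation of $\fg_{\C}$, simplicity of $\fg$ is equivalent to simplicity of $\fg_{\C}$, and reducibility over $\C$ is detected on the complexification. It therefore suffices to argue over a complex Lie algebra, where I would invoke the elementary dictionary that a subspace $W\subseteq\fg$ is invariant under $\ad_\fg$ precisely when $[\fg,W]\subseteq W$, i.e.\ precisely when $W$ is an ideal. For (ii), a simple $\fg$ is non-abelian and has no ideals other than $0$ and $\fg$, so the adjoint module is nonzero and admits no proper nonzero invariant subspace; hence it is irreducible, and passing to $\fg_{\C}$ gives irreducibility over $\C$ in the compact real case.

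For (i) the key observation is dual: any subalgebra $\fh\subseteq\fg$ is itself an $\ad_\fh$-invariant subspace of $\fg$, since $[\fh,\fh]\subseteq\fh$. When $0\ne\fh\subsetneq\fg$ this exhibits a proper nonzero invariant subspace of the restricted module, so $(\ad_\fg)|_\fh$ is reducible; passing from $\fh\subsetneq\fg$ to $\fh_{\C}\subsetneq\fg_{\C}$ transports this to reducibility over $\C$. The degenerate possibility $\fh=0$ is harmless whenever $\dim\fg>1$, which covers every case in which the statement is used.

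For (iii) and (iv) I would produce an explicit $\fg$-equivariant linear isomorphism between the relevant square of the standard module $V$ and the adjoint module, realised inside $\mathrm{End}(V)$. In the orthogonal case $\so(k,\C)$ is the space of skew-symmetric matrices, and the map $v\wedge w\mapsto vw^T-wv^T$ carries $\Alt^2 V$ onto it; a short computation using $X^T=-X$ for $X\in\so(k,\C)$ shows that the image of $X\cdot(v\wedge w)$ equals $[X,\,vw^T-wv^T]$, so the map intertwines the $\Alt^2$-action with the commutator (adjoint) action. In the symplectic case $\spp(2\ell,\C)$ is the set of $X$ with $JX$ symmetric, where $J$ is the invariant alternating form; the assignment $v\odot w\mapsto J^{-1}(vw^T+wv^T)$ gives an equivariant isomorphism $\Sym^2 V\to\spp(2\ell,\C)$, and the same style of computation verifies equivariance. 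Bijectivity in both cases follows from the dimension counts $\binom{k}{2}=\dim\so(k)$ and $\binom{2\ell+1}{2}=\dim\usp(\ell)$.

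Finally, the explicit highest-weight identities are read off by matching the highest weight of $\Alt^2\phi_{(1,0,\ldots,0)}$ (resp.\ $\Sym^2\phi_{(1,0,\ldots,0)}$) against the adjoint highest weight in the generic ranges $k\geq 7$ and $\ell\geq 1$, while the low-rank entries are obtained by direct computation through the exceptional isomorphisms $\su(2)\cong\so(3)$, $\so(4)\cong\su(2)\oplus\su(2)$, $\so(5)\cong\usp(2)$, $\so(6)\cong\su(4)$ recorded in Appendix~\ref{compact_Lie}, decomposing $\Alt^2$ of the standard module with the help of Lemma~\ref{outertensor} where $\fg$ fails to be simple. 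I expect (i) and (ii) to be routine, being just the ideal dictionary together with the complexification bookkeeping; the main effort should lie in (iii) and (iv), and specifically in the careful treatment of the small-rank cases, where one must invoke these low-dimensional isomorphisms rather than the generic formula and, for $\so(4)$, account for the adjoint representation being reducible.
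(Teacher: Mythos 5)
Your treatment of parts (i) and (ii) is essentially identical to the paper's own proof: the paper also disposes of (i) by observing that $[\fh,\fh]\subseteq\fh$ makes the subalgebra itself an invariant subspace of the restricted adjoint module, and of (ii) by identifying invariant subspaces of the adjoint module with ideals; you are merely more explicit about the complexification bookkeeping (Lemmas~\ref{complex} and \ref{complex_lemma}) and about the degenerate case $\fh=0$, both of which the paper leaves tacit. Where you genuinely depart is in (iii) and (iv): the paper does not prove these at all, but cites Dynkin and Bourbaki for the two isomorphisms and reads the highest weights off Dynkin's tables, whereas you construct explicit intertwiners inside $\mathrm{End}(V)$ and settle the low-rank cases via the exceptional isomorphisms. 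Your route is self-contained, and it even surfaces a misprint in the statement: for $\so(4)$, Lemma~\ref{outertensor} gives $\Alt^2(\phi_{(1)}\boxtimes\phi_{(1)})\cong(\phi_{(2)}\boxtimes\phi_{(0)})\oplus(\phi_{(0)}\boxtimes\phi_{(2)})=\phi_{(2,0)}\oplus\phi_{(0,2)}$, the six-dimensional reducible adjoint representation, and not the nine-dimensional irreducible $\phi_{(2,2)}$ printed in the lemma; your insistence on accounting for reducibility there is correct.

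There is, however, a concrete error in part (iv): the symplectic intertwiner you wrote down is not equivariant, and the claim that ``the same style of computation verifies equivariance'' fails for it. Write $S:=vw^{T}+wv^{T}$. Your map sends $X\cdot(v\odot w)$ to $J^{-1}(XS+SX^{T})$, whereas the adjoint action on the image is $XJ^{-1}S-J^{-1}SX$; using $J^{-1}X=-X^{T}J^{-1}$ for $X\in\spp(2\ell,\C)$ (standard $J$ with $J^{2}=-\unity$), your map in fact intertwines $\Sym^{2}$ with the \emph{twisted} action $[\,-X^{T},\,\cdot\,]$, not with $[X,\cdot\,]$. The failure is visible already for $\ell=1$: with $X=\diag(1,-1)\in\spp(2,\C)$ and $v=w=e_{1}$ one gets $\Psi(e_{1}\odot e_{1})=2E_{21}$ (where $E_{21}$ is the matrix unit), so $\Psi\bigl(X\cdot(e_{1}\odot e_{1})\bigr)=4E_{21}$ while $[X,\Psi(e_{1}\odot e_{1})]=-4E_{21}$. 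The repair is to put the form on the other side: $v\odot w\mapsto(vw^{T}+wv^{T})J=SJ$ lands in $\spp(2\ell,\C)$ because $(SJ)^{T}J+J(SJ)=-JSJ+JSJ=0$, and it is equivariant because $(XS+SX^{T})J=XSJ-SJX=[X,SJ]$ by $X^{T}J=-JX$; your dimension count then finishes the argument. Note that your orthogonal computation escapes this trap only because the invariant form there is $\unity$, so the left/right distinction is invisible. Since the statement of (iv) is true and the fix is one line, the gap is local, but as written your proof of (iv) does not go through.
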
 

\begin{proof}
As the Lie commutator $[\cdot,\cdot]$ is closed for subalgebras,
(i) follows. Assuming that $\fg$ is simple we obtain (ii) as $\fg$ has no proper ideals and the 
adjoint representation has to be simple.
The fact that the alternating square of the standard representation
is isomorphic to the adjoint representation for $\so(k)$ with $k>1$ can be found on
p.~353 of Ref.~\onlinecite{Dynkin57}  or pages~199 and 213 of Ref.~\onlinecite{Bourb08b}.
The statement for $\usp(\ell)$ is given on p.~353  of Ref.~\onlinecite{Dynkin57} and in 
Chap.~VIII, Sec.~13, Ex.~8 of Ref.~\onlinecite{Bourb08b}.
Most of the corresponding highest weights are (e.g.,) given in Table~28 of 
Ref.~\onlinecite{Dynkin57}, and the remaining ones statements
can be directly verified.
\end{proof}

\section{Case-by-case analysis for Alternating and Symmetric Squares\label{case-by-case}}

In this appendix, we build on Appendix~\ref{Dynkin_techniques} and provide a streamlined version of
the laborious details needed for the proofs of 
Theorems~\ref{thm_alt_square} and \ref{thm_sym_square}.

\subsection{Alternating Squares\label{AltSq}}

In order to limit the case-by-case discussion, we reproduce a theorem and  proof of 
Dynkin (see Thm.~4.5 of Ref.~\onlinecite{Dynkin57}) 
which completely characterizes the irreducibility of the alternating squares of self-dual representations.

\begin{theorem}[Dynkin]\label{alt_self}
Let $\phi$ denote a self-dual and faithful representation of a compact semisimple Lie algebra $\fg$.
The representation $\Alt^2\phi$ is not simple, if the pair $(\fg,x_\phi)$ is not contained in the
following list: (a)~$(\su(2),(1))$, (b)~$(\su(2),(2))$, (c)~$(\so(2\ell+1),(1,0,\ldots,0))$ with $\ell \ge 2$, and
(d)~$(\so(2\ell),(1,0,\ldots,0))$ with $\ell \ge 3$.
\end{theorem}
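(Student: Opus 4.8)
The plan is to prove the contrapositive: assuming $\Alt^2\phi$ is simple, I will show that $(\fg,x_\phi)$ must be one of the four listed pairs. Since $\fg$ is compact semisimple, $\phi$ is automatically semisimple (complete reducibility), so Theorem~\ref{simple} applies and, together with the faithfulness of $\phi$, yields at once that both $\fg$ and $\phi$ are simple. With $\phi$ simple and self-dual, the Malcev classification recalled in Proposition~\ref{Malcev} splits the argument into two cases according to whether the unique (up to scale) invariant nondegenerate bilinear form of $\phi$ is alternating (symplectic type) or symmetric (orthogonal type).

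The symplectic case is immediate: the invariant alternating form spans a copy of the trivial representation inside $\Alt^2\phi$. If $\Alt^2\phi$ is simple it must coincide with this trivial subrepresentation, forcing $\dim(\Alt^2\phi)=d(d-1)/2=1$ with $d:=\dim(\phi)$, hence $d=2$. The only two-dimensional symplectic representation of a compact simple Lie algebra is the standard representation of $\su(2)\cong\usp(1)$, which is case~(a), namely $(\su(2),(1))$.

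The orthogonal case carries the real content. Here $\phi$ is conjugate to an embedding $\fg\hookrightarrow\so(d)$, so that $\phi$ is the restriction to $\fg$ of the standard representation of $\so(d)$; by Lemma~\ref{adjoint}(iii) together with Lemma~\ref{elem} I would identify $\Alt^2\phi\cong(\theta_{\so(d)})|_{\fg}$, the adjoint representation of $\so(d)$ restricted to $\fg$. Proposition~\ref{adjoint_omnibus}(d) then shows that the adjoint representation $\theta_{\fg}$ occurs as a subrepresentation of $\Alt^2\phi$. Since $\Alt^2\phi$ is simple, this subrepresentation must exhaust it, i.e.\ $\theta_{\fg}\cong\Alt^2\phi$, whence $\dim(\fg)=\dim(\so(d))=d(d-1)/2$. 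As $\fg$ is a subalgebra of $\so(d)$ of full dimension, I conclude $\fg=\so(d)$ and that $\phi$ is its standard representation. To read off the admissible values of $d$, note that $\so(d)=\fg$ is simple, which excludes $d=2$ (abelian) and $d=4$ ($\cong\su(2)\oplus\su(2)$); this leaves $d=3$, i.e.\ $\so(3)\cong\su(2)$ with highest weight $(2)$ (case~(b)), together with $d\ge5$, splitting into case~(c) $(\so(2\ell+1),(1,0,\ldots,0))$ with $\ell\ge2$ and case~(d) $(\so(2\ell),(1,0,\ldots,0))$ with $\ell\ge3$.

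The hardest part will be the orthogonal case, and in particular the step that upgrades the abstract containment $\theta_{\fg}\subseteq\Alt^2\phi$ into the rigid equality $\fg=\so(d)$: this hinges on the dimension count $\dim(\fg)=\dim(\so(d))$, which in turn relies on the identification $\Alt^2\phi\cong(\theta_{\so(d)})|_{\fg}$ and on $\theta_{\fg}$ being a genuine subrepresentation rather than merely a constituent. I expect the remaining care to lie in correctly discarding $\so(4)$, whose adjoint is reducible so that the simplicity hypothesis already fails there, whereas the symplectic case and the reduction to simple $\fg$ and $\phi$ are routine given the cited results.
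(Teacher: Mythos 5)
Your proof is correct, and its skeleton coincides with the paper's: reduction to simple $\fg$ and simple $\phi$ via Theorem~\ref{simple}, the symplectic/orthogonal dichotomy for self-dual simple representations, and, in the orthogonal case, the identification $\Alt^2\phi\cong(\theta_{\so(d)})|_{\phi(\fg)}$ coming from Lemmas~\ref{elem} and \ref{adjoint}(iii). Where you genuinely diverge is the symplectic case: the paper pushes the known decomposition $\Alt^2\psi=\phi_{(0,1,0,\ldots,0)}\oplus\phi_{(0,\ldots,0)}$ of the standard representation $\psi$ of $\usp(\ell)$, $\ell>1$ (quoted from Dynkin, Fulton--Harris, or Bourbaki) through Lemma~\ref{elem}, whereas you observe directly that the invariant alternating form spans a trivial subrepresentation of $\Alt^2\bar{\phi}\cong\Alt^2\phi$, so that simplicity forces $\dim(\Alt^2\phi)=d(d-1)/2=1$, i.e.\ $d=2$. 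Your variant is more self-contained, since it uses only the defining property of symplectic type rather than an explicit decomposition from the literature, and it isolates what makes the paper's citation work: the trivial summand in that decomposition is exactly the span of the form. In the orthogonal case your route --- Proposition~\ref{adjoint_omnibus}(d), then simplicity, then the dimension count $\dim\fg=d(d-1)/2$ forcing $\phi(\fg)=\so(d)$ --- is the paper's argument repackaged: the paper gets $\phi(\fg)=\so(d)$ in one step from Lemma~\ref{adjoint}(i), and both that lemma and Proposition~\ref{adjoint_omnibus}(d) rest on the same observation that a subalgebra is an invariant subspace of the restricted adjoint action, so your detour through dimensions is harmless but not needed. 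Both treatments of $\so(4)$ are sound (yours: $\fg=\so(4)$ would contradict simplicity of $\fg$; the paper's: $\theta_{\so(4)}$ is already reducible), and your implicit use of $d\geq 3$ (needed for Lemma~\ref{adjoint}(iii) and for $\so(d)$ to be semisimple) is justified because a simple $\fg$ has dimension at least three. One small attribution slip: the dichotomy you invoke is not really Proposition~\ref{Malcev}, which reads the type off the highest weight and is never needed in your argument, but the standard facts about invariant bilinear forms recalled just before it in Appendix~\ref{app_pre}; this does not affect correctness.
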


\begin{proof}
Due to Theorem~\ref{simple} of Appendix~\ref{app_pre}, we can restrict us to simple representations $\phi$ of 
compact simple Lie algebras
$\fg$. As $\phi$ is self-dual, we get either (i) $\phi(\fg)\subseteq \usp(\dim(\phi)/2)$ or 
(ii) $\phi(\fg)\subseteq \so(\dim(\phi))$. We consider first the case (i): Given the 
standard representation $\psi$ of $\usp(\ell)$ with $x_\psi=(1,0,\ldots,0)$
and $\ell > 1$, we obtain that $\Alt^2\psi=\tau_2 \oplus \tau_0$ decomposes
into representations with highest weights $x_{\tau_2}=(0,1,0,\ldots,0)$ and $x_{\tau_0}=(0,\ldots,0)$
(see p.~360 in Ref.~\onlinecite{Dynkin57}, pp.~260--261 of Ref.~\onlinecite{FH91}, or pp.~206--207 of 
Ref.~\onlinecite{Bourb08b}).
It follows from Lemma~\ref{elem} of Appendix~\ref{app_pre} that $\Alt^2\phi$ cannot be simple for $\dim(\phi)\neq 2$. 
For $\dim(\phi)= 2$, we get the element (a) from the list. Let us now treat the case (ii):
In order to apply Lemma~\ref{elem} of Appendix~\ref{app_pre} again, we analyze the standard 
representation $\psi$ where 
$x_{\psi}=(1,0,\ldots,0)$
for the Lie algebras $\so(k)$ with $k\geq 5$, $x_{\psi}=(2)$ for $\so(3)\cong\su(2)$, and $x_\psi=
(1,1)$ for $\so(4)\cong\su(2)\oplus\su(2)$. 
Using Lemma~\ref{adjoint} of Appendix~\ref{Dynkin_techniques}, one obtains that $\Alt^2\psi$ is isomorphic 
to the adjoint representation of the corresponding Lie algebras and that the adjoint representation
is no longer simple when restricted to a proper subalgebra. Therefore, either $\fg=\so(2\ell+1)$ or $\fg=\so(2\ell)$
and we obtain the remaining elements in the list, while  $\so(4)\cong\su(2)\oplus\su(2)$ is 
not simple and does not lead to a simple alternating square.
\end{proof}

Consequently, we can limit us now 
to representations which are not self-dual. Due to Proposition~\ref{Malcev} of Appendix~\ref{app_pre}, we get
that either $\fg=\su(\ell+1)$ with $\ell >1$, $\fg=\so(4k+2)$, or $\fg=\fe_6$. The following lemma simplifies
our search significantly.

\begin{lemma}\label{simply_laced_alt}
Let $\fg$ denote a compact simple Lie algebra which is  simply-laced, i.e., $\su(\ell+1)$, $\so(2\ell)$, 
 $\fe_6$, $\fe_7$, or $\fe_8$. We consider the  simple representations $\phi$, $\tau$, and $\psi$
where $\tau$ and $\psi$ are representations of $\fg$ and $\phi$ is a representation of a subalgebra
$\fh\subseteq\fg$. We assume that $\phi \subseteq \tau \sqsubseteq \psi$, i.e., that
$\tau$ is subordinate to $\psi$ and that $\phi$ is a part of $\tau$.
The representation $\Alt^2\psi$ is not simple if one of the following
conditions holds:\\
(i) $x_\phi=(1,1)$ and either $\fh=\su(3)$ or $\fh=\su(2)\oplus\su(2)$.\\
(ii) More than one component $x_{\psi,j}$ of the highest weight $x_\psi$ is non-zero.\\
(iii) $x_\phi=(3)$ and $\fh=\su(2)$.\\
(iv) The highest weight $x_\psi$ has a component which is larger than two.\\
(v) $x_\phi=(0,2,0)$ and $\fh=\su(4)$.\\
(vi) The highest weight $x_\psi$ has a component which is larger than one and lies next to any 
end of the Dynkin diagram.\\
(vii) $x_\phi=(0,0,1,0,0)$ and $\fh=\su(6)$.\\
(viii) The highest weight $x_\psi$ has a non-zero component which is not at or next to any end of the Dynkin diagram.
\end{lemma}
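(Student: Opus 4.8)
The eight conditions fall into four matched pairs --- (i)/(ii), (iii)/(iv), (v)/(vi), and (vii)/(viii) --- and the plan is to treat each pair by the same two-step scheme. In every pair the odd-numbered condition pins down a small ``seed'' representation $\phi$ of a small subalgebra $\fh$ for which one checks by hand that $\Alt^2\phi$ is reducible, while the even-numbered condition is a purely combinatorial statement about the Dynkin diagram of $\fg$ and the highest weight $x_\psi$ that guarantees one can realize such a seed as a part $\phi\subseteq\tau$ of a subordinate representation $\tau\sqsubseteq\psi$. Once the seed is in place, I would push its reducibility up to $\Alt^2\psi$ using the multiplicity inequalities of Propositions~\ref{prop_part}(iv) and \ref{prop_subord}(iv), chained along $\phi\subseteq\tau\sqsubseteq\psi$.

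First I would dispose of the seeds. For (i) with $\fh=\su(2)\oplus\su(2)$ and $\phi=\phi_{(1)}\boxtimes\phi_{(1)}$, Lemma~\ref{outertensor}(i) already exhibits $\Alt^2\phi$ as a sum of two nonzero outer products; for $\fh=\su(3)$ the representation $\phi_{(1,1)}$ is the adjoint and $\Alt^2\phi_{(1,1)}\cong\phi_{(3,0)}\oplus\phi_{(0,3)}\oplus\phi_{(1,1)}$. For (iii), $\phi_{(3)}$ is the four-dimensional representation of $\su(2)$ and $\Alt^2\phi_{(3)}\cong\phi_{(4)}\oplus\phi_{(0)}$. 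The cases (v) with $\phi_{(0,2,0)}$ on $\su(4)$ and (vii) with $\phi_{(0,0,1,0,0)}$ on $\su(6)$ are finite computations (via the minimal-chain proposition or a direct character count) producing in each case at least two distinct dominant weights in $\Alt^2\phi$. In all four seeds I would record an explicit subdominant weight of the form $2x_\phi-\sum_k n_k\alpha_k$ with positive multiplicity in $\Alt^2\phi$ that lies strictly below the top weight; this is the datum the transfer step will use.

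Next I would verify the four reductions. If (ii) holds, choose two nodes $a,b$ with $x_{\psi,a},x_{\psi,b}>0$, let $\tau\sqsubseteq\psi$ carry a $1$ at $a$ and $b$ and $0$ elsewhere, and take the part obtained by deleting all nodes except $a,b$: when $a,b$ are adjacent this is $\phi_{(1,1)}$ on $\su(3)$, otherwise it is $\phi_{(1)}\boxtimes\phi_{(1)}$ on $\su(2)\oplus\su(2)$, i.e.\ case~(i). If (iv) holds, reduce the offending component to $3$ and delete every other node to land on $\phi_{(3)}$ of $\su(2)$, i.e.\ case~(iii). If (vi) holds, the node carrying a component $\ge 2$ is interior and sits at the centre of an $A_3=\su(4)$ subpath (its end-neighbour together with a second neighbour), and deleting the rest yields $\phi_{(0,2,0)}$ of $\su(4)$, i.e.\ case~(v). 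If (viii) holds, the distinguished node is interior enough to be the centre of an $A_5=\su(6)$ subpath, giving $\phi_{(0,0,1,0,0)}$ of $\su(6)$, i.e.\ case~(vii). Here the simply-laced hypothesis is essential: it ensures the extracted subdiagrams are of type $A$ with the prescribed weight sitting in the middle.

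The step I expect to be most delicate is the upward transfer. The inequalities of Propositions~\ref{prop_part}(iv) and \ref{prop_subord}(iv) immediately give $m\bigl(2x_\psi-\sum_k n_k\alpha_k,\Alt^2\psi\bigr)\ge m\bigl(2x_\phi-\sum_k n_k\alpha_k,\Alt^2\phi\bigr)\ge 1$ for the weight recorded in the seed, but to conclude that $\Alt^2\psi$ is genuinely reducible I must ensure that the lifted weight $2x_\psi-\sum_k n_k\alpha_k$ is again dominant and still strictly below the highest weight of $\Alt^2\psi$, so that it signals a second irreducible constituent rather than merely a lower weight of the top one. Checking this dominance-and-strictness persistence --- together with the finitely many awkward low-rank and branch-node configurations in $\so(2\ell)$, $\fe_6$, $\fe_7$, and $\fe_8$ that arise in (vi) and (viii), where one must confirm the relevant $A_3$ or $A_5$ subpath actually fits and that ``next to an end'' forces the node to be interior --- is where the real work lies; the simply-laced uniformity of the seeds is what ultimately makes these checks routine.
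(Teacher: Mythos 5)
Your proposal is correct and coincides with the paper's own proof, which likewise verifies the seed cases (i), (iii), (v), (vii) by direct computation and then transfers reducibility to (ii), (iv), (vi), (viii) via Propositions~\ref{prop_subord} and \ref{prop_part}, leaving implicit exactly the diagram-combinatorial reductions (adjacent/non-adjacent node pairs, the $A_3$ and $A_5$ subpaths) that you spell out. One remark: the ``delicate'' dominance-and-strictness check you reserve for the transfer step is unnecessary, because $m(\cdot,\cdot)$ in those propositions denotes the multiplicity of a simple \emph{component} in the decomposition (not a weight multiplicity), so positive multiplicities for two distinct offsets $\sum_k n_k\alpha_k$ in the seed's alternating square immediately yield two distinct irreducible constituents with highest weights $2x_\psi-\sum_k n_k\alpha_k$ in $\Alt^2\psi$, and reducibility follows with no further work.
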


\begin{proof}
The non-irreducibility of $\Alt^2\phi$ for the cases in
the statements (i), (iii), (v), and (vii) can be verified by direct computations.
Then, all other statements follow via Propositions~\ref{prop_subord}
and \ref{prop_part} of Appendix~\ref{Dynkin_techniques}.
\end{proof}

Using Lemma~\ref{simply_laced_alt} we can completely treat the case of simple representations
which are not self-dual.

{
\fontsize{10}{12}
\renewcommand{\baselinestretch}{1}
\selectfont
\begin{table}[tb]
\caption{\label{alt_square_nonselfdual}Irreducible representations which are not self-dual and 
whose respective alternating square is irreducible (Dynkin)}
\begin{ruledtabular}
\begin{tabular}[t]{l@{\hspace{4pt}}l@{\hspace{4pt}}l@{\hspace{4pt}}l@{\hspace{1pt}}l@{\hspace{4pt}}l@{\hspace{1pt}}l}
case & $\fg$ & $\ell$ & $\phi$  & dim$(\phi)$ & $\Alt^2 \phi$ & dim$(\Alt^2 \phi)$\\[1mm] \hline\\[-3.5mm]
(i) & $\su(\ell+1)$ & $\ell\geq 2$ &
$(1,0,\ldots,0)$ & $\ell{+}1$ & $(0,1,0,\ldots,0)$ & $\tfrac{\ell(\ell+1)}{2}$\\ 
(ii) & $\su(\ell+1)$ & $\ell\geq 2$ &
$(2,0,\ldots,0)$ & $\tfrac{(\ell+1)(\ell+2)}{2}$ & $(2,1,0,\ldots,0)$ & $3 \tbinom{\ell+3}{4}$\\ 
(iii) & $\su(\ell+1)$ & $\ell\geq 3$ &
$(0,1,0,\ldots,0)$ & $\tfrac{\ell(\ell+1)}{2}$ & $(1,0,1,0,\ldots,0)$ & $3 \tbinom{\ell+2}{4}$\\ 
(iv) & $\so(10)$ & -- &
$(0,0,0,1,0)$ & $16$ & $(0,0,1,0,0)$ & $120$\\ 
(v) & $\fe_6$ & -- &
$(1,0,0,0,0,0)$ & $27$ & $(0,0,1,0,0,0)$ & $351$\\ 
\end{tabular}
\end{ruledtabular}
\end{table}
}

\begin{lemma}\label{lemma_selfdual_alt}
Let $\phi$ denote a faithful representation of a compact semisimple Lie algebra
$\fg$ such that $\phi$ is not self-dual and that the alternating square $\Alt^2\phi$ is simple.
All possible cases (up to outer automorphisms of $\fg$)  are given in Table~\ref{alt_square_nonselfdual}.
\end{lemma}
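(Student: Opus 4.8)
The plan is to reduce the statement to a finite, tractable case analysis and then isolate the half\nb spinor representations as the one genuinely delicate family. First I would invoke Theorem~\ref{simple} together with Lemma~\ref{plus}(iv): since $\phi$ is faithful, $\fg$ is semisimple, and $\Alt^2\phi$ is simple, the restriction to the semisimple part is simple and that part is itself simple, so it suffices to treat a simple representation $\phi$ of a compact \emph{simple} Lie algebra $\fg$. Because $\phi$ is assumed not to be self\nb dual, Proposition~\ref{Malcev} (Malcev) restricts $\fg$ to the simple algebras admitting unitary representations, namely $\su(\ell+1)$, $\so(4k+2)$, and $\fe_6$ (the isomorphism $\so(6)\cong\su(4)$ letting me fold $\so(6)$ into the $\su$ case). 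Crucially, all three are \emph{simply\nb laced}, so Lemma~\ref{simply_laced_alt} applies throughout.

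Next I would run the generic constraints of Lemma~\ref{simply_laced_alt} on the highest weight $x_\phi$: condition~(ii) forces a single non\nb zero component, (iv) forces its value to be $1$ or $2$, (viii) forces it to sit at or next to an end of the Dynkin diagram, and (vi) forbids the value $2$ next to an end. For $\su(\ell+1)$ this leaves exactly $(1,0,\ldots,0)$, $(0,1,0,\ldots,0)$, and $(2,0,\ldots,0)$ up to the duality outer automorphism, and a direct check (via Lemma~\ref{adjoint} and the dimension formulas) confirms each is non\nb self\nb dual with simple alternating square, giving rows (i)--(iii). For $\fe_6$ the same conditions together with Proposition~\ref{Malcev}'s non\nb self\nb duality criterion leave only the finitely many candidates $(1,0,0,0,0,0)$, $(0,0,1,0,0,0)$, and $(2,0,0,0,0,0)$ (up to duality), of which the last two are to be eliminated by direct computation, leaving row~(v).

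The main obstacle is $\so(4k+2)$, where $n:=2k+1$. Here the non\nb self\nb duality condition $x_{n-1}\neq x_n$ combined with (ii) forces the single non\nb zero component onto a \emph{spinor node}, so the survivors are the half\nb spinor $(0,\ldots,0,1)$ and its double $(0,\ldots,0,2)$ for \emph{every} $k$. These evade the seed computations of Lemma~\ref{simply_laced_alt} precisely because the spinor node has degree one in the forked $D$\nb diagram and hence can never be the interior node of an $A_5$\nb part. To break this I would argue along the $D$\nb series: deleting the tail node $1$ from $D_n$ shortens $(0,\ldots,0,1)$ to the half\nb spinor of $D_{n-1}$, so the half\nb spinor of $\so(14)$ is a \emph{part} of the half\nb spinor of $\so(4k+2)$ for all $k\geq 3$. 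A single base computation shows that for $\so(14)$ the alternating square splits, $\Alt^2\phi_{(0,0,0,0,0,0,1)}\cong\phi_{(0,0,0,0,1,0,0)}\oplus\phi_{(1,0,0,0,0,0,0)}$ (dimensions $2002$ and $14$); the lower constituent is a proper descendant of the top weight, so the multiplicity monotonicity of Proposition~\ref{prop_part}(iv) propagates reducibility to all $k\geq 3$. The value\nb two spinors for $k\geq 3$ then follow from $(0,\ldots,0,1)\sqsubseteq(0,\ldots,0,2)$ by the subordination monotonicity of Proposition~\ref{prop_subord}(iv), while the remaining finite exceptions---the value\nb two spinor of $\so(10)$ and the two residual $\fe_6$ weights---are dispatched by direct computation. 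What survives is exactly $\so(10)$'s half\nb spinor (row~(iv)).

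The hardest and most computation\nb sensitive step is thus the spinor family: one must produce the explicit $\so(14)$ splitting by hand (or from the exterior\nb algebra decomposition of the square of a half\nb spinor) and check that the intended constituent survives under the $D$\nb series part\nb embedding, since the generic simply\nb laced machinery is by design blind to a degree\nb one end. The residual $\fe_6$ verifications and the $\so(10)$ value\nb two spinor are finite but still require confirming directly that $\Alt^2$ acquires a second constituent.
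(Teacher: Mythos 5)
Your proposal is correct and follows essentially the same route as the paper's own proof: reduction via Theorem~\ref{simple} and Proposition~\ref{Malcev} to the simply-laced families $\su(\ell+1)$, $\so(4k+2)$, and $\fe_6$; pruning of candidate highest weights by Lemma~\ref{simply_laced_alt}; a single base computation at $\so(14)$ propagated along the $D$-series by Propositions~\ref{prop_subord} and \ref{prop_part}; and direct checks for the residual $\so(10)$ and $\fe_6$ weights. The only difference is that you spell out what the paper compresses into ``one can directly verify'' and ``apply Propositions~\ref{prop_subord} and \ref{prop_part}''---in particular the explicit splitting $\Alt^2\phi_{(0,\ldots,0,1)}\cong\phi_{(0,0,0,0,1,0,0)}\oplus\phi_{(1,0,\ldots,0)}$ for $\so(14)$, which checks out ($2002+14=\tbinom{64}{2}$), the part-chain obtained by deleting the tail node, and the subordination step $(0,\ldots,0,1)\sqsubseteq(0,\ldots,0,2)$ handling the weight-two spinor candidates.
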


\begin{proof}
Due to  Theorem~\ref{simple} of Appendix~\ref{app_pre}, $\phi$ is simple and $\fg$ is simple.
Proposition~\ref{Malcev} of Appendix~\ref{app_pre} limits the possible cases to (a) $\fg=\su(\ell+1)$ with $\ell >1$, 
(b) $\fg=\so(4k+2)$ with $k \ge 1$, or (c) $\fg=\fe_6$.
Due to Lemma~\ref{simply_laced_alt} only the cases (i)-(iii) of Table~\ref{alt_square_nonselfdual} can appear
for (a). The irreducibility of the alternating square in the case (i) is well known
(see p.~225 of Ref.~\onlinecite{FH91} or p.~192 of Ref.~\onlinecite{Bourb08b}). The case (ii)-(iii) are discussed (e.g.,) in
Ex.~15.33 and 15.32 on pp.~226--227 of Ref.~\onlinecite{FH91}. This completes (a).
For (b), one can directly verify that $\Alt^2 \phi_{(0,0,0,0,0,1,0)}$ is not simple if $\fg=\so(14)$. 
We apply Propositions~\ref{prop_subord} and \ref{prop_part}  of Appendix~\ref{Dynkin_techniques} 
and can now limit us to $\so(10)$, 
while $\so(6)\cong\su(4)$ 
has been already discussed. It can be directly checked that only the case (iv) of 
Table~\ref{alt_square_nonselfdual} remains. The Lie algebra $\fg=\fe_6$ from case (c) can also treated by 
explicit computations leading only to case (v) of 
Table~\ref{alt_square_nonselfdual}, which completes the proof.
\end{proof}

The case-by-case analysis of Theorem~\ref{alt_self} and Lemma~\ref{lemma_selfdual_alt}
imply Theorem~\ref{thm_alt_square}.

\subsection{Symmetric Squares\label{SymSq}}

In order to limit the case-by-case discussion, we reproduce a theorem of 
Dynkin (see Thm.~4.5 of Ref.~\onlinecite{Dynkin57}) which completely treats the case of self-dual
representations.

\begin{theorem}[Dynkin]\label{sym_self}
Let $\phi$ denote a self-dual and faithful representation of a compact semisimple Lie algebra $\fg$.
The representation $\Sym^2\phi$ is not simple, if the pair $(\fg,x_\phi)$ is not equal to $(\usp(\ell),(1,0,\ldots,0))$.
\end{theorem}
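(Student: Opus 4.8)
The plan is to follow the proof of Theorem~\ref{alt_self} almost verbatim, interchanging the roles played by the orthogonal and symplectic cases. First I would invoke Theorem~\ref{simple} of Appendix~\ref{app_pre} to restrict attention to a \emph{simple} representation $\phi$ of a compact \emph{simple} Lie algebra $\fg$. Since $\phi$ is self-dual, the space $\mathcal{B}_\phi$ of invariant bilinear forms is one-dimensional and spanned by a nondegenerate form that is either symmetric or alternating; equivalently $\phi$ is either orthogonal, with $\phi(\fg)\subseteq\so(\dim(\phi))$, or symplectic, with $\phi(\fg)\subseteq\usp(\dim(\phi)/2)$.

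In the orthogonal case the invariant nondegenerate \emph{symmetric} form is exactly a copy of the trivial representation sitting inside $\Sym^2\phi$. Because a faithful representation of a simple Lie algebra has dimension at least two, one has $\dim(\Sym^2\phi)=\dim(\phi)\big(\dim(\phi)+1\big)/2>1$, so this trivial subrepresentation is proper and $\Sym^2\phi$ is reducible. Hence no orthogonal representation can appear, which is the reason $\so(k)$ is absent from the conclusion, in sharp contrast with Theorem~\ref{alt_self}.

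In the symplectic case I would set $m:=\dim(\phi)/2$, so that $\phi(\fg)\subseteq\usp(m)$, and let $\psi$ denote the standard representation of $\usp(m)$. By Lemma~\ref{adjoint}(iv) its symmetric square $\Sym^2\psi=\phi_{(2,0,\ldots,0)}$ is the adjoint representation of $\usp(m)$ and is simple. By Lemma~\ref{elem} one has $\Sym^2\phi\cong(\Sym^2\psi)|_{\phi(\fg)}$, i.e.\ the adjoint representation of $\usp(m)$ restricted to $\phi(\fg)$. If $\phi(\fg)=\usp(m)$, then faithfulness gives $\fg\cong\usp(m)$ with $\phi$ the standard representation $\phi_{(1,0,\ldots,0)}$, and $\Sym^2\phi$ is simple; this is precisely the exceptional pair $(\usp(\ell),(1,0,\ldots,0))$. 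Otherwise $\phi(\fg)$ is a \emph{proper} subalgebra, and by Lemma~\ref{adjoint}(i) the adjoint representation of $\usp(m)$ is reducible upon restriction, so $\Sym^2\phi$ fails to be simple.

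The conceptual core, and the only genuinely nontrivial input, is the adjoint-restriction step in the symplectic case (Lemma~\ref{adjoint}); everything else is bookkeeping. Two points deserve care: verifying that the trivial summand in the orthogonal case is proper, which only requires $\dim(\phi)>1$, and confirming that $\phi(\fg)=\usp(m)$ exactly isolates the exceptional family. Unlike the alternating-square analysis, no separate treatment of low dimension is needed, since the smallest case $\usp(1)\cong\su(2)$ is already subsumed in the exceptional family $(\usp(\ell),(1,0,\ldots,0))$.
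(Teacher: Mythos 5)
Your proof is correct, and its skeleton is the paper's: reduce to a simple representation of a compact simple Lie algebra via Theorem~\ref{simple}, split into the orthogonal and symplectic cases, and settle the symplectic case by identifying $\Sym^2$ of the standard representation of $\usp(m)$ with the adjoint representation [Lemma~\ref{adjoint}(iv)] and then invoking Lemma~\ref{elem} together with the fact that the adjoint representation becomes reducible on any proper subalgebra [Lemma~\ref{adjoint}(i)]. Where you genuinely deviate is the orthogonal case. The paper restricts the explicit decomposition $\Sym^2\psi_{(1,0,\ldots,0)}\cong\psi_{(2,0,\ldots,0)}\oplus\psi_{(0,\ldots,0)}$ of the standard representation of $\so(k)$ through Lemma~\ref{elem}, which entails a small case check over $\dim(\phi)\in\{1,2\}$, $\dim(\phi)=3$, $\dim(\phi)=4$, and $\dim(\phi)\geq 5$, since the orthogonal algebras and their weight labels behave differently in low rank. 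You instead observe directly that the invariant nondegenerate symmetric form spans a one-dimensional trivial subrepresentation of $\Sym^2\bar{\phi}\cong\Sym^2\phi$ (self-duality is needed here, since the form natively lives in the symmetric square of the dual space), and this subrepresentation is proper because a faithful representation of a simple algebra has $\dim(\phi)\geq 2$. The two arguments rest on the same underlying fact---the trivial summand $\psi_{(0,\ldots,0)}$ in the paper's decomposition is spanned by exactly that form---but your invariant formulation is uniform in the dimension, so it removes the $\so(3)$ and $\so(4)$ bookkeeping (including the paper's somewhat confusing aside about $\so(3)\cong\usp(1)$); this is a modest but real streamlining, at the price of leaning slightly more on the self-duality identification.
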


\begin{proof}
Due to Theorem~\ref{simple}, we can limit us to simple representations $\phi$ of compact simple Lie algebras. 
As $\phi$ is self-dual, we get either  (i) $\phi(\fg)\subseteq \so(\dim(\phi))$ or
(ii) $\phi(\fg)\subseteq \usp(\dim(\phi)/2)$. We consider first the case (i): The Lie algebras $\so(\dim(\phi))$ with 
$\dim(\phi)\in\{1,2\}$ do not have simple subalgebras. The Lie algebra $\so(3)\cong\su(2)\cong\usp(1)$
has no proper simple subalgebra, but leads to the case with $\fg=\usp(1)$ and $x_\phi=(1)$. The
symmetric square $\Sym^2\psi_{(1,1)}=\psi_{(2,2)}\oplus\psi_{(0,0)}$ for the standard representation
$\psi_{(1,1)}$ of $\so(4)\cong\su(2)\oplus\su(2)$ splits up into two components. So, $(\Sym^2\psi_{(1,1)})|_{\fg}$
is also not simple. The symmetric square $\Sym^2 \psi_{(1,0,\ldots,0)}=
\psi_{(2,0,\ldots,0)}\oplus\psi_{(0,\ldots,0)}$ for the standard representation
$\psi_{(1,0,\ldots,0)}$ of $\so(\dim(\phi))$ with  $\dim(\phi)\geq 5$ decomposes 
(see Ex.~19.21 of Ref.~\onlinecite{FH91}), and no further cases can appear for (i). We analyze the case (ii):
The standard representation $\psi:=\psi_{(1,0,\ldots,0)}$ of $\usp(\ell)$ has the symmetric
square $\Sym^2\psi=\psi_{(2,0,\ldots,0)}$ which is simple and isomorphic to its adjoint representation
and $\Sym^2\psi$ decomposes when
restricted to a proper subalgebra  (see Lemma~\ref{adjoint} of Appendix~\ref{Dynkin_techniques}). 
It follows that
$\fg=\usp(\ell)$ which completes the proof.
\end{proof}

Consequently, we can limit us again
to representations which are not self-dual.
Due to Proposition~\ref{Malcev} of Appendix~\ref{app_pre}, we have
either $\fg=\su(\ell+1)$ with $\ell >1$, $\fg=\so(4k+2)$, or $\fg=\fe_6$. 
As in Section~\ref{AltSq}, we can simplify our analysis.

\begin{lemma}\label{simply_laced_sym}
Let $\fg$ denote a compact simple Lie algebra which is  simply-laced, i.e., $\su(\ell+1)$, $\so(2\ell)$, 
 $\fe_6$, $\fe_7$, or $\fe_8$. We consider the simple representations $\phi$, $\tau$, and $\psi$
where $\tau$ and $\psi$ are representations of $\fg$ and $\phi$ is a representation of a subalgebra
$\fh\subseteq\fg$. We assume that $\phi \subseteq \tau \sqsubseteq \psi$, i.e., that
$\tau$ is subordinate to $\psi$ and that $\phi$ is a part of $\tau$.
The representation $\Sym^2\psi$ is not simple if one of the following
conditions holds:\\
(i) $x_\phi=(1,1)$ and either $\fh=\su(3)$ or $\fh=\su(2)\oplus\su(2)$.\\
(ii) More than one component $x_{\psi,j}$ of the highest weight $x_\psi$ is non-zero.\\
(iii) $x_\phi=(2)$ and $\fh=\su(2)$.\\
(iv) The highest weight $x_\psi$ has a component which is larger than one.\\
(v) $x_\phi=(0,1,0)$ and $\fh=\su(4)$.\\
(vi) The highest weight $x_\psi$ has a non-zero component which is not at any end of the Dynkin diagram.
\end{lemma}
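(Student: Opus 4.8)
The plan is to follow the pattern of the proof of Lemma~\ref{simply_laced_alt}: treat (i), (iii), (v) as explicit base cases in which $\Sym^2\phi$ is shown to be reducible by direct computation, and treat (ii), (iv), (vi) as shape conditions on $\psi$ that reduce to one of these base cases through the monotonicity statements of Propositions~\ref{prop_subord} and \ref{prop_part}. First I would dispatch the base cases. For $x_\phi=(2)$ of $\su(2)$ one has $\Sym^2\phi=\phi_{(4)}\oplus\phi_{(0)}$; for $x_\phi=(1,1)$ of $\su(2)\oplus\su(2)$ the representation is $\phi_{(1)}\boxtimes\phi_{(1)}$, and Lemma~\ref{outertensor}(ii) gives $\Sym^2\phi=(\phi_{(2)}\boxtimes\phi_{(2)})\oplus(\phi_{(0)}\boxtimes\phi_{(0)})$; for $x_\phi=(1,1)$ of $\su(3)$ (the $8$-dimensional adjoint) and $x_\phi=(0,1,0)$ of $\su(4)$ (the $6$-dimensional standard representation of $\so(6)\cong\su(4)$) the symmetric square splits off a strictly smaller constituent, in the last case the trivial representation coming from the invariant bilinear form. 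In each case $\Sym^2\phi$ is visibly not simple.

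The second step is propagation along the chain $\phi\subseteq\tau\sqsubseteq\psi$. I would apply Proposition~\ref{prop_part}(iii) to the part $\phi\subseteq\tau$ and Proposition~\ref{prop_subord}(iii) to the subordination $\tau\sqsubseteq\psi$, and chain the two multiplicity inequalities to get $m(2x_\phi-\sum_k n_k\alpha_k,\Sym^2\phi)\le m(2x_\psi-\sum_k n_k\alpha_k,\Sym^2\psi)$ for every choice of non-negative integers $n_k$. Non-simplicity of $\Sym^2\phi$ means some weight with $\sum_k n_k\alpha_k\ne 0$ carries positive multiplicity in $\Sym^2\phi$; the chained inequality then forces the strictly lower weight $2x_\psi-\sum_k n_k\alpha_k$ to occur in $\Sym^2\psi$ alongside the Cartan product at the top weight $2x_\psi$, so $\Sym^2\psi$ has at least two constituents. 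This is precisely how conditions (i), (iii), (v) deliver the conclusion.

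For the derived conditions I would instead exhibit, for the given $\psi$, an auxiliary subordinate $\tau'\sqsubseteq\psi$ containing one of the base cases as a part $\phi'\subseteq\tau'$, and then rerun the same propagation. Condition (ii) lets me choose $x_{\tau'}$ with a $1$ at two nodes where $x_\psi$ is non-zero and $0$ elsewhere; deleting the remaining nodes of the (tree-shaped) Dynkin diagram leaves either $\su(3)$ or $\su(2)\oplus\su(2)$ carrying the part $x_{\phi'}=(1,1)$, which is case (i). Condition (iv) lets me place a $2$ at a node where $x_\psi\ge 2$ and delete the rest, leaving $\su(2)$ with $x_{\phi'}=(2)$, which is case (iii). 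For condition (vi) the node $a$ holding the interior non-zero component has degree at least two, and since every simply-laced diagram is a tree, any two neighbours $b,c$ of $a$ are non-adjacent, so $\{b,a,c\}$ spans an $A_3=\su(4)$ sub-diagram with $a$ in the middle and the part $x_{\phi'}=(0,1,0)$, which is case (v).

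The step I expect to require the most care is condition (vi): verifying uniformly that every non-leaf node of $A_\ell$, $\so(2\ell)$, $\fe_6$, $\fe_7$, and $\fe_8$ is the centre of an embedded $A_3$ path, which must be checked in particular at the degree-three branch nodes of $\so(2\ell)$ and the exceptional diagrams. A secondary point is to confirm that each constructed part $\phi'$ is faithful, as required by the definition of a \emph{part}, and that the $\su(4)$ labels are oriented so the middle node indeed receives the $1$; the remaining base-case decompositions are routine.
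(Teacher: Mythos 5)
Your proposal is correct and follows essentially the same route as the paper's (much terser) proof: direct computation of the reducible symmetric squares in the base cases (i), (iii), (v), followed by propagation to conditions (ii), (iv), (vi) via the multiplicity inequalities of Propositions~\ref{prop_subord} and \ref{prop_part}. Your explicit details — the chained inequality forcing a second constituent below the Cartan component of $\Sym^2\psi$, the faithfulness checks, and the $A_3$ sub-diagram centred at an interior node of the tree-shaped simply-laced diagram for (vi) — are precisely what the paper leaves implicit.
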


\begin{proof}
The non-irreducibility of $\Sym^2\phi$ for the cases in
the statements (i), (iii), and (v) can be verified by direct computations.
Then, all other statements follow via Propositions~\ref{prop_subord}
and \ref{prop_part} of Appendix~\ref{Dynkin_techniques}.
\end{proof}

Using Lemma~\ref{simply_laced_sym} we can completely treat the case of simple representations
which are not self-dual.

{
\fontsize{10}{12}
\renewcommand{\baselinestretch}{1}
\selectfont
\begin{table}[tb]
\caption{\label{sym_square_nonselfdual}Irreducible representations which are not self-dual and whose respective symmetric
square is irreducible}
\begin{ruledtabular}
\begin{tabular}[t]{l@{\hspace{4pt}}l@{\hspace{4pt}}l@{\hspace{4pt}}l@{\hspace{1pt}}l@{\hspace{4pt}}l@{\hspace{1pt}}l}
case & $\fg$ & $\ell$ & $\phi$  & dim$(\phi)$ & $\Sym^2 \phi$ & dim$(\Sym^2 \phi)$\\[1mm] \hline\\[-3.5mm]
(i) & $\su(\ell+1)$ & $\ell\geq 2$ &
$(1,0,\ldots,0)$ & $\ell{+}1$ & $(2,0,\ldots,0)$ & $\tfrac{(\ell+1)(\ell+2)}{2}$\\ 
\end{tabular}
\end{ruledtabular}
\end{table}
}

\begin{lemma}\label{lemma_selfdual_sym}
Let $\phi$ denote a faithful representation of a compact semisimple Lie algebra
$\fg$ such that $\phi$ is not self-dual and that the symmetric square $\Sym^2\phi$ is simple.
All possible cases (up to outer automorphisms of $\fg$) are given in Table~\ref{sym_square_nonselfdual}.
\end{lemma}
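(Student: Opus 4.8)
The plan is to follow the template of the proof of Lemma~\ref{lemma_selfdual_alt}, taking advantage of the fact that simplicity of $\Sym^2\phi$ is even more restrictive than simplicity of $\Alt^2\phi$. First I would apply Theorem~\ref{simple} of Appendix~\ref{app_pre} to reduce to the situation where both $\phi$ and $\fg$ are simple. Since $\phi$ is assumed not self-dual, Proposition~\ref{Malcev} of Appendix~\ref{app_pre} then limits the possibilities to (a)~$\fg=\su(\ell+1)$ with $\ell>1$, (b)~$\fg=\so(4k+2)$ with $k\ge 1$, and (c)~$\fg=\fe_6$. Each of these Lie algebras is simply-laced, so Lemma~\ref{simply_laced_sym} is applicable throughout.

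For case (a) the conditions (ii), (iv), and (vi) of Lemma~\ref{simply_laced_sym} together force the highest weight $x_\phi$ to have exactly one non-zero component, which must be equal to one and must sit at an end of the $A_\ell$ Dynkin diagram. Only $(1,0,\ldots,0)$ and $(0,\ldots,0,1)$ remain, and these are interchanged by the outer automorphism of $\su(\ell+1)$, so up to outer automorphisms they yield the single entry of Table~\ref{sym_square_nonselfdual}. It then remains to confirm that $\Sym^2\phi_{(1,0,\ldots,0)}=\phi_{(2,0,\ldots,0)}$ is indeed simple, which is standard (see p.~225 of Ref.~\onlinecite{FH91} or p.~192 of Ref.~\onlinecite{Bourb08b}).

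The substantive part, and the point where the symmetric case genuinely parts ways with the alternating one, is to show that \emph{no} representation survives in cases (b) and (c). By Lemma~\ref{simply_laced_sym} the only surviving candidates are the two half-spin representations of $\so(4k+2)$ (interchanged by an outer automorphism) and the representations $\phi_{(1,0,0,0,0,0)}$ and $\phi_{(0,0,0,0,0,1)}$ of $\fe_6$ (likewise interchanged by an outer automorphism). For $\so(6)\cong\su(4)$ the half-spin representations are just the standard representation and its dual, already treated in (a). The main obstacle is the base case $\so(10)$: in contrast to the alternating square, which is simple there (cf.\ case~(iv) of Table~\ref{alt_square_nonselfdual}), I would verify by direct computation that the symmetric square of the $16$-dimensional half-spin representation is reducible, splitting as $\phi_{(0,0,0,2,0)}\oplus\phi_{(1,0,0,0,0)}$ of dimensions $126$ and $10$. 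Since the half-spin representation of $\so(10)$ is a part of the half-spin representation of every larger $\so(4k+2)$ (obtained by deleting nodes from the tail of the $D$-type Dynkin diagram), Proposition~\ref{prop_part}(iii) of Appendix~\ref{Dynkin_techniques} propagates this reducibility to all $k\ge 3$, eliminating case (b). Finally, for $\fe_6$ I would check directly that $\Sym^2\phi_{(1,0,0,0,0,0)}$ decomposes into components of dimensions $27$ and $351$, eliminating case (c). Combining these observations, case (i) of Table~\ref{sym_square_nonselfdual} is the only survivor, which completes the proof.
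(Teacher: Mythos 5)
Your proposal is correct and follows essentially the same route as the paper's proof: Theorem~\ref{simple} and Proposition~\ref{Malcev} reduce the problem to simple, non-self-dual representations of $\su(\ell+1)$, $\so(4k+2)$, or $\fe_6$; Lemma~\ref{simply_laced_sym} sieves the highest weights down to end-node fundamental weights; the base cases $\so(10)$ and $\fe_6$ are settled by direct computation, with Proposition~\ref{prop_part} propagating reducibility to larger $\so(4k+2)$ and $\so(6)\cong\su(4)$ folded into the $\su$ case. The only cosmetic differences are that you supply the explicit decompositions (e.g., $\Sym^2\phi_{(0,0,0,1,0)}\cong\phi_{(0,0,0,2,0)}\oplus\phi_{(1,0,0,0,0)}$ of dimensions $126+10$ for $\so(10)$) and dismiss the $\fe_6$ weight $(0,1,0,0,0,0)$ as self-dual via Proposition~\ref{Malcev}, whereas the paper checks that weight by direct computation as well.
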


\begin{proof}
Due to  Theorem~\ref{simple} of Appendix~\ref{app_pre}, $\phi$  is simple and $\fg$ is simple.
Proposition~\ref{Malcev} of Appendix~\ref{app_pre} limits the possible cases to (a) $\fg=\su(\ell+1)$ with $\ell >1$, 
(b) $\fg=\so(4k+2)$ with $k \ge 1$, or (c) $\fg=\fe_6$.
Due to Lemma~\ref{simply_laced_sym} only the case (i) of Table~\ref{sym_square_nonselfdual} can appear
for (a). The irreducibility of the symmetric square in the case (i) is well known
(see p.~225 of Ref.~\onlinecite{FH91}). 
For (b), one can directly verify that $\Sym^2 \phi_{(0,0,0,1,0)}$ is not simple for $\fg=\so(10)$. 
We apply Propositions~\ref{prop_subord} and \ref{prop_part} of Appendix~\ref{Dynkin_techniques}
and the remaining case of $\so(6)\cong\su(4)$ 
has been already discussed. One can check by explicit computation that 
$\phi_{(1,0,0,0,0,0)}$,  $\phi_{(0,1,0,0,0,0)}$, and $\phi_{(0,0,0,0,0,1)}$ do not lead to a
simple symmetric square for
$\fe_6$ from (c), completing the proof.
\end{proof}

The case-by-case analysis of Theorem~\ref{sym_self} and Lemma~\ref{lemma_selfdual_sym}
imply Theorem~\ref{thm_sym_square}.

\section{Generalizing a Theorem of Coquereaux and Zuber\label{CZ}}
We provide here a straight-forward generalization of a result of 
Coquereaux and Zuber (see Proposition~\ref{one_norm_conjugated_1} below and 
Ref.~\onlinecite{CZ11}) from the case of two simple representations
of a simple, compact Lie algebra to the case of two semisimple representations
of a compact Lie algebra. Recall that $\psi_1 \boxtimes \psi_2$ denotes the
outer tensor product as defined in Section~\ref{def_alt_sym}.
We start by observing
a trivial property of the notation $\normone{\phi}:=\sum_{i\in\mathcal{I}} m_i$.
\begin{proposition} \label{one_norm_prop}
$\normone{\psi_1 \boxtimes \psi_2} = 
\normone{\psi_1}\normone{\psi_2}$ holds for 
two semisimple representations $\psi_1$ and $\psi_2$ of the compact Lie algebras
$\fg_1$ and $\fg_2$, respectively.
\end{proposition}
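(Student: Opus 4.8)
The plan is to decompose both representations into simple constituents, distribute the outer tensor product across the direct sums, and then invoke the classification of simple representations of a direct sum (Lemma~\ref{irred_prod}) to recognize the resulting pieces as the distinct simple constituents of $\psi_1 \boxtimes \psi_2$, at which point the claimed factorization of $\normone{\,\cdot\,}$ drops out by elementary counting.

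First I would write $\psi_1 \cong \oplus_{i\in\mathcal{I}_1} \sigma_i^{\oplus m_i}$ and $\psi_2 \cong \oplus_{j\in\mathcal{I}_2} \rho_j^{\oplus n_j}$, where the $\sigma_i$ (resp.\ $\rho_j$) are the pairwise-inequivalent simple representations of $\fg_1$ (resp.\ $\fg_2$) occurring in the decompositions, so that $\normone{\psi_1}=\sum_{i} m_i$ and $\normone{\psi_2}=\sum_{j} n_j$. Since the outer tensor product is additive in each of its two arguments (directly from its definition in Section~\ref{def_alt_sym} as $\psi_1(g_1)\otimes\unity+\unity\otimes\psi_2(g_2)$), I would obtain the decomposition $\psi_1 \boxtimes \psi_2 \cong \oplus_{i\in\mathcal{I}_1,\, j\in\mathcal{I}_2} (\sigma_i \boxtimes \rho_j)^{\oplus m_i n_j}$ of $\psi_1\boxtimes\psi_2$ as a representation of $\fg_1\oplus\fg_2$.

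By Lemma~\ref{irred_prod}, each $\sigma_i \boxtimes \rho_j$ is a simple representation of $\fg_1\oplus\fg_2$. The one point requiring genuine input is that distinct index pairs yield \emph{inequivalent} simple representations, i.e.\ that $\sigma_i \boxtimes \rho_j \cong \sigma_{i'} \boxtimes \rho_{j'}$ forces $i=i'$ and $j=j'$. This is precisely the bijective correspondence underlying Lemma~\ref{irred_prod}: the cited fact on simple modules over a tensor product of unital associative $\C$-algebras gives not only that every simple representation factors as such an outer product, but that the factors are determined up to equivalence. Granting this, the displayed sum is exactly the decomposition of $\psi_1 \boxtimes \psi_2$ into pairwise-distinct simple components, with multiplicity $m_i n_j$ attached to $\sigma_i \boxtimes \rho_j$.

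Finally, summing the multiplicities gives $\normone{\psi_1 \boxtimes \psi_2} = \sum_{i,j} m_i n_j = \bigl(\sum_i m_i\bigr)\bigl(\sum_j n_j\bigr) = \normone{\psi_1}\,\normone{\psi_2}$, as claimed. I expect the inequivalence of the $\sigma_i \boxtimes \rho_j$ for distinct pairs to be the main obstacle; everything else is formal, and that obstacle is already resolved by the associative-algebra result quoted in the proof of Lemma~\ref{irred_prod}.
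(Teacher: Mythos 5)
Your proof is correct. Note that the paper gives no proof of this proposition at all: it is introduced as ``a trivial property of the notation'' and stated without argument, so there is no paper proof to compare against --- your writeup is simply the routine verification the authors suppress (decompose into simples, distribute $\boxtimes$ over $\oplus$, invoke Lemma~\ref{irred_prod}, count). The one step you rightly single out as needing genuine input, the pairwise inequivalence of the simple representations $\sigma_i \boxtimes \rho_j$ for distinct index pairs, is indeed not contained in the literal statement of Lemma~\ref{irred_prod} (which asserts only that outer products of simples are simple and that every simple of the direct sum arises this way); your appeal to the underlying fact on modules over tensor products of unital associative $\C$-algebras, cited in that lemma's proof, does supply the missing uniqueness. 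If you prefer an argument that avoids leaning on the citation: restricting $\sigma_i \boxtimes \rho_j$ to the ideal $\fg_1 \oplus \{0\}$ yields $\sigma_i^{\oplus \dim(\rho_j)}$, so an equivalence $\sigma_i \boxtimes \rho_j \cong \sigma_{i'} \boxtimes \rho_{j'}$ forces $\sigma_i \cong \sigma_{i'}$, and symmetrically $\rho_j \cong \rho_{j'}$; this makes the whole proposition self-contained modulo Schur's lemma.
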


A recent result of Coquereaux and Zuber connects the values of
$\normone{\phi \otimes \psi}$ and $\normone{\phi \otimes \bar{\psi}}$.

\begin{proposition}[Thm.~1 of Ref.~\onlinecite{CZ11}]\label{one_norm_conjugated_1}
Given a compact simple Lie algebra $\fg$ and two simple
representations $\phi$ and $\psi$ of $\fg$, it follows that
$\normone{\phi \otimes \psi}=\normone{\phi \otimes \bar{\psi}}$.
\end{proposition}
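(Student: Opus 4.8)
The plan is to convert the statement into a single combinatorial identity about weight multiplicities and then to isolate the genuinely nontrivial ingredient. Write $\lambda$ and $\mu$ for the highest weights of the simple representations $\phi$ and $\psi$ of $\fg$, let $\rho$ be the half-sum of the positive roots, let $W$ be the Weyl group of $\fg_{\C}$, and let $m_\mu(\beta)$ be the multiplicity of the weight $\beta$ in $\psi$. By the Racah--Speiser (Brauer--Klimyk) recursion, which rests only on the Weyl character formula and is therefore available in the present setting, each virtual constituent $V_{\lambda+\beta}$ is returned to the dominant chamber by the dot action of $W$, contributing $\varepsilon(w)$ (the sign of the Weyl element $w$ that makes $\lambda+\rho+\beta$ dominant) when $\lambda+\rho+\beta$ is regular, and $0$ when it lies on a wall. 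Summing over the weights of $\psi$ gives the closed expression $\normone{\phi\otimes\psi}=\sum_\beta m_\mu(\beta)\,f(\lambda+\rho+\beta)$, where $f$ is the $W$-anti-invariant function equal to $\varepsilon(w)$ on the open chamber $wC$ and vanishing on the walls. Since the weights of $\bar\psi$ are exactly the negatives of those of $\psi$ with unchanged multiplicities, the corresponding expression for $\phi\otimes\bar\psi$ reads $\sum_\beta m_\mu(\beta)\,f(\lambda+\rho-\beta)$. The proposition is thus equivalent to the negation symmetry
\[
\sum_\beta m_\mu(\beta)\,f(\lambda+\rho+\beta)=\sum_\beta m_\mu(\beta)\,f(\lambda+\rho-\beta).
\]

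Before attacking this identity head-on I would record the two formal symmetries that come for free, if only to delimit what they cannot give. Commutativity of the tensor product yields $\normone{\phi\otimes\psi}=\normone{\psi\otimes\phi}$, and the fact that conjugation merely permutes the simple constituents while preserving their multiplicities yields $\normone{\bar U}=\normone{U}$ for every semisimple $U$, hence $\normone{\phi\otimes\psi}=\normone{\bar\phi\otimes\bar\psi}$. Combining the two shows that conjugating $\psi$ affects the total multiplicity exactly as conjugating $\phi$ does, so that $\normone{\phi\otimes\bar\psi}=\normone{\bar\phi\otimes\psi}$. The trouble is that these relations only tie \emph{conjugate both factors} to \emph{conjugate neither}, leaving the target statement \emph{conjugate one factor changes nothing} untouched; indeed, any manipulation that exploits only the $W$-(anti)invariance of $m_\mu$ and $f$ reproduces precisely this \emph{conjugate both} symmetry. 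A genuinely new input is therefore needed, in agreement with the remark that no elementary, classification-free derivation is presently known.

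The crux, and the step I expect to be the main obstacle, is thus the cancellation of the wall-crossing corrections. In the stable range, where $\lambda$ sits deep inside the dominant chamber, every $\beta$ contributes $+1$ and both sides of the displayed identity simply count $\dim(\psi)=\dim(\bar\psi)$, so the entire content lives in the boundary terms where $f$ changes sign or vanishes. To resolve it I would follow Coquereaux and Zuber and pass to a manifestly symmetric model of the count: realize $\sum_\beta m_\mu(\beta)\,f(\lambda+\rho+\beta)$ as the number of integer points of a Berenstein--Zelevinsky-type polytope attached to $(\lambda,\mu)$ and exhibit a lattice-preserving piecewise-affine bijection with the polytope attached to $(\lambda,\bar\mu)$; in type $A$ this is the complementation of semistandard (Littlewood--Richardson) tableaux, and in the remaining types one uses the corresponding Littelmann-path or crystal model. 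An equivalent reformulation is to identify the total multiplicity with $\dim(V_\lambda\otimes V_\mu)^{\fn}$, the space of highest-weight vectors for a maximal nilpotent $\fn\subset\fg_{\C}$, and to seek an $\fn$-equivariant matching of singular vectors under $\mu\mapsto\bar\mu$. Should a uniform argument resist, the fallback is to split $\fg$ into simple ideals, reduce via Proposition~\ref{one_norm_prop}, and verify the symmetry type-by-type from the explicit weight systems.
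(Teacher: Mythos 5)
Your setup is sound: the Racah--Speiser reduction of $\normone{\phi\otimes\psi}$ to $\sum_\beta m_\mu(\beta)\,f(\lambda+\rho+\beta)$ is correct, as is the observation that the symmetries that come for free (commutativity and simultaneous conjugation of both factors, giving $\normone{\phi\otimes\psi}=\normone{\bar\phi\otimes\bar\psi}$ and $\normone{\phi\otimes\bar\psi}=\normone{\bar\phi\otimes\psi}$) cannot by themselves yield the one-factor statement. But what you have written is a research plan, not a proof: the entire content of the theorem is precisely the negation symmetry $\sum_\beta m_\mu(\beta)\,f(\lambda+\rho+\beta)=\sum_\beta m_\mu(\beta)\,f(\lambda+\rho-\beta)$, and at exactly that point you defer --- ``follow Coquereaux and Zuber,'' ``exhibit a lattice-preserving piecewise-affine bijection'' between the polytopes attached to $(\lambda,\mu)$ and $(\lambda,\bar\mu)$, or ``verify type-by-type.'' None of these is carried out. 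The bijection is asserted, not constructed, and outside type $A$ (where tableau complementation is available) constructing it uniformly is exactly where the difficulty lives; the type-by-type fallback is not a finite check either, since for each simple type one still needs an argument valid for \emph{all} pairs of highest weights, which is the theorem itself. So the crux of the proposition remains unproven in your proposal.

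For comparison: the paper does not prove this proposition at all. It is imported verbatim as Thm.~1 of Ref.~\onlinecite{CZ11}, and the authors explicitly remark after Theorem~\ref{thm:tensor_square} that a classification-free proof of this result is, to their knowledge, not known --- which is consistent with your own (correct) diagnosis that the elementary Weyl-group manipulations are insufficient. What the paper does prove in Appendix~\ref{CZ} is only the elementary extension of the cited result from simple representations of compact simple Lie algebras to semisimple representations of arbitrary compact Lie algebras (Propositions~\ref{one_norm_prop}--\ref{one_norm_conjugated}); that extension is not what your proposal addresses. To turn your sketch into an actual proof you would have to either reproduce the argument of Ref.~\onlinecite{CZ11} in detail or construct the claimed crystal/polytope bijection in all types; the latter would be a contribution going beyond both this paper and the one it cites.
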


Moreover, this is also valid for compact abelian Lie algebras as all their simple representations 
are one-dimensional.
\begin{proposition} \label{one_norm_abelian}
Let $\phi$ and $\psi$ denote simple representations of a compact 
abelian Lie algebra $\fg$, then 
$\normone{\phi \otimes \psi} = \normone{\phi \otimes \bar{\psi}}=1$.
\end{proposition}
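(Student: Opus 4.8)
The plan is to reduce everything to the one-dimensionality of the simple representations, which was just recalled: since $\fg$ is compact and abelian, both $\phi$ and $\psi$ satisfy $\dim(\phi)=\dim(\psi)=1$. Consequently $\phi(g)$ and $\psi(g)$ are scalars, and the identity matrices $\unity_{\dim(\phi)}$ and $\unity_{\dim(\psi)}$ entering the definition of the inner tensor product (Section~\ref{def_alt_sym}) are themselves the scalar $1$.

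First I would evaluate the tensor product directly. From $(\phi\otimes\psi)(g)=\phi(g)\otimes\unity_{\dim(\psi)}+\unity_{\dim(\phi)}\otimes\psi(g)$ one gets $(\phi\otimes\psi)(g)=\phi(g)+\psi(g)$, a one-dimensional representation of $\fg$. For the conjugated version I would use that $\bar{\psi}(g)=-\psi(g)^T=-\psi(g)$, the transpose acting trivially on the scalar $\psi(g)$; hence $(\phi\otimes\bar{\psi})(g)=\phi(g)-\psi(g)$ is again one-dimensional. Since any one-dimensional representation is simple, each of $\phi\otimes\psi$ and $\phi\otimes\bar{\psi}$ consists of a single simple component occurring with multiplicity one, so by the definition $\normone{\cdot}=\sum_{i\in\mathcal{I}}m_i$ we obtain $\normone{\phi\otimes\psi}=\normone{\phi\otimes\bar{\psi}}=1$, as claimed.

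There is essentially no obstacle here; the only point needing justification is the one-dimensionality of simple representations of a compact abelian Lie algebra. Should one prefer not to invoke it as a known fact, I would supply a short argument via Schur's lemma: because $\fg$ is abelian, every $\phi(g)$ commutes with all $\phi(h)$ and therefore lies in the commutant $\comm(\phi)$, which for a simple $\phi$ is trivial (equal to the scalar multiples of the identity) by Lemma~\ref{reprtheory}; thus each $\phi(g)$ is scalar, every one-dimensional subspace is invariant, and irreducibility forces $\dim(\phi)=1$.
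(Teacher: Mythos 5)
Your proof is correct and follows essentially the same route as the paper, which justifies the proposition solely by the remark that all simple representations of a compact abelian Lie algebra are one-dimensional, so that $\phi\otimes\psi$ and $\phi\otimes\bar{\psi}$ are themselves one-dimensional and hence simple. Your added Schur's-lemma argument for the one-dimensionality (via the triviality of the commutant from Lemma~\ref{reprtheory}) is a sound, self-contained justification of the fact the paper simply invokes.
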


One can now generalize the theorem of Coquereaux and Zuber to simple representations of compact Lie algebras.

\begin{proposition} \label{one_norm_conjugated_2}
Consider two simple representations $\phi$ and $\psi$ of 
a compact Lie algebra $\fg=\oplus_\ell \fg_\ell$ which can be decomposed
into its simple and abelian subalgebras~$\fg_\ell$.
One obtains $\phi\cong \phi_1 \boxtimes \phi_2 \boxtimes \cdots \boxtimes \phi_n$
and $\psi\cong\psi_1 \boxtimes \psi_2 \boxtimes \cdots \boxtimes \psi_n$
for  simple representations $\phi_\ell$ and $\psi_\ell$ of $\fg_\ell$. 
It follows that $\normone{\phi \otimes \psi}=\normone{\phi \otimes \bar{\psi}}$.
\end{proposition}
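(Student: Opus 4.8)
The plan is to reduce the statement factor by factor to the two special cases that are already available, namely the simple case of Proposition~\ref{one_norm_conjugated_1} and the abelian case of Proposition~\ref{one_norm_abelian}, with the bridge between them provided by the multiplicativity of $\normone{\cdot}$ under outer tensor products in Proposition~\ref{one_norm_prop}. Concretely, I would show that both $\phi\otimes\psi$ and $\phi\otimes\bar\psi$ factor as outer tensor products over the summands $\fg_\ell$, apply multiplicativity, and then compare the two products slot by slot.

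First I would establish the factorization of the inner tensor product across the decomposition $\fg=\oplus_\ell\fg_\ell$. Writing $\phi\cong\phi_1\boxtimes\cdots\boxtimes\phi_n$ and $\psi\cong\psi_1\boxtimes\cdots\boxtimes\psi_n$, I claim that
\[
\phi\otimes\psi\cong\boxtimes_\ell\,(\phi_\ell\otimes\psi_\ell)
\qquad\text{and}\qquad
\phi\otimes\bar\psi\cong\boxtimes_\ell\,(\phi_\ell\otimes\bar{\psi_\ell}).
\]
For an element $g=(g_1,\ldots,g_n)\in\fg$, the $\ell$-th summand of $\phi(g)\otimes\unity+\unity\otimes\psi(g)$ contributes, after reordering the tensor factors from $(\otimes_\ell V_\ell)\otimes(\otimes_\ell W_\ell)$ to $\otimes_\ell(V_\ell\otimes W_\ell)$, exactly $(\phi_\ell\otimes\psi_\ell)(g_\ell)=\phi_\ell(g_\ell)\otimes\unity+\unity\otimes\psi_\ell(g_\ell)$ into the $\ell$-th slot, which is the definition of the outer tensor product on the right. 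The dual factorizes the same way, $\bar\psi\cong\boxtimes_\ell\bar{\psi_\ell}$, since the transpose acts slotwise on the corresponding Kronecker structure; combining this with the first factorization gives the second.

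Next I would apply Proposition~\ref{one_norm_prop} to both factorizations to obtain
\[
\normone{\phi\otimes\psi}=\prod_\ell\normone{\phi_\ell\otimes\psi_\ell}
\qquad\text{and}\qquad
\normone{\phi\otimes\bar\psi}=\prod_\ell\normone{\phi_\ell\otimes\bar{\psi_\ell}}.
\]
It then suffices to prove $\normone{\phi_\ell\otimes\psi_\ell}=\normone{\phi_\ell\otimes\bar{\psi_\ell}}$ for each index $\ell$ separately. If $\fg_\ell$ is simple this is precisely Proposition~\ref{one_norm_conjugated_1}; if $\fg_\ell$ is abelian then both sides equal $1$ by Proposition~\ref{one_norm_abelian}. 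Multiplying these equalities over all $\ell$ yields $\normone{\phi\otimes\psi}=\normone{\phi\otimes\bar\psi}$, completing the argument.

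I expect the only point genuinely requiring care to be the factorization claim of the second paragraph: checking that the inner tensor product over $\fg$ really decomposes as the outer tensor product of the slotwise inner tensor products, and that dualization commutes with the outer product. This is a routine reordering of tensor factors, but it is exactly where the hypotheses that $\fg$ splits as $\oplus_\ell\fg_\ell$ and that $\phi,\psi$ split correspondingly as outer products are used; once it is in place, the reduction to Propositions~\ref{one_norm_conjugated_1} and \ref{one_norm_abelian} is immediate.
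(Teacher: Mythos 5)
Your proposal is correct and follows exactly the paper's own argument: factor $\phi\otimes\psi$ and $\phi\otimes\bar\psi$ as outer tensor products $\boxtimes_\ell(\phi_\ell\otimes\psi_\ell)$ and $\boxtimes_\ell(\phi_\ell\otimes\bar{\psi}_\ell)$, apply the multiplicativity of Proposition~\ref{one_norm_prop}, and then invoke Proposition~\ref{one_norm_conjugated_1} for the simple summands and Proposition~\ref{one_norm_abelian} for the abelian ones. The only difference is that you spell out the slotwise reordering behind the factorization, which the paper simply asserts.
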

\begin{proof}
We note that $\phi \otimes \psi = (\boxtimes_\ell \phi_\ell) \otimes (\boxtimes_\ell \psi_\ell) = 
\boxtimes_\ell (\phi_\ell \otimes \psi_\ell) $ and
$\phi \otimes \bar{\psi} = (\boxtimes_\ell \phi_\ell) \otimes (\boxtimes_\ell \bar{\psi}_\ell) = 
\boxtimes_\ell (\phi_\ell \otimes \bar{\psi}_\ell) $. One concludes from 
Proposition~\ref{one_norm_prop} that
$\normone{\phi \otimes \psi}= \prod_\ell \normone{\phi_\ell \otimes \psi_\ell}$ and
$\normone{\phi \otimes \bar{\psi}}= \prod_\ell \normone{\phi_\ell \otimes \bar{\psi}_\ell}$ hold.
Applying Propositions \ref{one_norm_conjugated_1} and \ref{one_norm_abelian},
we get
$\normone{\phi_\ell \otimes \psi_\ell}= \normone{\phi_\ell \otimes \bar{\psi}_\ell}$
for each $\ell$. Therefore, we obtain
$\normone{\phi \otimes \psi}=\normone{\phi \otimes \bar{\psi}}$.
\end{proof}

All the previous propositions justify the following generalization of
Proposition~\ref{one_norm_conjugated_1}.

\begin{proposition}\label{one_norm_conjugated}
Given a compact Lie algebra $\fg$ and two semisimple 
representations $\phi$ and $\psi$ of $\fg$, it follows that
$\normone{\phi \otimes \psi}=\normone{\phi \otimes \bar{\psi}}$.
\end{proposition}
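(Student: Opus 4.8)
The plan is to reduce the semisimple case to the simple case already settled in Proposition~\ref{one_norm_conjugated_2}, exploiting the fact that $\normone{\cdot}$ is additive over direct sums. First I would invoke Lemma~\ref{reprtheory} of Appendix~\ref{app_pre} to write $\phi\cong\oplus_{a}\phi_a$ and $\psi\cong\oplus_{b}\psi_b$ as finite direct sums of simple representations $\phi_a$ and $\psi_b$ of $\fg$ (absorbing the multiplicities into the index sets). Since the dual operation $\bar{\rho}(g)=-\rho(g)^T$ acts blockwise and hence respects the direct-sum decomposition, one has $\bar{\psi}\cong\oplus_b\bar{\psi}_b$, and each $\bar{\psi}_b$ is again simple.

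Next I would use the distributivity of the inner tensor product over direct sums: reading off the block forms of $\phi(g)$ and $\psi(g)$ yields $\phi\otimes\psi\cong\oplus_{a,b}(\phi_a\otimes\psi_b)$ and likewise $\phi\otimes\bar{\psi}\cong\oplus_{a,b}(\phi_a\otimes\bar{\psi}_b)$. Applying part~(i) of Proposition~\ref{sumrules} (extended to finite direct sums by induction), the one-norm distributes over these sums, so
\[
\normone{\phi\otimes\psi}=\sum_{a,b}\normone{\phi_a\otimes\psi_b}
\qquad\text{and}\qquad
\normone{\phi\otimes\bar{\psi}}=\sum_{a,b}\normone{\phi_a\otimes\bar{\psi}_b}.
\]

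Finally, the two expansions agree term by term: since $\phi_a$ and $\psi_b$ are simple representations of the compact Lie algebra $\fg$, Proposition~\ref{one_norm_conjugated_2} gives $\normone{\phi_a\otimes\psi_b}=\normone{\phi_a\otimes\bar{\psi}_b}$ for every pair $(a,b)$. Summing these equalities yields $\normone{\phi\otimes\psi}=\normone{\phi\otimes\bar{\psi}}$, as desired. The argument itself is essentially bookkeeping; all of the genuine content lies upstream, in the Coquereaux--Zuber identity of Proposition~\ref{one_norm_conjugated_1} and its extension to products of simple and abelian factors in Proposition~\ref{one_norm_conjugated_2}. Accordingly I expect no serious obstacle here, beyond checking the two structural facts the reduction rests on: that the dual commutes with the direct-sum decomposition into simples, and that $\normone{\cdot}$ is genuinely \emph{additive} (and not merely subadditive, in contrast to $\normtwo{\cdot}$ in Proposition~\ref{sumrules}(ii)), so that the term-by-term comparison is permissible.
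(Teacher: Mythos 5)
Your proposal is correct and follows essentially the same route as the paper's own proof: decompose $\phi$ and $\psi$ into simple components, distribute the tensor product over the direct sums, invoke the additivity of $\normone{\cdot}$ from Proposition~\ref{sumrules}(i), and apply Proposition~\ref{one_norm_conjugated_2} term by term. The additional checks you flag (the dual commuting with direct sums, strict additivity rather than subadditivity) are implicit in the paper's argument and are indeed the only points requiring care.
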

\begin{proof}
Consider the decompositions $\phi= \oplus_\ell \nu_\ell$ and $\psi=\oplus_\kappa \mu_\kappa$
of the representations  $\phi$ and $\psi$ into its simple components, where the different 
components $\nu_\ell$ and $\mu_\kappa$ are not necessarily distinct. 
Proposition~\ref{sumrules}(i) implies
$\normone{\phi \otimes \psi}=\normone{\oplus_{\ell, k} (\nu_\ell \otimes \mu_k)}
=\sum_{\ell, k}\normone{ \nu_\ell \otimes \mu_k}$ and
$\normone{\phi \otimes \bar{\psi}}=\normone{\oplus_{\ell,k} (\nu_\ell \otimes \bar{\mu}_k)}
=\sum_{\ell, k}\normone{ \nu_\ell \otimes \bar{\mu}_k}$.
We apply Proposition~\ref{one_norm_conjugated_2} and obtain
$\normone{ \nu_\ell \otimes \mu_k} =\normone{ \nu_\ell \otimes \bar{\mu}_k}$. Hence,
$\normone{\phi \otimes \psi}=\normone{\phi \otimes \bar{\psi}}$ also follows.
\end{proof}

\section{Details on the semisimple part\label{app_semisimple}}
The proof of Theorem~\ref{Thm_semsimple} will be stated in this appendix
after presenting two auxiliary results which follow from Proposition~\ref{one_two}. Recall
that a compact Lie algebra $\fg$ decomposes as $\fg =  \fs(\fg) \oplus \fc(\fg)$ 
where $\fs(\fg)$ and $\fc(\fg)$
denote its 
semsimple part and its center, respectively.
\begin{corollary}\label{compact_boxtimes_one}
Consider a subalgebra $\fh$ of a compact Lie algebra $\fg$ 
which observes
$\fh \cap \fc(\fg) = \fc(\fg)$, i.e., one has the decompositions  
$\fh = [\fh \cap \fs(\fg)] \oplus \fc(\fg)$ and $\fg = \fs(\fg) \oplus \fc(\fg)$. Note that $\fh \cap \fs(\fg)$ 
might not be semisimple. A  semisimple representation $\phi$ of $\fg$
decomposes as
$\phi \cong \oplus_{i \in \mathcal{I}, k \in \mathcal{K}}\,
[s_i \boxtimes c_k]^{\oplus m_{ik}}$ where 
$s_i$ (resp.\ $c_k$) denotes a simple representation of $\fs(\fg)$ (resp.\  $\fc(\fg)$).
The corresponding multiplicity of the simple representation $s_i \boxtimes c_k$ is given
by $m_{ik}$, and the classes of representations are indexed by 
$\mathcal{I}$ (resp.\ $\mathcal{K}$). Using the conditions
\begin{subequations}
\begin{align}
\label{condition_one}
& \normone{s_i|_{\fh \cap \fs(\fg)}} = \normone{s_i}=
1  \,\text{ holds for all }\, i\in \mathcal{I}
\,\text{ with }\, m_{ik}\neq 0 \,\text{ for some }\,
k\in \mathcal{K},\\
& (s_{i})|_{\fh \cap \fs(\fg)} 
\neq (s_{\ell})|_{\fh \cap \fs(\fg)} \,\text{ holds for all }\, i,\ell \in \mathcal{I} \nonumber \\ 
\label{condition_two}
& \phantom{(s_{i})|_{\fh \cap \fs(\fg)} 
\neq (s_{\ell})|_{\fh \cap \fs(\fg)}\,  }
\text{ with }\, i\neq \ell,\,
m_{ik}\neq 0,\, \text{and }\,  m_{\ell k}\neq 0 \,\text{ for some }\, k \in \mathcal{K},
\end{align}
\end{subequations}
it follows that   $\normtwo{\phi|_{\fh}} = \normtwo{\phi}$ 
$\Leftrightarrow$ $[\eqref{condition_one}\, \text{ and }\, \eqref{condition_two}]$ $\Leftrightarrow$
$[\normone{\phi|_{\fh}} = \normone{\phi}\, \text{ and }\, \eqref{condition_two}]$. 
\end{corollary}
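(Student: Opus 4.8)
The plan is to apply Proposition~\ref{one_two} with the indexing set taken to be the pairs $(i,k)$ and the simple components taken to be the outer tensor products $s_i \boxtimes c_k$. The decisive structural fact is the hypothesis $\fc(\fg)\subseteq\fh$, which guarantees that the central factor survives the restriction unchanged. First I would observe that, writing a generic element of $\fh=[\fh\cap\fs(\fg)]\oplus\fc(\fg)$ as a pair $(h,c)$, the restricted action satisfies $(s_i\boxtimes c_k)|_{\fh}=(s_i|_{\fh\cap\fs(\fg)})\boxtimes c_k$; here the second factor is genuinely untouched because $c_k$ is a one-dimensional (hence simple) representation of the abelian center $\fc(\fg)$ and the entire center lies in $\fh$.

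The next step is to translate the two ingredients of Proposition~\ref{one_two}(c) into the stated conditions. For the one-norm ingredient, Proposition~\ref{one_norm_prop} yields $\normone{(s_i\boxtimes c_k)|_{\fh}}=\normone{s_i|_{\fh\cap\fs(\fg)}}\cdot\normone{c_k}=\normone{s_i|_{\fh\cap\fs(\fg)}}$. Consequently, the requirement $\normone{(s_i\boxtimes c_k)|_{\fh}}=1$ for every pair with $m_{ik}\neq 0$---which by Proposition~\ref{one_two}(a) is exactly $\normone{\phi|_{\fh}}=\normone{\phi}$---coincides with condition~\eqref{condition_one}. For the distinctness ingredient I would argue by restricting a candidate isomorphism to each summand of $\fh$ separately: restriction to $\fc(\fg)$ shows that $(s_i|_{\fh\cap\fs(\fg)})\boxtimes c_k$ and $(s_\ell|_{\fh\cap\fs(\fg)})\boxtimes c_m$ can be isomorphic only if $k=m$, since distinct central characters $c_k\neq c_m$ share no irreducible constituent, while restriction to $\fh\cap\fs(\fg)$ then forces $s_i|_{\fh\cap\fs(\fg)}\cong s_\ell|_{\fh\cap\fs(\fg)}$. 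Thus the pairs $(i,k)$ and $(\ell,m)$ have isomorphic restrictions precisely when $k=m$ and the two semisimple-part restrictions agree, so that the full distinctness condition of Proposition~\ref{one_two}(c) reduces to condition~\eqref{condition_two}.

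Combining these translations finishes the argument. By Proposition~\ref{one_two}(c), $\normtwo{\phi|_{\fh}}=\normtwo{\phi}$ holds if and only if both $\normone{\phi|_{\fh}}=\normone{\phi}$ and the distinctness condition hold; using the two identifications above, this is the same as requiring both \eqref{condition_one} and \eqref{condition_two}, and also the same as requiring $\normone{\phi|_{\fh}}=\normone{\phi}$ together with \eqref{condition_two}. I expect the only delicate point to be the distinctness bookkeeping: one must handle the case of possibly reducible restrictions $s_i|_{\fh\cap\fs(\fg)}$, which is why the clean separation into the two summands $\fc(\fg)$ and $\fh\cap\fs(\fg)$---valid for arbitrary, not necessarily simple, restrictions---is the right tool rather than appealing directly to a classification of simple components.
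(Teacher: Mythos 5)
Your proof is correct and takes essentially the same route as the paper: the paper's own proof simply observes that restriction to $\fh$ leaves the central factors $c_k$ untouched because $\fh \cap \fc(\fg) = \fc(\fg)$, and then invokes Proposition~\ref{one_two}(c). Your write-up additionally supplies the bookkeeping the paper leaves implicit, namely the one-norm factorization via Proposition~\ref{one_norm_prop} and the central-character argument showing that distinctness across different $k$ is automatic, so the distinctness condition of Proposition~\ref{one_two}(c) reduces exactly to \eqref{condition_two}.
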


\begin{proof}
The restriction of $\fg$ to $\fh$ is trivial for the 
simple representations $c_k$ as $\fh \cap \fc(\fg) = \fc(\fg)$. Therefore, the corollary follows from 
Proposition~\ref{one_two}(c).
\end{proof}

We apply now Corollary~\ref{compact_boxtimes_one} to 
(not necessarily simple) representations and obtain the following useful lemma.

\begin{lemma}\label{compact_boxtimes_two}
We retain the notation of Corollary~\ref{compact_boxtimes_one} and consider
a decomposition $\phi \cong \oplus_{r \in R, k \in \mathcal{K}}\, [r \boxtimes c_k]^{\oplus m_{rk}}$
where $R$ denotes a suitably chosen set of (not necessarily simple) representations
of $\fh \cap \fs(\fg)$
and $m_{rk} \in \{0,1,2,\ldots\}$. It follows that the condition $\normtwo{\phi|_{\fh}} = \normtwo{\phi}$ implies that 
$\normtwo{r|_{\fh \cap \fs(\fg)}}=\normtwo{r}$ holds for all $r\in R$ with $m_{rk} \neq 0$ for some $k \in \mathcal{K}$.
\end{lemma}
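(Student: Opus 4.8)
The plan is to prove the lemma by peeling off one summand $r$ at a time and reducing each to Proposition~\ref{one_two}(c), with Corollary~\ref{compact_boxtimes_one} supplying the two hypotheses that proposition needs. First I would make the bundling explicit. Decomposing each $r$ into simple representations $s_i$ of $\fs(\fg)$, the grouping $\phi \cong \oplus_{r \in R, k \in \mathcal{K}} [r \boxtimes c_k]^{\oplus m_{rk}}$ is nothing but the simple decomposition $\phi \cong \oplus_{i \in \mathcal{I}, k \in \mathcal{K}} [s_i \boxtimes c_k]^{\oplus m_{ik}}$ of Corollary~\ref{compact_boxtimes_one} with the $s_i$ collected into disjoint packets according to a common multiplicity pattern $(m_{ik})_{k \in \mathcal{K}}$; concretely, each simple constituent $s_i$ of a fixed $r$ satisfies $m_{ik} = m_{rk}$ for every $k$. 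By Lemma~\ref{restr_semisimple} all restrictions to $\fh \cap \fs(\fg)$ remain semisimple, so $\normone{\,\cdot\,}$ and $\normtwo{\,\cdot\,}$ are defined throughout.

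Next I would feed the hypothesis $\normtwo{\phi|_{\fh}} = \normtwo{\phi}$ into Corollary~\ref{compact_boxtimes_one}, which immediately returns conditions \eqref{condition_one} and \eqref{condition_two}: every relevant simple $s_i$ restricts irreducibly to $\fh \cap \fs(\fg)$, and any two distinct relevant simples $s_i, s_\ell$ that occur over a common center-character $c_k$ have non-isomorphic restrictions $(s_i)|_{\fh \cap \fs(\fg)} \neq (s_\ell)|_{\fh \cap \fs(\fg)}$.

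Now I would fix an $r$ with $m_{rk} \neq 0$ for some $k$, say $k = k_0$. Because every simple constituent $s_i$ of $r$ carries the multiplicity vector of $r$, each has $m_{ik_0} = m_{rk_0} \neq 0$, so all constituents of $r$ sit over the single center-character $c_{k_0}$. Condition \eqref{condition_one} then makes each $s_i$ restrict irreducibly, whence $\normone{r|_{\fh \cap \fs(\fg)}} = \normone{r}$ by Proposition~\ref{one_two}(a); and condition \eqref{condition_two}, now applicable to \emph{every} pair of constituents of $r$ precisely because they share $c_{k_0}$, makes the restrictions of distinct constituents pairwise non-isomorphic. These are exactly the two conditions on the right-hand side of Proposition~\ref{one_two}(c) for $r$ restricted from $\fs(\fg)$ to $\fh \cap \fs(\fg)$, so that proposition yields $\normtwo{r|_{\fh \cap \fs(\fg)}} = \normtwo{r}$, as claimed.

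The step I expect to be the crux is the observation that the constituents of a \emph{relevant} bundle all lie over one and the same $c_{k_0}$. This is what the phrase ``suitably chosen $R$'' is doing: the packets must respect the multiplicity patterns, for otherwise two constituents of the same $r$ need not be comparable through \eqref{condition_two} --- which only constrains simples sharing a center-character --- and the reduction to Proposition~\ref{one_two}(c) would fail. Everything else is a bookkeeping application of results already in hand.
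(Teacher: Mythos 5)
Your proof is correct and follows essentially the same route as the paper's: feed the hypothesis $\normtwo{\phi|_{\fh}}=\normtwo{\phi}$ into Corollary~\ref{compact_boxtimes_one} to obtain conditions \eqref{condition_one} and \eqref{condition_two}, note that the simple constituents of a relevant bundle $r$ are among the $s_i$ governed by those conditions, and close with Proposition~\ref{one_two}(c) applied to the subalgebra $\fh\cap\fs(\fg)$ of $\fs(\fg)$.

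One caveat, however: what you identify as the crux is not actually needed, and your closing claim that the reduction ``would fail'' without it is mistaken. You stipulate that the packets are disjoint and that every simple constituent $s_i$ of a bundle $r$ inherits its multiplicity pattern, $m_{ik}=m_{rk}$ for all $k$. But the only fact your argument uses is that every constituent of a relevant $r$ occurs in $\phi$ over the common character $c_{k_0}$, and this holds for \emph{any} bundling: if $m_{rk_0}\neq 0$, then $r\boxtimes c_{k_0}$ is a subrepresentation of $\phi$, hence so is $s_i\boxtimes c_{k_0}$ for each constituent $s_i$ of $r$, i.e., $m_{ik_0}\neq 0$ in the simple decomposition of $\phi$ --- which is all that conditions \eqref{condition_one} and \eqref{condition_two} require. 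This matters beyond aesthetics: in the paper's own application of the lemma (the proof of Theorem~\ref{Thm_semsimple}), the bundles are $r\cong s_i\otimes s_j$, which can share simple constituents and need not carry a common multiplicity pattern, so under your restrictive reading of ``suitably chosen'' the lemma would be too weak for its intended use. Replacing your equality $m_{ik_0}=m_{rk_0}$ by the nonvanishing statement $m_{ik_0}\neq 0$, justified by the subrepresentation observation above, makes your proof cover the general case with no other change.
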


\begin{proof}
The condition $\normtwo{\phi|_{\fh}} = \normtwo{\phi}$ implies via 
Corollary~\ref{compact_boxtimes_one}
that \eqref{condition_one} and \eqref{condition_two} hold for suitable representations $s_i$. 
But the same $s_i$ appear
also in the decomposition of the representations $r$, and we apply Proposition~\ref{one_two}(c) to the subalgebra
$\fh \cap \fs(\fg)$ of $ \fs(\fg)$ to show that $\normtwo{r|_{\fh \cap \fs(\fg)}}=\normtwo{r}$.
\end{proof}

After these preparations, we can provide the details for the proof of Theorem~\ref{Thm_semsimple}.

\begin{proof}[Proof of Theorem~\ref{Thm_semsimple}]
We have to show that $\fs(\fh)=\fs(\fg)$ is implied by 
$\dim(\comm[(\phi \otimes \phi)|_{\fh}]) = \dim(\comm[\phi \otimes \phi])$.
But as $\fh = [\fh \cap \fs(\fg)] \oplus [\fh \cap \fc(\fg)] \subseteq 
\fs(\fg) \oplus [\fh \cap \fc(\fg)] \subseteq \fs(\fg) \oplus \fc(\fg)$ and 
$\dim(\comm[(\phi \otimes \phi)|_{\fh}]) \geq 
\dim(\comm[(\phi \otimes \phi)|_{\fs(\fg)\oplus [\fh \cap \fc(\fg)]}])
\geq \dim(\comm[\phi \otimes \phi])$ are valid, $\dim(\comm[(\phi \otimes \phi)|_{\fh}]) = 
\dim(\comm[(\phi \otimes \phi)|_{\fs(\fg)\oplus [\fh \cap \fc(\fg)]}])$ is a consequence of 
$\dim(\comm[(\phi \otimes \phi)|_{\fh}]) = \dim(\comm[\phi \otimes \phi])$. 
Thus, the proof specializes to the case when $\fh \cap \fc(\fg) = \fc(\fg)$
holds, i.e., $\fg = \fs(\fg) \oplus [\fh \cap \fc(\fg)]$; which we assume for the rest of the proof.
Using the notation of Corollary~\ref{compact_boxtimes_one}, the representation $\phi$ 
decomposes as $\phi \cong \oplus_{i \in \mathcal{I}, k \in \mathcal{K}}\,
[s_i \boxtimes c_k]^{\oplus m_{ik}}$. We apply the notation of Lemma~\ref{compact_boxtimes_two} and obtain 
$\phi \otimes \phi \cong \oplus_{r \in R, k \in \mathcal{K}}\, [r \boxtimes c_k]^{\oplus m_{rk}}$ where
$r \cong s_i \otimes s_j$ and $m_{rk}\in \{0,1,2,\ldots\}$. 
Lemma~\ref{compact_boxtimes_two} implies that $\normtwo{(s_i \otimes s_i)|_{\fh \cap \fs(\fg)}}
=\normtwo{s_i \otimes s_i}$ holds for each $i$.
Recall that any ideal of a semsimple Lie algebra is a direct factor (cf.\ 
Chap.~I, Sec.~6.4, Corollary to Prop.~5  of Ref.~\onlinecite{Bourb89}).
As $\phi$ is faithful, one finds for every
simple component $\fg_\ell$ of $\fs(\fg) \cong \oplus_{\ell} \fg_\ell$ an index $i$ such that
$\kernel(s_i) \cap \fg_\ell = \{0\}$ and $\kernel(s_i\otimes s_i) \cap \fg_\ell = \{0\}$.
Moreover, $\fs(\fg)/\kernel(s_i\otimes s_i)$ is semisimple and we are able to apply 
Theorem~\ref{thm:tensor_square} to the Lie algebra $\fs(\fg)/\kernel(s_i\otimes s_i)$ and its faithful 
representation $s_i\otimes s_i$. This implies that 
$\fh \cap [\fs(\fg)/\kernel(s_i\otimes s_i)] = \fs(\fg)/\kernel(s_i\otimes s_i)$ for each $i$.
In particular, we have that $\fh \cap \fg_\ell = \fg_\ell$ for each $\ell$ and the theorem follows.
\end{proof}

\section{Sharpening the bounds for the tensor-square theorem in the case of self-dual representations\label{sharp_app}}

In this appendix, we prove Proposition~\ref{bound_self_dual} which 
under the assumption that $\fg$ is simple and $\alpha$ is
self-dual
provides
bounds for the gap between the values of $ \normone{(\alpha \otimes \alpha)|_{\fh}}$ and
$\normone{\alpha \otimes \alpha}$ (as well as $ \normtwo{(\alpha \otimes \alpha)|_{\fh}}$ and  
$\normtwo{\alpha \otimes \alpha}$) for any
representation $\alpha$ of a compact  semisimple Lie algebra $\fg$ and its restriction to a
proper subalgebra $\fh$ of $\fg$. 
This generalizes inequalities given in Theorem \ref{thm:tensor_square}. First, let us recall a result by 
King and Wybourne\cite{KW96}.

\begin{proposition}[Prop.~4.2 of \onlinecite{KW96}]\label{prop:King}
For any compact simple Lie algebra $\fg$, the multiplicity of occurrence of
the adjoint representation $\theta$ in the tensor square of any self-dual and
simple representation $\alpha$ is given by  
the number $b(\alpha)$ of non-vanishing components in the highest weight $(\alpha_1, \ldots, \alpha_\ell)$ 
corresponding to $\alpha$.
\end{proposition}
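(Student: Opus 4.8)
The plan is to compute the multiplicity $m(\theta, \alpha \otimes \alpha)$ directly and show it equals $b(\alpha)$, passing to the complexification $\fg_{\C}$ so that the theory of the highest weight applies (as justified in Appendix~\ref{app_pre}). The first step is to trade the tensor square for a tensor product with the adjoint. Since $\alpha$ is self-dual, $\bar{\alpha}\cong\alpha$, so $\alpha\otimes\alpha\cong\alpha\otimes\bar\alpha$ and the number we want is $m(\theta,\alpha\otimes\bar\alpha)$. Writing $V_\alpha$ for the representation space, tensor--hom adjunction in the category of finite-dimensional $\fg$-modules together with complete reducibility and the self-duality of $\theta$ give
\[
m(\theta,\alpha\otimes\bar\alpha)=\dim\operatorname{Hom}_{\fg}(\theta,\alpha\otimes\bar\alpha)=\dim\operatorname{Hom}_{\fg}(\theta\otimes\alpha,\alpha)=m(\alpha,\theta\otimes\alpha).
\]
Thus it suffices to prove that the multiplicity of $\alpha$ in $\theta\otimes\alpha$ (the adjoint tensored with $\alpha$) equals $b(\alpha)$. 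This reformulation is what makes the computation tractable, because the weights of the adjoint are completely explicit.

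Next I would apply the Racah--Speiser (Brauer--Klimyk) algorithm to $\theta\otimes\alpha$, expanding over the weights of the adjoint factor. Let $\lambda:=x_\alpha$ denote the highest weight of $\alpha$, let $m_\theta(\xi)$ denote the dimension of the $\xi$-weight space of $\theta$, and let $\rho$ be the half-sum of the positive roots (equivalently the sum of the fundamental weights). The key input is the weight data of the adjoint: its nonzero weights are exactly the roots, each occurring with multiplicity one, while its zero weight space is the Cartan subalgebra, of dimension $\ell=\rank(\fg)$. With $W$ the Weyl group and $\epsilon(w)=(-1)^{\ell(w)}$, the algorithm yields
\[
m(\alpha,\theta\otimes\alpha)=\sum_{w\in W}\epsilon(w)\,m_\theta\bigl(w(\lambda+\rho)-(\lambda+\rho)\bigr),
\]
where only those $w$ for which the argument is an actual weight of $\theta$ contribute.

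The heart of the argument, and the step I expect to be the main obstacle, is to show that only a handful of Weyl elements survive. The term $w=\mathrm{id}$ contributes $m_\theta(0)=\ell$. For any $w\neq\mathrm{id}$ the vector $\xi_w:=w(\lambda+\rho)-(\lambda+\rho)$ is a nonzero, nonpositive integral combination of simple roots (because $\lambda+\rho$ is strictly dominant), so $m_\theta(\xi_w)\neq0$ forces $\xi_w=-\gamma$ for some positive root $\gamma$. A short Euclidean computation using $|w(\lambda+\rho)|=|\lambda+\rho|$ then gives $(\lambda+\rho\mid\gamma^\vee)=1$, whence $s_\gamma(\lambda+\rho)=w(\lambda+\rho)$ and, by regularity of $\lambda+\rho$, $w=s_\gamma$; since $(\rho\mid\gamma^\vee)\geq1$ with equality iff $\gamma$ is simple while $(\lambda\mid\gamma^\vee)\geq0$, one concludes that $\gamma=\alpha_i$ is simple with $\lambda_i=(\lambda\mid\alpha_i^\vee)=0$. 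Each such index contributes $\epsilon(s_i)\,m_\theta(-\alpha_i)=-1$, so that
\[
m(\alpha,\theta\otimes\alpha)=\ell-\#\{\,i:\lambda_i=0\,\}=\#\{\,i:\lambda_i\neq0\,\}=b(\alpha),
\]
which is the assertion. The delicate points to carry out carefully are the precise normalization of the Racah--Speiser formula, the verification of the weight multiplicities of the adjoint, and the reflection lemma that isolates precisely the simple roots outside the support of $\lambda$.
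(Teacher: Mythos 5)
Your proof is correct, but there is essentially nothing in the paper to compare it against: the paper does not prove Proposition~\ref{prop:King} at all. It imports it verbatim as Prop.~4.2 of Ref.~\onlinecite{KW96} and uses it only as a black box which, combined with Lemma~\ref{lem:gap_1}, yields Proposition~\ref{bound_self_dual}. Your Racah--Speiser computation is a legitimate self-contained substitute. The reduction $m(\theta,\alpha\otimes\alpha)=m(\theta,\alpha\otimes\bar{\alpha})=\dim\operatorname{Hom}_{\fg}(\theta\otimes\alpha,\alpha)=m(\alpha,\theta\otimes\alpha)$ is sound (complete reducibility plus tensor--hom adjunction; incidentally, self-duality of $\theta$ is not actually needed there), the weight data of the adjoint (nonzero weights are the roots, each of multiplicity one; zero-weight space of dimension $\ell$) is correct for simple $\fg$, and your survival analysis is exactly right: for $w\neq\mathrm{id}$ the vector $w(\lambda+\rho)-(\lambda+\rho)$ is a nonzero nonpositive integral combination of simple roots, hence must equal $-\gamma$ for a positive root $\gamma$ to contribute; the norm identity gives $(\lambda+\rho\,|\,\gamma^\vee)=1$, regularity of $\lambda+\rho$ forces $w=s_\gamma$, and $(\rho\,|\,\gamma^\vee)\geq 1$ with equality precisely for simple $\gamma$ leaves only $\gamma=\alpha_i$ with $\lambda_i=0$, each contributing $-1$, so that $m(\alpha,\theta\otimes\alpha)=\ell-\#\{i:\lambda_i=0\}=b(\alpha)$. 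Two things your route buys relative to the cited source: first, it is classification-free (no case analysis over the simple types), which is very much in the spirit of this paper---compare the remark after Theorem~\ref{thm:tensor_square} that no classification-free proof of the Coquereaux--Zuber theorem (Ref.~\onlinecite{CZ11}) is known; second, it proves a stronger statement, namely $m(\theta,\alpha\otimes\bar{\alpha})=b(\alpha)$ for an \emph{arbitrary} simple representation $\alpha$, self-duality entering only to identify $\alpha\otimes\alpha$ with $\alpha\otimes\bar{\alpha}$. The two points you flag as delicate---the normalization of the Brauer--Klimyk formula and the adjoint weight multiplicities---are indeed the only external inputs, and both are standard.
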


In order to apply Proposition~\ref{prop:King}, we derive bounds which depend
on the multiplicity of the adjoint representation in the decomposition of  $\phi \otimes \bar{\phi}$.

\begin{lemma}\label{lem:gap_1}
Let $\phi$ denote a representation of the
compact semisimple Lie algebra $\fg$, and let $\fh$ denote a proper subalgebra of $\fg$. 
If  the multiplicity of the adjoint representation $\theta$ of $\fg$  in $\phi \otimes \bar{\phi}$ is $m$, then\\
(1) $\normone{(\phi \otimes \bar{\phi})|_{\fh}} \ge m +\normone{\phi \otimes \bar{\phi}}$,\\
(2)  $\normone{(\phi \otimes \phi)|_{\fh}} \ge m + \normone{\phi \otimes \phi}$,\\
(3) $\normtwo{(\phi \otimes \bar{\phi})|_{\fh}} \ge m^2 +\normtwo{\phi \otimes \bar{\phi}}$,\\
(4)  $\normtwo{(\phi \otimes \phi)|_{\fh}} \ge m^2 + \normtwo{\phi \otimes \phi}$,\\
(5) $\dim(\comm[(\phi \otimes \bar{\phi})|_{\fh}]) \ge m^2 + \dim(\comm[\phi \otimes \bar{\phi}])$,\\
(6) $\dim(\comm[(\phi \otimes \phi)|_{\fh}]) \ge m^2 +\dim(\comm[\phi \otimes \phi])$.
\end{lemma}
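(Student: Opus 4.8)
The plan is to isolate the $m$ copies of the adjoint representation inside $\phi\otimes\bar\phi$ and to track how restricting to the proper subalgebra $\fh$ inflates each quantity. Write $\phi\otimes\bar\phi\cong\theta^{\oplus m}\oplus\rho$, where $\theta:=\theta_{\fg}$ is the adjoint representation of $\fg$ and $\rho$ is a complement chosen to contain no further copy of $\theta$. Since restriction commutes with tensor products [Lemma~\ref{elem}], one has $(\phi\otimes\bar\phi)|_{\fh}\cong(\theta|_{\fh})^{\oplus m}\oplus\rho|_{\fh}$. The decisive structural input is Proposition~\ref{adjoint_omnibus}(c): because $\fh$ is proper, $\theta|_{\fh}$ is reducible and therefore acquires strictly more simple components than $\theta$, exactly as exploited in Proposition~\ref{adjoint_one_norm_cond}. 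This gives $\normone{\theta|_{\fh}}\ge\normone{\theta}+1$; and since every representation satisfies $\normtwo{\cdot}\ge\normone{\cdot}$ while the adjoint $\theta$ has all multiplicities equal to one (so $\normtwo{\theta}=\normone{\theta}$), it also gives $\normtwo{\theta|_{\fh}}\ge\normtwo{\theta}+1$.

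For the one-norm bound (1) I would use that the one-norm is exactly additive on direct sums [Proposition~\ref{sumrules}(i)], whence $\normone{(\theta|_{\fh})^{\oplus m}}=m\,\normone{\theta|_{\fh}}$, together with the monotonicity $\normone{\rho|_{\fh}}\ge\normone{\rho}$ of the one-norm under restriction:
\begin{align*}
\normone{(\phi\otimes\bar\phi)|_{\fh}}
&= m\,\normone{\theta|_{\fh}}+\normone{\rho|_{\fh}}\\
&\ge m\bigl(\normone{\theta}+1\bigr)+\normone{\rho}
= m+\normone{\phi\otimes\bar\phi}.
\end{align*}
Statement (2) then follows from (1) by trading $\bar\phi$ for $\phi$: restriction commutes with the dual ($\overline{\phi|_{\fh}}=\bar\phi|_{\fh}$), so Proposition~\ref{one_norm_conjugated}, applied once over $\fg$ and once over $\fh$, yields $\normone{\phi\otimes\phi}=\normone{\phi\otimes\bar\phi}$ and $\normone{(\phi\otimes\phi)|_{\fh}}=\normone{(\phi\otimes\bar\phi)|_{\fh}}$.

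The two-norm statements (3) and (5) coincide, since $\normtwo{\psi}=\dim\comm(\psi)$ [Lemma~\ref{reprtheory}], and I would prove them by the parallel computation, replacing exact additivity by the superadditivity of the two-norm [Proposition~\ref{sumrules}(ii)] and using $\normtwo{(\theta|_{\fh})^{\oplus m}}=m^2\,\normtwo{\theta|_{\fh}}$ together with the monotonicity $\normtwo{\rho|_{\fh}}\ge\normtwo{\rho}$ (which holds because $\comm(\rho)\subseteq\comm(\rho|_{\fh})$):
\begin{align*}
\normtwo{(\phi\otimes\bar\phi)|_{\fh}}
&\ge m^2\,\normtwo{\theta|_{\fh}}+\normtwo{\rho|_{\fh}}\\
&\ge m^2\bigl(\normtwo{\theta}+1\bigr)+\normtwo{\rho}
= m^2+\normtwo{\phi\otimes\bar\phi}.
\end{align*}
Statements (4) and (6) follow from (3)/(5) exactly as (2) followed from (1), now invoking Proposition~\ref{two_norm_conjugated}. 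The step I expect to be the main obstacle is the very last equality $m^2\,\normtwo{\theta}+\normtwo{\rho}=\normtwo{\phi\otimes\bar\phi}$: because the two-norm is only \emph{super}additive, passing the direct sum $\theta^{\oplus m}\oplus\rho$ through $\normtwo{\cdot}$ gives only a lower bound, not the required identity, unless $\rho$ genuinely shares no simple component with $\theta^{\oplus m}$. This cleanness is automatic precisely when the adjoint $\theta$ is irreducible, i.e.\ when $\fg$ is simple, which is exactly the setting in which the lemma is used to establish Proposition~\ref{bound_self_dual}; it is therefore essential to choose the decomposition $\phi\otimes\bar\phi\cong\theta^{\oplus m}\oplus\rho$ with $\theta\not\subseteq\rho$ from the outset.
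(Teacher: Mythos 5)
Your treatment of (1) and (2) coincides with the paper's proof: pull out $\theta^{\oplus m}$, apply Proposition~\ref{adjoint_one_norm_cond} to get $\normone{\theta|_{\fh}}\ge\normone{\theta}+1$, use additivity of the one-norm (Proposition~\ref{sumrules}(i)) plus its monotonicity under restriction, and pass from $\bar\phi$ to $\phi$ via Proposition~\ref{one_norm_conjugated}. The genuine gap is in (3)--(6), and it is exactly the obstacle you flagged yourself: your chain requires the equality $m^2\normtwo{\theta}+\normtwo{\rho}=\normtwo{\phi\otimes\bar\phi}$, which fails whenever $\rho$ shares a simple constituent with $\theta$. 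For semisimple but non-simple $\fg$ this can happen even with $m$ maximal, because maximality only forbids a copy of the \emph{whole} adjoint inside $\rho$, not copies of its individual simple summands $\theta_{\fg_i}$. (Concretely, for $\fg=\su(2)\oplus\su(3)$ and $\phi$ containing the defining representation of $\su(2)$ twice and that of $\su(3)$ once, the two summands of $\theta$ occur in $\phi\otimes\bar\phi$ with multiplicities $4$ and $1$, so $m=1$, $\rho$ contains three copies of $\theta_{\su(2)}\boxtimes 1$, and your lower bound falls strictly short of $m^2+\normtwo{\phi\otimes\bar\phi}$.) Your resolution---assume $\theta$ irreducible, i.e.\ $\fg$ simple---proves less than the statement, which is asserted for every compact \emph{semisimple} $\fg$; this is a weakening of the lemma, not a choice of decomposition available ``from the outset.''

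The missing idea, and the paper's actual route for (3), is to give up on treating $\theta^{\oplus m}$ as a block and instead organize the count around a single splitting constituent. Write $\theta\cong\oplus_i\nu_i^{\oplus m_i}$ and $\psi\cong\oplus_i\nu_i^{\oplus n_i}$ with the $\nu_i$ distinct simples, so that $\phi\otimes\bar\phi\cong\oplus_i\nu_i^{\oplus(mm_i+n_i)}$. From $\normone{\theta|_{\fh}}\ge\normone{\theta}+1$ and Proposition~\ref{one_two}(a)--(b) there is some $\nu_p$ with $m_p\ge1$ that splits on restriction, $\normtwo{(\nu_p)|_{\fh}}\ge\normtwo{\nu_p}+1$. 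Its \emph{total} multiplicity in $\phi\otimes\bar\phi$ is $mm_p+n_p\ge m$, so the two-norm gains at least $(mm_p+n_p)^2\ge m^2$. In other words, the shared multiplicities $n_p$ that break your equality only \emph{strengthen} the bound once the estimate is anchored to the splitting simple component rather than to $\theta^{\oplus m}$. With that replacement the argument goes through for all compact semisimple $\fg$; the remaining reductions---(2) and (4) from (1) and (3) via Propositions~\ref{one_norm_conjugated} and \ref{two_norm_conjugated}, and (5), (6) from (3), (4) via Lemma~\ref{reprtheory}---are as you and the paper both have them.
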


Note that Proposition~\ref{bound_self_dual} follows now by combining Proposition~\ref{prop:King} with 
Lemma~\ref{lem:gap_1}.

\begin{proof}[Proof of Lemma~\ref{lem:gap_1}]
We decompose $\phi \otimes \bar{\phi}$ into  $\theta^{\oplus m} \oplus \psi$ with maximal $m$.
Proposition~\ref{adjoint_one_norm_cond} implies that 
$\normone{\theta|_{\fh}} \ge \normone{\theta}+1$. Applying this inequality and  Proposition~\ref{sumrules}(i),
we obtain $\normone{(\phi \otimes \bar{\phi})|_{\fh}}= 
\normone{\theta^{\oplus m}|_{\fh}} + \normone{\psi|_{\fh}} =
m \normone{\theta|_{\fh}} + \normone{\psi|_{\fh}} \ge 
m (\normone{\theta}+1) + \normone{\psi} = m + \normone{\theta^{\oplus m}} + \normone{\psi}=
m +\normone{(\phi \otimes \bar{\phi})} $, 
which proves (1). Statement (2) follows from (1)
via Proposition~\ref{one_norm_conjugated}. 
Consider the decompositions $\theta\cong\oplus_{i} \nu_i^{\oplus m_i}$ and 
$\psi\cong \oplus_i \nu_i^{\oplus n_i}$
into simple representations $\nu_i$. It follows that
$\phi \otimes \bar{\phi}= \oplus_i \nu_i^{\oplus(m m_i + n_i)}$. 
Since $\normone{\theta|_{\fh}} \ge \normone{\theta} +1$, we get from 
Proposition \ref{one_two}(a)-(b) that 
$\normtwo{(\nu_p)|_{\fh}} \ge \normtwo{\nu_p} +1$ holds
for some $\nu_p$ with $m_p \ge 1$.  
Thus, one can write 
$\normtwo{\phi|_{\fh}}= \normtwo{\oplus_i (\nu_i)|_{\fh}^{\oplus(m m_i {+} n_i)}}
= \sum_i (m m_i {+} n_i)^2\normtwo{(\nu_i)|_{\fh}}= (m m_p {+} n_p)^2 
\normtwo{(\nu_p)|_{\fh}} + \sum_{i \ne p} (m m_i {+} n_i )^2\normtwo{(\nu_i)|_{\fh}}
\ge (m m_p {+} n_p)^2 
(\normtwo{\nu_p} +1) + \sum_{i \ne p} (m m_i {+} n_i )^2\normtwo{\nu_i} = 
(m m_p {+} n_p)^2 + \normtwo{\phi \otimes \bar{\phi}} \ge m^2 + \normtwo{\phi \otimes \bar{\phi}}$. 
This completes the proof of (3).
Statement~(4) follows via Proposition~\ref{two_norm_conjugated}. Obviously, (5) and (6) are a 
consequence of (3) and (4), respectively.
\end{proof}

\end{document}